\def\comments{1}
\newcommand{\mynote}[2]{{\marginpar{\color{#1} \tiny #2}}}
\newcommand{\mynote}[2]{}
\let\originalleft\left
\let\originalright\right
\renewcommand{\left}{\mathopen{}\mathclose\bgroup\originalleft}
\renewcommand{\right}{\aftergroup\egroup\originalright}
\newcommand{\uni}{\mathcal{X}} 
\newcommand{\usize}{T} 
\newcommand{\elem}{x} 
\newcommand{\queries}{Q} 
\newcommand{\qmat}{W} 
\newcommand{\qent}{w} 
\newcommand{\concepts}{\mathcal{C}} 
\newcommand{\concept}{c}
\newcommand{\cmat}{D} 
\newcommand{\mech}{\mathcal{M}} 
\newcommand{\post}{\mathcal{A}} 
\newcommand{\err}{\mathrm{err}} 
\newcommand{\errz}{\mathrm{err}^{\ell_2}}
\newcommand{\errinf}{\mathrm{err}^{\ell_\infty}} 
\renewcommand{\sc}{\mathrm{sc}}
\newcommand{\scz}{\mathrm{sc}^{\ell_2}}  
\newcommand{\sczinf}{\mathrm{sc}^{\ell_\infty}}  
\newcommand{\fnorm}{\ensuremath{\gamma_F}} 
\newcommand{\fnorminf}{\ensuremath{\gamma_2}} 
\newcommand{\query}{q}
\newcommand{\qsize}{k}
\newcommand{\ds}{X}
\newcommand{\dsrow}{x} 
\newcommand{\dist}{\mu}
\newcommand{\hist}{h}
\newcommand{\dsize}{n}
\newcommand{\priv}{\eps}
\newcommand{\privd}{\delta}
\newcommand{\trans}{\mathcal{T}_\mech} 
\newcommand{\outspace}{\Omega} 
\newcommand{\loutspace}{\bar{\Omega}} 
\newcommand{\loss}{\Lambda}
\newcommand{\inner}{d} 
\newcommand{\gauss}{\sigma_{\eps, \delta}} 
\newcommand{\dista}{\lambda} 
\newcommand{\distb}{\mu} 
\newcommand{\from}{:}
\newcommand{\set}[1]{\left\{#1\right\}}
\newcommand{\R}{\mathbb{R}}
\newcommand{\N}{\mathbb{N}}
\newcommand{\Z}{\mathbb{Z}}
\newcommand{\E}{\mathbb{E}}
\newcommand{\I}{\mathbb{I}}
\newcommand{\tr}{\mathrm{Tr}}
\newcommand{\eps}{\varepsilon}
\newcommand{\zo}{\{0,1\}}
\renewcommand{\div}{\mathrm{D}_{\text{KL}}}
\newcommand{\chd}{\mathrm{D}_{\chi^2}}
\newcommand{\mypar}[1]{\medskip\noindent\textbf{#1.}}
\newcommand{\ex}[2]{{\ifx&#1& \mathbb{E} \else \underset{#1}{\mathbb{E}} \fi \left[#2\right]}}
\newcommand{\pr}[2]{{\ifx&#1& \mathbb{P} \else \underset{#1}{\mathbb{P}} \fi \left[#2\right]}}
\newtheorem{theorem}{Theorem}
\newtheorem{lemma}[theorem]{Lemma}
\newtheorem{corollary}[theorem]{Corollary}
\newtheorem{remark}[theorem]{Remark}
\newtheorem{definition}[theorem]{Definition}
\title{The Power of Factorization Mechanisms \\ in Local and Central Differential Privacy}
\author[1]{Alexander Edmonds}
\author[1]{Aleksandar Nikolov}
\author[2]{Jonathan Ullman}
\affil[1]{Department of Computer Science, University of Toronto}
\affil[2]{Khoury College of Computer Sciences, Northeastern University}
\date{\today}
\begin{document}

\maketitle

\begin{abstract}
	We give new characterizations of the sample complexity of answering linear queries (statistical queries) in the local and central models of differential privacy: 
	\begin{itemize}
		\item In the non-interactive local model, we give the first approximate characterization of the sample complexity. Informally our bounds are tight to within polylogarithmic factors in the number of queries and desired accuracy.  Our characterization extends to agnostic learning in the local model.  
		\item In the central model, we give a characterization of the sample complexity in the high-accuracy regime that is analogous to that of Nikolov, Talwar, and Zhang (STOC 2013), but is both quantitatively tighter and has a dramatically simpler proof.  
\end{itemize}
 	Our lower bounds apply equally to the empirical and population estimation problems.
	In both cases, our characterizations show that a particular
        factorization mechanism is approximately optimal, and the optimal sample complexity is bounded from above and below by well studied factorization norms of a matrix associated with the queries. 
\end{abstract}

\section{Introduction}

Differential privacy \cite{DworkMNS06} is a rigorous mathematical
framework for protecting individual privacy that is well suited to
statistical data analysis.  In addition to a rich academic literature,
differential privacy is now being deployed on a large scale by Apple
\cite{AppleDP17}, Google \cite{ErlingssonPK14, Bittau+17, Wilson+19},
Uber~\cite{JohnsonNS18}, and the US Census Bureau \cite{CensusDP17}.

To compute statistics of the data with differential privacy---or any notion of privacy---we have to inject noise into the computation of these statistics~\cite{DinurN03}.  The amount of noise is highly dependent on the particular statistic, and thus a central problem in differential privacy is to determine how much error is necessary to compute a given statistic.

In this work we consider the class of \emph{linear queries} (also called \emph{statistical queries}~\cite{Kearns93}).  The simplest example of a linear query is ``What fraction of individuals in the data have property $P$?''  Workloads of linear queries capture a variety of statistical tasks: computing histograms and PDFs, answering range queries and computing CDFs, estimating the mean, computing correlations and higher-order marginals, and estimating the risk of a classifier.  

The power of differentially private algorithms for answering a \emph{worst-case} workload of linear queries is well understood~\cite{BunUV14}, and known bounds are essentially tight as a function of the dataset size, the data domain, and the size of the workload.  However, many workloads, such as those corresponding to computing PDFs or CDFs, have additional structure that makes it possible to answer them with less error than these worst-case workloads.  Thus, a central question is
\begin{quotation}
	\noindent\emph{Can we characterize the amount of error required to estimate a given workload of linear queries subject to differential privacy in terms of natural properties of the workload, and can we achieve this error via computationally efficient algorithms?}
\end{quotation}

In the central model, there has been dramatic progress on this question~\cite{HardtT10, BhaskaraDKT12, NikolovTZ13, Nikolov15, BlasiokBNS19}, giving approximate characterizations for every workload of linear queries.  We extend this line of work in two ways:
\begin{enumerate}
	\item We give the first approximate characterization for the \emph{non-interactive local model of differential privacy}~\cite{DworkMNS06, KasiviswanathanLNRS08}.  This result is also much sharper than analogous results for the central model of differential privacy.  
	
	\item We give a new approximate characterization for the
          \emph{central model of differential privacy} in the
          high-accuracy regime (equivalently, in the large-dataset regime).
 	This characterization is analogous to a result of~\cite{NikolovTZ13}, but it is quantitatively tighter and its proof is dramatically simpler.  For $\ell_2^2$ error, our characterization is tight up to a constant factor.
\end{enumerate}

In particular, our results show that a natural and well studied type
of \emph{factorization mechanism} is approximately optimal in these
settings.  Factorization mechanisms capture a number of special-purpose
mechanisms from the theory
literature~\cite{BarakCDKMT07,DworkNPR10,ChanSS11,ThalerUV12,ChandrasekaranTUW14},
were involved in previous characterizations, and also roughly capture
the \emph{matrix mechanisms}~\cite{LiHRMM10,McKennaMHM18} from the
databases literature, which have been developed into practical
algorithms for US Census Data.\footnote{In a nutshell,
  the matrix mechanism is a particular factorization mechanism
  designed for the special case of $\ell_2^2$ error, and combined with various
  optimizations and post-processing techniques to improve
  computational efficiency and utility. Usually the matrix mechanism
  is presented in the special case of pure differential privacy.}

Our characterization in the local model extends to agnostic PAC
learning, and shows that the optimal learner for any family of queries
is to use the optimal factorization mechanism to estimate the error of
every concept.  Our characterization is sharper than the previous
characterization of~\cite{KasiviswanathanLNRS08}, which loses
polynomial factors in the SQ dimension~\cite{BlumFJKMR94}.

\subsection{Background: Linear Queries and Factorization Mechanisms}
We start by briefly introducing the relevant concepts and definitions
necessary to state our results. See Section~\ref{sec:prelims} for a
more thorough treatment of the necessary background. 

\mypar{Linear Queries} Suppose we are given a \emph{dataset} $\ds = (\dsrow_1,\dots,\dsrow_n) \in \uni^n$, where each entry $\dsrow_i$ is the data of one individual and $\uni$ is some \emph{data universe}.  We will treat the size of the dataset $n$ as public information.  A linear query is specified by a bounded function $\query \from \uni \to \R$ and (abusing notation) its answer is $\query(\ds) = \frac{1}{n} \sum_{i=1}^{n} \query(\dsrow_i)$.  A \emph{workload} is a set of linear queries $\queries = \set{\query_1,\dots,\query_\qsize}$, and we use $\queries(\ds) = (\query_1(\ds),\dots,\query_\qsize(\ds))$ to denote the answers.

Given a workload of queries, we can associate a \emph{workload matrix} $\qmat \in \R^{\queries \times \uni}$, defined by $\qmat_{\query,\elem}= \query(\elem)$.  The convention of calling the above queries ``linear'' stems from the fact that they can be written as the product of the workload matrix with the histogram vector of the dataset.  As such, we will sometimes use $\queries$ and $\qmat$ interchangeably.

\mypar{Error and Sample Complexity} 
Our goal is to design an $(\eps,\delta)$-differentially private mechanism $\mech$ that takes a dataset $\ds$ and accurately estimates $\queries(\ds)$ for an appropriate measure of accuracy.  In this work we primarily consider accuracy in the $\ell_\infty$ norm, and define
\begin{align*}
&\err^{\ell_\infty}(\mech,\queries, n) = \max_{\ds \in \uni^n} \ex{\mech}{\| \mech(\ds) - \queries(\ds) \|_\infty}, \\
&\err^{\ell_\infty}_{\eps, \delta}(\queries, n) = \min_{\textrm{$(\eps,\delta)$-DP $\mech$}} \err^{\ell_\infty}(\mech,\queries,n).
\end{align*}
Privacy becomes easier to achieve as the dataset size $n$ grows.  We are interested in the \emph{sample complexity}, which is the smallest size of dataset on which it is possible to achieve a specified error $\alpha$ for given privacy parameters $\priv$ and $\privd$:
\begin{align*}
\sc^{\ell_\infty}_{\eps,\delta}(\queries,\alpha) = \min\left\{ n : \err^{\ell_\infty}_{\eps,\delta}(\queries,n) \leq \alpha \right\}.
\end{align*}

\mypar{The Approximate Factorization Mechanisms}  One of the most
basic tools in the central-model of differential privacy is the \emph{Gaussian mechanism} (see e.g.~\cite{DworkR14}).  This mechanism computes the vector of answers to the queries $\queries(\ds)$ and perturbs it with spherical Gaussian noise scaled to the \emph{$\ell_2$-sensitivity} of the workload.  In particular, the sample complexity of this mechanism is
$$
O\left( \frac{\| \qmat \|_{1 \to 2} \sqrt{\log(1/\delta) \log \qsize}}{ \priv \alpha}\right).
$$
where $\| \qmat \|_{1 \to 2}$ denotes the largest $\ell_2$ norm of any column of $\qmat$, which is the $\ell_2$-sensitivity.

One can try to improve this mechanism by replacing $\qmat$ with a simpler workload of queries $A$, and then attempting to reconstruct the answer to $\qmat$ by applying a linear transform $R$ such that $\qmat = RA$.  One can show that the overall mechanism has error $\| R \|_{2 \to \infty} \| A \|_{1 \to 2}$, where $\|R \|_{2 \to \infty}$ denotes the maximum $\ell_2$ norm of any row of $R$.  This quantity can be dramatically smaller than $\| \qmat \|_{1 \to \infty}$, for example if $\qmat$ contains many copies of the same query.

The \emph{factorization mechanism} chooses the optimal factorization $\qmat = RA$, giving error proportional to the \emph{factorization norm}
$$
\fnorminf(\qmat) = \min \{ \| R \|_{2 \to \infty} \| A \|_{1 \to 2} : \qmat = RA \}.
$$
The sample complexity of this mechanism is thus 
$$
\sczinf(\mech_{\fnorminf},\queries,\alpha) = O\left(\frac{\fnorminf(\qmat) \sqrt{\log(1/\delta) \log|\queries|}}{\priv \alpha}\right).
$$
We note that that the factorization norm $\fnorminf(\qmat)$ and an optimal factorization $\qmat = RA$ can be computed in time polynomial in the size of $\qmat$ via semidefinite programming~\cite{LinialS09}.

Finally, we can try to further improve the mechanism using an \emph{approximate factorization mechanism} that approximates the workload $\qmat$ with a simpler workload $\widetilde\qmat$ that is entrywise close to $\qmat$, and applying the factorization mechanism to $\widetilde\qmat$.  The error of this mechanism is proportional to the \emph{approximate factorization norm}
$$
\fnorminf(\qmat,\alpha) = \min \{ \fnorminf(\widetilde\qmat) : \| \qmat - \widetilde\qmat \|_{1 \to \infty} \leq \alpha/2 \},
$$
where $\| \qmat - \widetilde\qmat \|_{1 \to \infty}$ is the maximum
absolute difference between entries of $\qmat$ and $\widetilde{\qmat}$.
The sample complexity of this mechanism is thus
$$
\sczinf(\mech_{\fnorminf, \alpha},\queries,\alpha) = O\left(\frac{\fnorminf(\qmat, \alpha/2) \sqrt{\log(1/\delta) \log|\queries|}}{\priv \alpha}\right).
$$

\mypar{The Local Model}  Although we have discussed the factorization mechanism in the context of central differential privacy, these ideas can all be adapted to \emph{(non-interactive) local differential privacy}.  In this model, each user will apply a separate $(\priv,\privd)$-differentially private mechanism $\mech_1,\dots,\mech_n$ to their own data, and the output can then be postprocessed using an arbitrary algorithm $\mathcal{A}$, so the mechanism can be expressed as
$$
\mech(\ds) = \post(\mech_1(\ds_1),\dots,\mech_n(\ds_n))
$$
We define $\err^{\ell_\infty,\textrm{loc}}_{\eps,\delta}$, and $\sc^{\ell_\infty,\textrm{loc}}_{\eps,\delta}$ analogously to the central model, but with the minimum taken over mechanisms that are $(\eps, \delta)$-DP in the local model.

Since the queries are linear, we can simply have each user apply the approximate factorization mechanism to their own data and average the results.  One can show that randomizing each individual's data independently increases the variance of the noise by a factor of $\sqrt{n}$ compared to the central model version of the mechanism.  One can also achieve $(\priv,0)$-differential privacy by replacing Gaussian noise with a different subgaussian noise distribution.  Putting it together, the resulting sample complexity becomes
\begin{equation} \label{eq:lafm-ub}
\sczinf(\mech_{\fnorminf, \alpha}^{\textrm{loc}},\queries,\alpha) = O\left(\frac{\fnorminf(\qmat, \alpha/2)^2 \log |\queries|}{ \priv^2 \alpha^2}\right).
\end{equation}

\subsection{Our Results}

\subsubsection{Linear Queries in the Local Model}

Our main result in the local model shows that the approximate factorization mechanism described above is approximately optimal among all non-interactive locally differentially private mechanisms.
\begin{theorem} [Informal] \label{thm:main-local}
Let $\alpha, \priv, \privd > 0$ be smaller than some absolute constants and let $\queries$ be a workload of linear queries with workload matrix $\qmat$.  Then, for some $\alpha' = \Omega( \alpha/ \log(1/\alpha) )$,
$$
\sc_{\priv,0}^{\ell_\infty, \mathrm{loc}}(\queries, \alpha') = \Omega\left( \frac{\fnorminf(\qmat, \alpha/2)^2 }{ \priv^2 \alpha^2 }\right).
$$
\end{theorem}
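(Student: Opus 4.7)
My plan is a \emph{reverse factorization} argument: from any non-interactive $\priv$-LDP mechanism achieving expected $\ell_\infty$ error at most $\acc'$ with $\dsize$ users, I extract an explicit approximate factorization of $\qmat$ with $\fnorminf$-norm at most $O(\acc \priv \sqrt{\dsize})$. Since $\fnorminf(\qmat, \acc/2)$ is by definition the minimum such product over $\ell_\infty$-approximations within $\acc/4$, this yields $\fnorminf(\qmat, \acc/2) \lesssim \acc \priv \sqrt{\dsize}$, which rearranges into the desired $\dsize = \Omega(\fnorminf(\qmat, \acc/2)^2 / (\priv^2 \acc^2))$.

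\paragraph{Extracting the factorization.}
Writing $\mech = \post \circ (\mech_1, \ldots, \mech_\dsize)$, the first step is to reduce (via a Hoeffding-type projection applied coordinate-wise, together with truncation of each output to the bounded interval containing $\query(\ds)$) to the case where each estimator is a single-user-linear functional $\widehat\query = \frac{1}{\dsize}\sum_i \phi_{\query,i}(\mech_i(\dsrow_i))$. Fix reference distributions $\mu_i(y) = \Pr[\mech_i(\bar\elem) = y]$ for an arbitrary $\bar\elem \in \uni$, and after centering each $\phi_{\query,i}$ so that $\E_{\mu_i}[\phi_{\query,i}] = 0$, define
\[
A_{(i,y), \elem} = \frac{\Pr[\mech_i(\elem)=y] - \mu_i(y)}{\dsize \sqrt{\mu_i(y)}},
\qquad
R_{\query, (i,y)} = \phi_{\query,i}(y)\sqrt{\mu_i(y)}.
\]
A direct computation gives $(RA)_{\query, \elem} = \E_\mech[\widehat\query(\ds)]$ on the constant dataset $\dsrow_j \equiv \elem$, so the accuracy hypothesis yields $\|RA - \qmat\|_{1\to\infty} \lesssim \acc'$.

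\paragraph{Bounding the norms.}
Pure $\priv$-LDP yields $|\Pr[\mech_i(\elem) = y]/\mu_i(y) - 1| \leq e^\priv - 1 \lesssim \priv$, so
\[
\|A_{\cdot, \elem}\|_2^2 \lesssim \frac{\priv^2}{\dsize^2} \sum_{i, y}\mu_i(y) = \frac{\priv^2}{\dsize},
\]
and therefore $\|A\|_{1\to 2} \lesssim \priv/\sqrt{\dsize}$. For $R$, observe $\|R_\query\|_2^2 = \sum_i \|\phi_{\query,i}\|^2_{L^2(\mu_i)} = \dsize^2 \cdot \var[\widehat\query \mid \dsrow \equiv \bar\elem]$: a truncation of $\widehat\query$ plus a Markov-type tail bound and union bound over $|\queries|$ convert the expected-$\ell_\infty$-error hypothesis into a per-query variance bound of $O(\acc^2)$, at the price of a $\log(1/\acc)$ factor. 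Thus $\|R\|_{2\to\infty} \lesssim \dsize \acc$ and $\|R\|_{2\to\infty}\|A\|_{1\to 2} \lesssim \acc\priv\sqrt{\dsize}$, closing the argument.

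\paragraph{Main obstacle.}
The main technical difficulty is the passage from ``expected $\ell_\infty$ error at most $\acc'$'' (which is a bound on a single aggregated scalar) to a per-query variance bound suitable for the $R$-norm calculation, without inflating the approximation error beyond $\acc$. Since $\ell_\infty$ is not coordinate-wise monotone, neither the Hoeffding projection nor a naive truncation is automatically good enough; both must be combined with a careful union bound, which is the source of the $\log(1/\acc)$ factor appearing in the statement. A secondary concern is that the higher-order terms in the Hoeffding expansion of $\E_\mech[\widehat\query(\ds)]$ involve products of the local-DP deviations $\Delta_i(\elem, y) = \Pr[\mech_i(\elem) = y]/\mu_i(y) - 1$, which are bounded by $O(\priv^{|S|})$; the hypothesis that $\priv$ lies below an absolute constant is what lets us absorb these terms into the $\acc/2$-slack of the approximate factorization norm.
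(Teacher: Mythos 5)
Your proposal is a \emph{reverse factorization} argument, which is genuinely different from the paper's route. The paper proves the lower bound via an information-theoretic argument: it bounds the KL-divergence between transcripts of an LDP protocol on two mixtures of product distributions by an $\ell_\infty \to L_2(\pi)$ operator norm (Lemma~\ref{lm:kl-div}), and then uses a dual characterization of $\fnorminf(\qmat,\alpha)$ (Lemma~\ref{lm:duality}) together with Grothendieck-type facts (Lemmas~\ref{lm:gammastar-inftyto2}, \ref{lm:gammastar-inftyto2-nonuniform}) to construct hard distribution pairs whose difference matrix has small $\ell_\infty \to L_2$ norm. In effect the paper exhibits a \emph{dual} certificate for the factorization-norm lower bound; your approach tries to exhibit a \emph{primal} certificate, namely an explicit approximate factorization extracted from the mechanism, in the spirit of the data-independent argument in the central-model Section~5 (Theorem~\ref{thm:indep-lb}). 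This would, if correct, be a very appealing constructive alternative. However, I believe the proposal has two genuine gaps, both in the linearization step that you flag as the ``main obstacle.''

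\textbf{Gap 1: the reduction to single-user-linear estimators fails in the relevant regime.} Your $A$, $R$ are defined so that $(RA)_{\query,\elem}$ equals the expectation of a \emph{linear} estimator $\frac1n \sum_i \phi_{\query,i}(\mech_i(\elem))$ on the constant dataset at $\elem$. But the actual mechanism has an arbitrary post-processing $\post$, and the entry you need to control is $g_\query(\elem) := \E_\mech[\post_\query(\mech_1(\elem),\dots,\mech_\dsize(\elem))]$. The discrepancy between $g_\query(\elem)$ and its Hoeffding/Hajek linearization is $\sum_{|S|\ge 2}\E_{\prod_{i\in S}\nu_i^{(\elem)}}[f_S]$, where $\nu_i^{(\elem)}=\mech_i(\elem)$ and $f_S$ are the higher-order Hoeffding components of $\post_\query$ w.r.t.\ the reference $\prod_i\mu_i$. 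Each such term is indeed $O((e^\priv-1)^{|S|})$ times an appropriate norm of $f_S$, as you note, but there are $\binom{\dsize}{k}$ subsets of size $k$, so the total contribution is on the order of
\[
\sum_{k\ge 2}\binom{\dsize}{k}(e^\priv-1)^k\cdot(\text{norm of }f_S)
\;\approx\; e^{\Theta(\dsize\priv^2)}\quad\text{or worse,}
\]
after a Cauchy--Schwarz over the $2^\dsize$ Hoeffding components. This is negligible only when $\dsize\priv^2 = O(1)$, i.e.\ precisely the regime in which the lower bound you are trying to prove ($\dsize\gtrsim \fnorminf(\qmat,\alpha/2)^2/(\priv^2\alpha^2) \ge 1/\alpha^2 \gg 1$) is false. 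Making $\priv$ an absolute constant does not save you; it is the growth of $\dsize\priv^2$ that breaks the expansion, not the per-term factor of $\priv^{|S|}$.

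\textbf{Gap 2: the variance bound on $\|R\|_{2\to\infty}$ is circular.} You need $\|R_\query\|_2^2 = \dsize^2 \var[\widehat\query\mid\ds\equiv\bar\elem] \lesssim \dsize^2\alpha^2$. The hypothesis gives only $\E[\|\mech(\ds)-\queries(\ds)\|_\infty]\le\alpha'$, hence $\E[|\widehat\query - \query(\bar\elem)|]\le\alpha'$, which by itself yields only $\var[\widehat\query]\lesssim\alpha'$ (not $\alpha'^2$) even after truncating outputs to $[-O(1),O(1)]$. Your fix is to truncate more aggressively (to a window of width $O(\alpha\log(1/\alpha))$) and/or use median amplification, but both operations destroy the single-user-linear structure $\frac1n\sum_i\phi_{\query,i}(\cdot)$ on which the algebraic identity $(RA)_{\query,\elem}=\E[\widehat\query(\ds_\elem)]$ depends. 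There is no order of operations that fixes this: truncating before linearizing loses the factorization, and truncating after linearizing changes the estimator whose expectation the factorization computes. An anti-concentration argument (to get $\var\lesssim\alpha^2$ directly from $\E[|\widehat\query - \query(\bar\elem)|]\lesssim\alpha'$) does not go through either, because the summands $\phi_{\query,i}(\mech_i(\bar\elem))$ are not a priori bounded or light-tailed.

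These two gaps sit squarely in the technical heart of the reduction, and I do not see how to close them along the proposed route. The paper's KL-divergence approach sidesteps both: it never needs to linearize the post-processing (the divergence bound works for arbitrary post-processing), and it extracts the factorization-norm lower bound from a dual witness $U$ via Lemma~\ref{lm:duality} rather than from a primal factorization of the mechanism.
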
 
To interpret the theorem, it helps to start by imagining that $\fnorminf(\qmat,\alpha'/2) = \fnorminf(\qmat,\alpha/2)$, in which case the theorem would show that the sample complexity of answering queries up to error $\alpha'$ is
$$
\Omega\left( \frac{\fnorminf(\qmat, \alpha'/2)^2 }{ \priv^2 \alpha^2} \right),
$$
which differs from the sample complexity of the local approximate
factorization mechanism, given in~\eqref{eq:lafm-ub}, by a factor of
just $O(\log(1/\alpha')^2\log |\queries|)$.  The fact that we take
$\alpha' < \alpha$ means that $\fnorminf(\qmat, \alpha/2)$ can be much
smaller than $\fnorminf(\qmat,\alpha'/2)$.\footnote{For example, if
  every entry of $\qmat$ is at most $\alpha$ in absolute value, then
  $\fnorminf(\qmat, \alpha) = 0$ whereas $\fnorminf(\qmat, \alpha')$
  can be arbitrarily large for $\alpha' < \alpha$, but this behavior
  typically does not happen for ``non-trivial'' values of $\alpha$.}
Nevertheless, for many natural families of queries and choices of
$\alpha$, $\fnorminf(\qmat, \alpha/2)$ will be relatively stable to
small changes in $\alpha$, in which case our lower bound will be tight
up to this $O(\log(1/\alpha)^2\log |\queries|)$ factor. In contrast, existing characterizations for the central model~\cite{HardtT10, BhaskaraDKT12, NikolovTZ13, Nikolov15, BlasiokBNS19} lose a $\mathrm{poly}(1/\alpha)$ factor,  or else they lose a $\mathrm{polylog} |\uni|$ factor that is typically large.

\begin{remark} Our proof of Theorem~\ref{thm:main-local}, in fact, shows that the lower bound holds in the distributional setting where $\ds$ is sampled i.i.d.\ from an unknown distribution $\dist$, and the goal is to estimate the quantity $\query(\dist) = \ex{\elem \sim \dist}{\query(\elem)}$ for every query $\query \in \queries$ up to error at most $\alpha$.
\end{remark}

\begin{remark} Theorem~\ref{thm:main-local} crucially assumes that the
  error is bounded in the $\ell_\infty$ metric.  If we consider the
  less stringent $\ell_2^2$ error metric (appropriately scaled to
  reflect the error per query), then one can achieve sample complexity
  $O(\log |\uni| / \priv^2 \alpha^4)$ for any workload of
  queries~\cite{BlasiokBNS19}, which can be exponentially smaller than
  the lower bound we prove for $\ell_\infty$ error.  In many
  applications, such as releasing the PDF, CDF, or marginals of the
  data, the $\ell_\infty$ error metric is standard in the literature
  on these problems, and is more practical, since, for natural
  datasets, the weaker $\ell_2^2$ guarantee can be achieved by
  mechanisms that ignore the data.
\end{remark}

\noindent Using Theorem~\ref{thm:main-local}, we obtain new lower bounds for three well studied families of queries: 
\begin{enumerate}
	\item \emph{Threshold queries,} which are also known as range queries, and equivalent to computing the CDF of the data.
	\item \emph{Parity queries,} which capture the covariance and higher-order moments of the data.
	\item \emph{Marginal queries,} also known as conjunctions, which capture the marginal distribution on subsets of the attributes.
\end{enumerate}

\begin{corollary} [Thresholds / CDFs] \label{cor:threshold}
Let $\queries^{\textrm{cdf}}_{T}$ be the family of statistical queries over the domain $\uni = [T]$ that, for every $1 \leq t \leq T$, contains the statistical query $q_{t}(\elem) = \mathbb{I}\{ \elem \leq t \}$.  Then for every $T \in \N$ and $\priv,\alpha$ smaller than an absolute constant,
$$\sc_{ \priv, 0}^{\ell_\infty, \textrm{loc}}(\queries^{\textrm{cdf}}_{T}, \alpha) = \Omega\left( \log^2 T \right).
$$
\end{corollary}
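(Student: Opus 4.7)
The plan is to reduce Corollary~\ref{cor:threshold} directly to a lower bound on the approximate factorization norm of the prefix matrix $\qmat^{\text{cdf}}_T$, i.e.\ the $T \times T$ lower-triangular all-ones matrix representing the CDF workload, and then to invoke the classical $\Omega(\log T)$ bound on this norm together with its robustness to entrywise perturbations.

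First, I would fix a sufficiently small absolute constant $\alpha_0 > 0$, chosen so that $\alpha' = \Omega(\alpha_0 / \log(1/\alpha_0))$ from Theorem~\ref{thm:main-local} is itself a small absolute constant, and apply Theorem~\ref{thm:main-local} to $\queries^{\text{cdf}}_T$ with error parameter $\alpha_0$. This yields
\begin{equation*}
\sc_{\priv, 0}^{\ell_\infty, \text{loc}}(\queries^{\text{cdf}}_T, \alpha') \;=\; \Omega\!\left(\frac{\fnorminf(\qmat^{\text{cdf}}_T, \alpha_0/2)^2}{\priv^2\, \alpha_0^2}\right).
\end{equation*}
Since $\priv$ and $\alpha_0$ are absolute constants, it suffices to prove $\fnorminf(\qmat^{\text{cdf}}_T, \alpha_0/2) = \Omega(\log T)$.

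Second, I would establish this approximate-norm lower bound using the SDP dual of $\fnorminf$. For any $Z$ with $\fnorminf^*(Z) \le 1$ and any entrywise $\alpha_0/2$-approximation $\widetilde{\qmat}$ of $\qmat^{\text{cdf}}_T$, H\"older's inequality gives
\begin{equation*}
\fnorminf(\widetilde{\qmat}) \;\ge\; \langle Z, \widetilde{\qmat}\rangle \;\ge\; \langle Z, \qmat^{\text{cdf}}_T\rangle - \tfrac{\alpha_0}{2}\, \|Z\|_1.
\end{equation*}
The classical bound $\fnorminf(\qmat^{\text{cdf}}_T) = \Omega(\log T)$ (which underlies the $\Omega(\log T)$ noise lower bound for the binary-tree/matrix mechanism for counting queries) is witnessed by a Haar-like dyadic dual certificate with $\langle Z, \qmat^{\text{cdf}}_T\rangle = \Omega(\log T)$. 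Such a certificate can be normalized to satisfy $\|Z\|_1 = O(1)$, making the perturbation term $O(\alpha_0)$ negligible and thus preserving $\Omega(\log T)$ for the approximate norm. Alternatively, one may invoke Matou\v{s}ek's tight $\Omega(\log T)$ lower bound on the hereditary discrepancy of the interval system, combined with the Matou\v{s}ek--Nikolov--Talwar theorem relating $\fnorminf$ to hereditary discrepancy, which gives the same conclusion and is inherently robust under $\ell_\infty$ perturbations.

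Combining the two steps yields $\sc_{\priv, 0}^{\ell_\infty, \text{loc}}(\queries^{\text{cdf}}_T, \alpha') = \Omega(\log^2 T)$, which is exactly the corollary after relabelling $\alpha' \to \alpha$. The main obstacle is the robustness step: the exact $\fnorminf(\qmat^{\text{cdf}}_T) = \Omega(\log T)$ bound is classical, but exhibiting a dual witness that simultaneously has $\ell_1$ mass $O(1)$ and inner product $\Omega(\log T)$ with $\qmat^{\text{cdf}}_T$ requires a carefully chosen dyadic/Haar construction, and this is where essentially all of the combinatorial content of the argument lives.
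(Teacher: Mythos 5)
Your first step---invoking Theorem~\ref{thm:main-local} at a constant accuracy level and reducing the corollary to the claim $\fnorminf(\qmat^{\textrm{cdf}}_T, \alpha_0/2) = \Omega(\log T)$---is exactly the paper's route. The gap is in your second step, and it is not merely a missing detail: the dual certificate you posit cannot exist. Every entry of $\qmat^{\textrm{cdf}}_T$ lies in $\{0,1\}$, so $|Z \bullet \qmat^{\textrm{cdf}}_T| \le \|Z\|_1 \cdot \|\qmat^{\textrm{cdf}}_T\|_{1\to\infty} = \|Z\|_1$; hence no $Z$ with $\|Z\|_1 = O(1)$ can satisfy $Z \bullet \qmat^{\textrm{cdf}}_T = \Omega(\log T)$. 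What Lemma~\ref{lm:duality} actually requires is a witness $U$ for which $\qmat\bullet U - \tfrac{\alpha_0}{2}\|U\|_1$ is large \emph{relative to} $\fnorminf^\ast(U)$: here $\|U\|_1$ is necessarily large, and the perturbation term is absorbed because $\qmat \bullet U$ is a constant fraction of $\|U\|_1$ while $\fnorminf^\ast(U) \ll \|U\|_1$ (compare the paper's parity argument, where $U = \qmat$, $\qmat\bullet U = \|U\|_1 = s$, and $\fnorminf^\ast(U) \le \sqrt{s2^d}$). Your fallback via hereditary discrepancy has an analogous problem: the Matou\v{s}ek--Nikolov--Talwar equivalence concerns the \emph{exact} $\fnorminf$ norm, and hereditary discrepancy is not robust to entrywise $\alpha$-perturbations---on a submatrix with $s$ columns the discrepancy can drop by $\alpha s$, which swamps an $O(\log s)$ bound. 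Likewise the naive trace-norm certificate for the triangular matrix has $\ell_1$ mass polynomial in $T$ per unit of objective value and is well known not to survive entrywise perturbation (this is precisely why the paper needs Sherstov's pattern-matrix method, rather than the plain trace norm, for marginals).

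The paper avoids constructing a robust certificate by hand. It passes to the sign matrix $\qmat' = 2\qmat^{\textrm{cdf}}_T - J$ and invokes the margin-complexity lower bound of Forster et al.~\cite{ForsterSSS03}, which gives $\fnorminf(\widehat{\qmat}) = \Omega(\log T)$ for \emph{every} $\widehat{\qmat}$ with $\widehat{w}_{t,\elem} w'_{t,\elem} \ge 1$ entrywise. Margin complexity is robust to exactly the perturbations allowed in the approximate norm: any $\widetilde{\qmat}$ with $\|\widetilde{\qmat} - \qmat'\|_{1\to\infty} \le 1/2$ yields such a $\widehat{\qmat} = 2\widetilde{\qmat}$, so $\fnorminf(\qmat', 1/2) = \Omega(\log T)$, and homogeneity plus the triangle inequality (using $\fnorminf(J) = 1$) transfers this to $\fnorminf(\qmat^{\textrm{cdf}}_T, 1/2) = \Omega(\log T)$. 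To repair your argument you need either this margin-complexity input or an explicit dual witness $U$ satisfying both $\qmat \bullet U \ge c^{-1}\cdot\tfrac{\alpha_0}{2}\|U\|_1$ for some constant $c<1$ and $\qmat\bullet U / \fnorminf^\ast(U) = \Omega(\log T)$; exhibiting the latter is, as you suspected, where all the content lies, and your proposal does not supply it.
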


We obtain this corollary by combining Theorem~\ref{thm:main-local} with results from~\cite{ForsterSSS03}.  Corollary~\ref{cor:threshold} should be compared to the upper bound of $O(\log^3 T)$ that can be obtained from the local analogue of the \emph{binary tree mechanism}~\cite{DworkNPR10,ChanSS11}.  Ours is the first lower bound to go beyond the easy $\Omega(\log T)$ lower bound for this problem, which follows easily via a so-called packing argument.

\begin{corollary} [Parities] \label{cor:parity}
	Let $\queries^{\textrm{parity}}_{d,w}$ be the family of statistical queries over the domain $\uni = \{\pm 1\}^d$ that, for every $S \subseteq [d], |S| \leq w$, contains the statistical query $q_{S}(\elem) = \prod_{j \in S} \elem_{j}$.  Then for every $k \leq d \in \N$ and $\priv,\alpha$ smaller than an absolute constant,
	$$\sc_{\priv, 0}^{\ell_\infty, \textrm{loc}}(\queries^{\textrm{parity}}_{d,w}, \alpha) = \Omega((d/w)^w).$$
\end{corollary}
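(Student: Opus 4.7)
The plan is to invoke Theorem~\ref{thm:main-local} and reduce Corollary~\ref{cor:parity} to a lower bound on the approximate factorization norm $\fnorminf(\qmat, \alpha/2)$, where $\qmat \in \{\pm 1\}^{k \times 2^d}$ is the workload matrix of parities of weight at most $w$ and $k = |\queries^{\textrm{parity}}_{d,w}| = \sum_{i=0}^w \binom{d}{i}$. Since $k \geq \binom{d}{w} \geq (d/w)^w$, it suffices to prove $\fnorminf(\qmat, \alpha/2) = \Omega(\sqrt{k})$ for small constant $\alpha$: plugging this into Theorem~\ref{thm:main-local} then yields $\sc_{\priv,0}^{\ell_\infty,\mathrm{loc}}(\queries^{\mathrm{parity}}_{d,w}, \alpha') = \Omega(k/(\priv^2 \alpha^2)) = \Omega((d/w)^w)$ for constant $\priv$ and $\alpha$.

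The key structural fact is that the rows of $\qmat$ are distinct Walsh characters and therefore pairwise orthogonal with squared norm $2^d$; in particular $\qmat \qmat^T = 2^d I_k$, so $\|\qmat\|_{\mathrm{op}} = \sqrt{2^d}$ and $\|\qmat\|_F^2 = k \cdot 2^d$. I would fix an arbitrary $\widetilde\qmat$ with $\|\qmat - \widetilde\qmat\|_{1 \to \infty} \leq \alpha/4$ and an arbitrary factorization $\widetilde\qmat = RA$, and bound $\langle \qmat, \widetilde\qmat\rangle$ from two sides. From below, using $|\qmat_{ij}| = 1$:
\begin{align*}
\langle \qmat, \widetilde\qmat \rangle \;\geq\; \|\qmat\|_F^2 - \tfrac{\alpha}{4}\sum_{i,j}|\qmat_{ij}| \;=\; \left(1 - \tfrac{\alpha}{4}\right) k\, 2^d.
\end{align*}
From above, by trace duality and the standard nuclear-norm bound $\|RA\|_* \leq \|R\|_F \|A\|_F$ (obtained by decomposing $RA$ as a sum of rank-one outer products and applying Cauchy--Schwarz):
\begin{align*}
\langle \qmat, \widetilde\qmat\rangle \;\leq\; \|\qmat\|_{\mathrm{op}}\,\|\widetilde\qmat\|_* \;\leq\; \sqrt{2^d} \cdot \sqrt{k}\,\|R\|_{2 \to \infty} \cdot \sqrt{2^d}\,\|A\|_{1 \to 2}.
\end{align*}
Combining these two inequalities gives $\|R\|_{2 \to \infty} \|A\|_{1 \to 2} \geq (1 - \alpha/4)\sqrt{k}$, and taking the infimum over factorizations and over $\widetilde\qmat$ yields $\fnorminf(\qmat, \alpha/2) \geq (1 - \alpha/4)\sqrt{k} = \Omega(\sqrt{k})$ for $\alpha$ smaller than an absolute constant.

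The argument is essentially a one-shot application of trace duality together with the nuclear-norm bound for matrix products, so the main work lies in identifying the right spectral property of the parity matrix rather than in any novel technique. No substantive obstacle is expected: the approximation slack $\alpha/4$ contributes only a $(\alpha/4)$-fraction of $\|\qmat\|_F^2$ to $\langle \qmat, \widetilde\qmat\rangle$, so it is comfortably absorbed, and the dependence on the number of queries $k = \Theta((d/w)^w)$ comes directly from the orthogonality relation $\qmat\qmat^T = 2^d I_k$.
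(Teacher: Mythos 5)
Your proof is correct, and it reaches the same $\Omega(\sqrt{k}) = \Omega\bigl(\binom{d}{w}^{1/2}\bigr)$ lower bound on $\fnorminf(\qmat,\alpha/2)$ that the paper needs, but by a formally different route. The paper's proof of Corollary~\ref{cor:parity} invokes the duality Lemma~\ref{lm:duality} with the dual witness $U = \qmat$ and then controls $\fnorminf^\ast(\qmat)$ via Grothendieck's dual characterization (Lemma~\ref{lm:gammastar-dual}), taking $P$ and $Q$ to be scaled identity matrices, which yields $\fnorminf^\ast(\qmat) \le \sqrt{k}\cdot 2^d$ from the flat Hadamard spectrum. You instead run the dual computation directly: trace duality $\langle \qmat, \widetilde\qmat\rangle \le \|\qmat\|_{\mathrm{op}}\|\widetilde\qmat\|_\ast$, the nuclear-norm product bound $\|RA\|_\ast \le \|R\|_F\|A\|_F$, and the dimension-counting inequalities $\|R\|_F \le \sqrt{k}\,\|R\|_{2\to\infty}$, $\|A\|_F \le \sqrt{2^d}\,\|A\|_{1\to 2}$. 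Unwinding the dualities, your estimate is precisely $\fnorminf(\widetilde\qmat) \ge \|\widetilde\qmat\|_\ast/\sqrt{k\,2^d}$, which is the trace-norm lower bound the paper cites as \cite[Lemma~3.4]{LinialMSS07} and uses for the marginals corollary; so in effect you have repurposed the marginals technique to parities in place of the paper's Grothendieck-via-Lemma~\ref{lm:gammastar-dual} argument. Both proofs ultimately reduce to the same spectral fact ($\qmat\qmat^\top = 2^d I_k$); yours has the advantage of being elementary and self-contained, while the paper's version illustrates how the general duality machinery it develops (Lemmas~\ref{lm:duality} and~\ref{lm:gammastar-dual}) directly produces the bound with $U = \qmat$.
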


Corollary~\ref{cor:parity} says that adding independent Gaussian noise to each query is optimal up to a $O(w \log(d/w))$ factor.
Using similar techniques,
one can also obtain a direct proof that gives a tight lower bound up to constant factors, even for the simpler problem of finding the subset $S$ of size at most $w$ that maximizes $q_{S}(\ds)$.

\begin{corollary} [Marginals] \label{cor:marginal}
Let $\queries^{\textrm{marginal}}_{d,w}$ be the family of statistical queries over the domain $\uni = \{0,1\}^d$ that, for every $S \subseteq [d], |S| \leq w$, contains the statistical query $q_{S}(\elem) =  \prod_{j \in S} \elem_{j}$.  Then for every $k \leq d \in \N$ and $\priv,\alpha$ smaller than an absolute constant,
$$\sc_{\priv, 0}^{\ell_\infty, \textrm{loc}}(\queries^{\textrm{marginal}}_{d,w}, \alpha) = (d/w)^{\Omega(\sqrt{w})}.$$
\end{corollary}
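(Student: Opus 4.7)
The plan is to combine Theorem~\ref{thm:main-local} with a lower bound on the approximate factorization norm of the marginal workload matrix. Let $W = W^{\mathrm{marginal}}_{d,w}$. By Theorem~\ref{thm:main-local}, choosing a small absolute constant $\alpha_0$ yields $\alpha' = \Omega(\alpha_0/\log(1/\alpha_0)) = \Omega(1)$ and
\[
\sc^{\ell_\infty,\mathrm{loc}}_{\priv,0}(\queries^{\mathrm{marginal}}_{d,w}, \alpha') \;=\; \Omega\!\left(\frac{\fnorminf(W, \alpha_0/2)^2}{\priv^2 \alpha_0^2}\right).
\]
Since moving to a smaller target error only increases the sample complexity, the corollary reduces to proving $\fnorminf(W, \alpha_0/2) \ge (d/w)^{\Omega(\sqrt{w})}$ for some absolute constant $\alpha_0 > 0$; squaring then gives the claimed $(d/w)^{\Omega(\sqrt{w})}$ sample complexity.

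To establish the $\fnorminf$ lower bound, I would appeal to Sherstov's approximate-rank lower bounds obtained via the pattern matrix method. Two standard facts are used: (i) $\fnorminf$ with approximation parameter $\alpha$ is monotone under restriction to submatrices, and (ii) for matrices with entries bounded by an absolute constant and $\alpha = \Theta(1)$, $\fnorminf(\cdot, \alpha)$ dominates the approximate rank $\mathrm{rank}_\alpha(\cdot)$ up to constants. It therefore suffices to exhibit a submatrix of $W$ whose approximate rank is $(d/w)^{\Omega(\sqrt{w})}$. Partition $[d]$ into $w$ blocks of size $d/w$; for each $y \in [d/w]^w$, let $S_y \subseteq [d]$ pick the $y_i$th coordinate of block $i$, so $|S_y| = w$ and the row of $W$ indexed by $S_y$ is present. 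Restricting the columns of $W$ to $\{0,1\}^d$-inputs of a suitable form makes the entry $\prod_{j \in S_y} x_j$ reproduce the Sherstov pattern matrix for $\mathrm{AND}_w$, whose approximate rank is $2^{\Omega(\sqrt{w})}$ by Sherstov's theorem. The $w$ blocks of size $d/w$ then amplify this bound to $(d/w)^{\Omega(\sqrt{w})}$, as desired.

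The main obstacle will be making the embedding rigorous: verifying that the $\{0,1\}$-valued conjunctions $\prod_{j \in S_y} x_j$ literally form the Sherstov pattern matrix for $\mathrm{AND}_w$ (possibly after an affine change of variables that preserves both approximate rank and $\fnorminf$ up to constants), and tracking constants when converting from approximate rank to $\fnorminf$ with a specific constant accuracy $\alpha_0$. Everything else, namely the plug-in to Theorem~\ref{thm:main-local} and the monotonicity/reduction lemmas, is routine.
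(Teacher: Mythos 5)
Your high-level plan is the same as the paper's: invoke Theorem~\ref{thm:main-local} and lower-bound $\fnorminf(W,\alpha)$ for a constant $\alpha$ using Sherstov's pattern matrix method with $f=\mathrm{AND}_w$. However, the key technical link you propose---``$\fnorminf(\cdot,\alpha)$ dominates the approximate rank $\mathrm{rank}_\alpha(\cdot)$ up to constants''---is stated in the wrong direction and is false as written. For bounded-entry matrices, the clean inequality is the \emph{other} way: low rank (or low approximate rank) implies small $\fnorminf$, not conversely. The identity matrix $I_n$ shows the implication you need fails badly: $\fnorminf(I_n)=1$ while $\mathrm{rank}(I_n)=n$ and $\mathrm{rank}_\alpha(I_n)=\Omega(n)$, so one cannot conclude low approximate rank from low approximate $\fnorminf$. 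The usable version of your claim is the Johnson--Lindenstrauss-style bound $\mathrm{rank}_{2\alpha}(M) = O\left(\frac{\fnorminf(M,\alpha)^2 \log(mn)}{\alpha^2}\right)$, whose contrapositive gives $\fnorminf(M,\alpha) = \Omega\left(\alpha\sqrt{\mathrm{rank}_{2\alpha}(M)/\log(mn)}\right)$. This does recover the claimed bound here because the logarithmic loss is absorbed by the $(d/w)^{\Omega(\sqrt{w})}$ growth, but it is not ``up to constants,'' and it requires an extra dimension-reduction step you did not account for. The paper avoids this entirely by using the trace-norm lower bound $\fnorminf(\widetilde\qmat) \geq \tfrac{1}{\sqrt{s}}\|\widetilde\qmat\|_{tr}$ (\cite[Lemma~3.4]{LinialMSS07}): Sherstov's Theorem~8.1 directly lower-bounds the \emph{approximate trace norm} of the $(d,w,\mathrm{AND}_w)$-pattern matrix by $(d/w)^{\mathrm{deg}_{1/3}(\mathrm{AND}_w)/2} = (d/w)^{\Omega(\sqrt w)}$, which then passes immediately to $\fnorminf(\qmat,1/6)$ with no rank detour and no log loss.

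A second, smaller gap: your embedding sketch uses only positive conjunctions $\prod_{j\in S_y} x_j$, but the pattern matrix for $\mathrm{AND}_w$ applies the function to a restriction of $x$ XOR'd with a shift, i.e.\ it needs conjunctions of possibly negated literals. You gesture at an ``affine change of variables,'' but the paper handles this by explicitly generalizing the marginal workload to $q_{S,y}(x)=\prod_{j\in S}\mathbb{I}[x_j=y_j]$ and observing these are realized by the ordinary marginal queries after doubling the dimension. Making your embedding literal requires the same move. Once that and the trace-norm substitution are in place, the rest of your argument (monotonicity of $\fnorminf(\cdot,\alpha)$ under taking submatrices, plugging into Theorem~\ref{thm:main-local}) is correct and matches the paper.
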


Marginal queries have been extremely well studied in differential privacy~\cite{BarakCDKMT07,KasiviswanathanRSU10,GuptaHRU11,HardtRS12,ThalerUV12,ChandrasekaranTUW14,DworkNT15}.  Corollary~\ref{cor:marginal} shows that a natural local analogue of the algorithm of~\cite{ThalerUV12} is optimal for answering marginal queries up to the hidden constant factor in the exponent.

\subsubsection{Agnostic Learning in the Local Model}
Theorem~\ref{thm:main-local} extends to characterizing \emph{agnostic PAC learning}~\cite{KearnsSS94} in the local model. In agnostic PAC learning, the dataset consists of labeled examples $\ds = ((\dsrow_1,y_1), \ldots, (\dsrow_\dsize, y_\dsize))$, where $\dsrow_i \in \uni$, and $y_i \in \{\pm 1\}$, and each pair $(\dsrow_i, y_i)$ is sampled independently from an unknown distribution $\dist$.  The goal is to find a concept $\concept: \uni \to \{\pm 1\}$ in a concept class $\concepts$ that approximately maximizes \(\E_{(\elem, y)\sim \dist}[\concept(\elem)y]\). 

The correlation of each concept $\concept$ with the labels in the data is a linear query, and one natural approach to agnostic PAC learning is to estimate all these linear queries, and output the concept that corresponds to the largest query value. Thus, we can apply the local approximate factorization mechanism to the family of queries $\concepts$ to obtain the same sample complexity upper bound in~\eqref{eq:lafm-ub}.  Interestingly, the proof of our lower bound in Theorem~\ref{thm:main-local} shows that the same lower bound also applies to this a priori easier problem of agnostic PAC learning, showing that the local approximate factorization mechanism gives an approximately optimal way to learn any concept class $\concepts$.

Prior results of Kasisiviswanathan et al.~\cite{KasiviswanathanLNRS08}
connecting learning algorithms in the local model with the SQ model,
together with characterizations of sample complexity in the SQ
model~\cite{BlumFJKMR94,Szorenyi09}, give  upper and lower bounds on
sample complexity of learning in the local model in terms of SQ
dimension. These results, however, are only tight up to polynomial
factors in the SQ dimension---which can be polynomial in
$|\concepts|$---whereas our results are sharper. We remark that, technically, the
results are not comparable, since the the characterization via the SQ
model holds for sequentially interactive, rather than non-interactive,
mechanisms. 

\subsubsection{Linear Queries in the Central Model}
Our second set of results quantitatively strengthens---and simplifies the proof of---the central model characterization of~\cite{NikolovTZ13}.  In contrast to the local model, the sample complexity of answering many natural workloads of linear queries exhibits two distinct regimes, depending on the desired accuracy.  For example, for a worst-case workload of linear queries, the sample complexity is at most
$$
\min\left\{ \frac{\log^{1/2} |\uni| \log |\queries|}{\priv \alpha^2}, \frac{|\queries|^{1/2}}{\priv \alpha} \right\}.
$$
Thus, the sample complexity behaves very differently when $\alpha$ goes below some critical value.  Our results concern this \emph{high-accuracy regime} where $\alpha$ is quite small.  In these results, we consider the $\ell_2^2$ error (scaled to be directly comparable to the $\ell_\infty$ error), which is
$$
\err^{\ell_2^2}(\mech,\queries, n) = \max_{\ds \in \uni^n} \ex{\mech}{\tfrac{1}{|\queries|}\| \mech(\ds) - \queries(\ds) \|_2^2}^{1/2}
$$
with the related quantities defined analogously.  Notice that we have scaled the $\ell_2^2$ error so that $\err^{\ell_2^2}(\mech, \queries, n) \leq \err^{\ell_\infty}(\mech,\queries,n)$.  For $\ell_2^2$ error, the natural factorization norm that describes the error of the factorization mechanisms is
$$
\fnorm(\qmat) = \set{ \tfrac{1}{|\queries|^{1/2}} \| R \|_{F} \| A \|_{1 \to 2} : \qmat = RA},
$$
where $\|R\|_F = \sqrt{\sum_{i,j} R_{i,j}^2}$ is the Frobenius norm of $R$.

In this high-accuracy regime, a combination of~\cite{NikolovTZ13}
and~\cite{NikolovT15} (see also the thesis~\cite{nikolov-thesis}) shows that, for every workload of linear queries, there is some $\alpha^*$ such that
\begin{align*}
\forall \alpha \leq \alpha^* \ \ \Omega( \log^{-1} |\queries|) \cdot \frac{\fnorm(\qmat)}{\priv \alpha} \leq \sc_{\priv, \privd}^{\ell_2^2}(\queries, \alpha) \le O(1) \cdot \frac{\fnorm(\qmat)}{\priv \alpha}.
\end{align*}
Note that the upper and lower bound differ by a factor of $O(\log
|\queries|)$.  The upper bound above is precisely what is given by the factorization mechanism, but the lower bound is smaller than the relevant factorization norm by an $\Omega(\log |\queries|)$ factor.  Our next theorem removes this log factor from the lower bound, and thus gives a characterization up to $O(1)$ for $\ell_2^2$.
\begin{theorem} \label{thm:main-central}
Let $\priv, \privd > 0$ be smaller than some absolute constants and let $\queries$ be a workload of linear queries with workload matrix $\qmat$.  There exists some $\alpha^* > 0$ such that for every $\alpha \leq \alpha^*$,
\begin{equation*}
 \sc_{\priv, \privd}^{\ell_2^2}(\queries, \alpha) = \Omega\left( \frac{\fnorm(\qmat) }{ \priv \alpha }\right).  
\end{equation*}
\end{theorem}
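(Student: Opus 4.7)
The plan is to prove the lower bound via a direct Gaussian fingerprinting attack, using a dual characterization of $\fnorm$ to reduce the problem to a one-dimensional estimation task. The key reason this approach avoids the $\log |\queries|$ loss of the Nikolov--Talwar--Zhang proof is that, for $\ell_2^2$ error, the attack can be mounted against a single scalar projection of the query-answer vector, so no union bound over queries is needed.

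First I would derive a clean dual characterization of $\fnorm(\qmat)$. Standard SDP duality for the Frobenius-based factorization norm should yield, up to a constant factor, a unit vector $u \in \R^{|\queries|}$ and a positive semidefinite matrix $M \in \R^{\uni \times \uni}$ with diagonal entries at most $1$ (so that $M$ can play the role of a normalized second-moment matrix) satisfying
\[
u^T \qmat\, M\, \qmat^T u \;\gtrsim\; \fnorm(\qmat)^2 .
\]
Writing $a = \qmat^T u$, this yields a single scalar linear combination of the query answers whose ``variance'' against $M$ is comparable to $\fnorm(\qmat)^2$.

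Second, I would use $M$ to construct a distribution $\dist$ on $\uni$ (or a small mixture of such distributions) whose normalized second-moment structure matches $M$. For $\ds$ drawn i.i.d.\ from $\dist$, the scalar $u^T \qmat\, \hist(\ds)$ fluctuates around its mean with variance of order $\fnorm(\qmat)^2 / n$. Because $\|u\|_2 = 1$, any $(\eps,\delta)$-DP mechanism $\mech$ with normalized $\ell_2^2$ error at most $\alpha$ yields an estimator $u^T\mech(\ds)$ of this scalar that is accurate to within $O(\alpha)$. I would then feed this scalar estimator into a one-dimensional Gaussian fingerprinting lemma, in the spirit of Bun--Ullman--Vadhan or Steinke--Ullman: privacy bounds the total per-row correlation between $u^T\mech(\ds)$ and the rows of $\ds$, while accuracy forces this total correlation to be at least $\Omega(\fnorm(\qmat))$, and combining the two gives $n = \Omega(\fnorm(\qmat)/(\eps\alpha))$ with no logarithmic loss.

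The main obstacle is the duality step: I need a dual formulation of $\fnorm$ that is explicit enough that the witness $(u, M)$ can be realized, at least approximately, as the first and second moments of an actual distribution on $\uni$ supported on finitely many points. The high-accuracy regime $\alpha \leq \alpha^*$ enters exactly here, as the required $n = \Theta(\fnorm(\qmat)/(\eps\alpha))$ must be large enough for such a discrete realization of the continuous dual witness to be faithful; for larger $\alpha$ a separate low-accuracy lower bound would be needed. A secondary obstacle is ensuring that the fingerprinting inequality applies to $(\eps,\delta)$-DP (not just pure DP) with the right constants, which should follow from the group-privacy/score-function variant of the attack already used in the literature.
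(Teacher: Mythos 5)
Your overall strategy (a dual characterization of $\fnorm$ plus a distributional attack) is a legitimate alternative direction, but the reduction to a \emph{single scalar projection} has a gap that is fatal rather than cosmetic. First, the accuracy transfer is wrong: if $\mech$ has normalized $\ell_2^2$ error at most $\alpha$, i.e.\ $\ex{}{\|\mech(\ds)-\queries(\ds)\|_2^2}\le |\queries|\alpha^2$, then for a unit vector $u$ the scalar estimator $u^\top\mech(\ds)$ is only guaranteed accurate to $\sqrt{|\queries|}\,\alpha$, not $O(\alpha)$, since all of the permitted error may be concentrated in the direction $u$. Second, the dual object for $\fnorm$ is not a single direction: the dual witness is (up to constants) a diagonal reweighting $D$ of the columns maximizing the nuclear norm $\tfrac{1}{\sqrt{|\queries|}}\|\qmat D\|_{S_1}$, which aggregates all $|\queries|$ singular directions, whereas a single unit vector $u$ only sees the top singular value of $\qmat D$. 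Concretely, for the $N\times N$ Hadamard (parity) workload one has $\fnorm(\qmat)=\sqrt{N}$ and the target is $n=\Omega(\sqrt{N}/\priv\alpha)$, but any single-direction argument gives at best $n=\Omega(1/\priv\alpha)$: the scalar query $u^\top\qmat\hist/n$ has sensitivity $O(\sqrt{N}/n)$ while its permitted error is $\sqrt{N}\alpha$, so the two effects cancel and you lose a factor of $\sqrt{N}=\sqrt{|\queries|}$. The claim that ``no union bound over queries is needed'' because one scalar suffices is exactly where the proof breaks; a correct argument must control the error in all $|\queries|$ directions simultaneously.

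For comparison, the paper achieves this without duality or fingerprinting. It first restricts to data-independent mechanisms $\mech(\hist)=\frac1n(\qmat\hist+Z)$ with noise covariance $\Sigma$, applies the one-dimensional variance lower bound in \emph{every} direction $u$ to conclude that the ellipsoid $\{y: y^\top\Sigma^{-1}y\le C^2\priv^2\}$ contains the sensitivity polytope $\qmat B_1$, and then reads off the \emph{primal} factorization $\qmat=\Sigma^{1/2}\cdot(\Sigma^{-1/2}\qmat)$, giving $\fnorm(\qmat)\le C\priv\, n\cdot\err^{\ell_2^2}(\mech,\qmat,n)$ directly from $\tr(\Sigma)$; this is how ``all directions at once'' is handled. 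The restriction to $\alpha\le\alpha^*$ then comes from the Bhaskara et al.\ reduction converting an arbitrary mechanism into a data-independent one (which only transfers the bound for large $n$), not from discretizing a dual witness as you suggest. If you want to pursue your route, you would need a genuinely multi-dimensional attack against the full spectrum of $\qmat D$, which is essentially the harder, logarithmically lossy Nikolov--Talwar--Zhang argument that the paper is explicitly trying to avoid.
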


In addition to being sharper, our proof of Theorem~\ref{thm:main-central} is dramatically simpler than the lower bounds in~\cite{NikolovTZ13,NikolovT15}.  

\begin{remark} By a trivial reduction, Theorem~\ref{thm:main-local},
  in fact, gives
lower bounds for the distributional setting where $\ds$ is sampled i.i.d.\ from an unknown distribution $\dist$, and the goal is to estimate the quantity $\query(\dist) = \ex{\elem \sim \dist}{\query(\elem)}$ for every query $\query \in \queries$ up to error at most $\alpha$.
\end{remark}

\mypar{Data-Independent Mechanisms}  Along the way, we prove a simple result that this sample complexity bound holds for \emph{every} choice of $\alpha$, provided we restrict attention to \emph{data-independent mechanisms.}  These mechanisms can be written in the form $\mech(\ds) = \queries(\ds) + Z/n$ for some fixed random variable $Z$ that depends only on $\queries$ and not on the data.

For such mechanisms we show that the sample complexity is always $\Omega(\fnorm(\qmat)/\priv \alpha)$, regardless of $\alpha$.\footnote{Technically, we require $\alpha \leq \| \qmat \|_{1 \to \infty}$, but in nearly all applications of interest $\| \qmat \|_{1 \to \infty} = 1$.}

Data-independent mechanisms are interesting on their own, since the fact that we add noise from a known distribution makes them simpler to implement, and also means that we can give precise confidence intervals on the error of the mechanism.  One application of our lower bound for data-independent mechanisms is an $\Omega(\log T)$ lower bound on the sample complexity of any mechanism for answering threshold queries over $[T]$ in $\ell_2^2$ error, which matches the data-independent binary tree mechanism.


\subsection{Techniques}

Below we give a brief overview of the techniques used to prove Theorems~\ref{thm:main-local}~and~\ref{thm:main-central}.  

\mypar{Lower bound in the local model} As mentioned above, Theorem~\ref{thm:main-local} is proved in the distributional setting, where the dataset $\ds$ consists of $\dsize$ i.i.d.\ samples from some distribution $\dist$, and the goal is to estimate the expectation of each query $\query \in \queries$ on $\dist$.  Our approach is to design two families of hard distributions $\{\dista_1, \ldots, \dista_\qsize\}$ and $\{\distb_1, \ldots, \distb_\qsize\}$ with the following properties:  first, any locally differentially private mechanism requires many samples to distinguish these two families;  second, the two families give very different answers to the queries.

To show that the distributions are hard to distinguish, we prove an upper bound on the KL-divergence between: (1) the transcript of a private mechanism in the local model when run on $\dsize$ samples from a random distribution in $\{\dista_1, \ldots, \dista_\qsize\}$, and (2) the same, but for a random distribution in $\{\distb_1, \ldots, \distb_\qsize\}$. Intuitively, the bound shows that the KL-divergence between transcripts is small when no bounded test function can simultaneously distinguish between $\dista_v$ and $\distb_v$ on average over a random choice of $v \in [\qsize]$.  This bound is a slight extension of a similar bound from~\cite{DJW}.  In particular, the upper bound on the KL-divergence is in terms of the $\infty \to 2$ operator norm of a matrix $M$ derived from the two families of distributions.

Thus, what remains is to find families distributions $\{\dista_1, \ldots, \dista_\qsize\}$ and $\{\distb_1, \ldots, \distb_\qsize\}$, for which the $\infty \to 2$ operator norm of $M$ is small, but the expectations of the queries in $\queries$ are sufficiently different on the two families.  
Recall that our goal is to prove a lower bound in terms of the approximate norm $\fnorminf(\qmat, \alpha)$, where $\qmat$ is the workload matrix. Since $\fnorminf(\qmat, \alpha)$ is the value of a convex minimization problem, it admits a dual characterization, showing that $\fnorminf(\qmat, \alpha)$ is equal to the value of a maximization problem over matrices $U$. We take an optimal dual solution $U$, and use it to derive distributions $\{\dista_1, \ldots, \dista_\qsize\}$ and $\{\distb_1, \ldots, \distb_\qsize\}$. The objective function of the dual problem guarantees that these distributions are such that the expectation of any query $\query \in \queries$ on any $\dista_v$ is small, yet the expectation of the query $\query_v$ on $\distb_v$ is large. Moreover, the dual objective, together with classical arguments in functional analysis, also guarantees an upper bound on the $\infty \to 2$ norm of the appropriate matrix $M$, giving us both ingredients for our lower bound.

\mypar{Lower bound in the central model}
The main ingredient of the proof of Theorem~\ref{thm:main-central} is a lower bound of $\Omega(\fnorm(\qmat)/\priv \alpha)$ on the sample complexity of data-independent mechanisms. Recall that a mechanism $\mech$ is data-independent if $\mech(\ds) = \queries(\ds) + \frac{1}{n} Z$ for a random variable $Z \in \R^{\queries}$. Our key observation is that, if $\Sigma$ is the covariance matrix of $Z$, then the mechanism $$\queries(\ds) + \frac{O(\log(1/\delta))}{n} \cdot \mathcal{N}(0,\Sigma)$$ that uses Gaussian noise in place of $Z$ is also $(\priv,\privd)$-differentially private.  Moreover, the $\ell_2^2$ error of $\mech$ is equal to $\tr(\Sigma)/|\queries|^{1/2}$, so, up to a factor of $O(\log(1/\delta))$, the optimal data-independent mechanism with respect to $\ell_2^2$-error can be assumed to use correlated Gaussian noise. It is easy to see that the class of all such mechanism is equivalent to the class of all factorization mechanisms, and, hence, the optimal achievable $\ell_2^2$-error is $O(\fnorm(\qmat)/ \priv n)$. 

To give a lower bound for arbitrary mechanisms in the high-accuracy regime, we use a clever transformation from~\cite{BhaskaraDKT12} that turns a data-dependent mechanisms that is accurate for large datasets into a data-independent mechanism.

\section{Preliminaries} \label{sec:prelims}

In this section we recount basic notation and definitions used
throughout the paper.

\subsection{Norms}

For a set $\mathcal{S}$, the $\ell_1$, $\ell_2$ and $\ell_\infty$ norms
on $\R^\mathcal{S}$ are given respectively by
\[
    \| a \|_1 = \sum_{v \in \mathcal{S}} |a_v|,
    \quad
    \| a \|_2 = \sqrt{ \sum_{v \in \mathcal{S}} (a_v)^2 },
    \quad
    \| a \|_\infty = \max_{v \in \mathcal{S}} |a_v|.
\]

Given a probability distribution $\pi$ on $\mathcal{S}$,
we consider the norms $\|\cdot\|_{L_1(\pi)}$ and $\|\cdot\|_{L_2(\pi)}$
on $\R^\mathcal{S}$, given by
\[
    \| a \|_{L_1(\pi)}
    = \sum_{v \in \mathcal{S}} \pi(v) |a_v|,
    \quad 
    \| a \|_{L_2(\pi)}
    = \sqrt{ \sum_{v \in \mathcal{S}} \pi(v) (a_v)^2}.
\]

We also take advantage of a number of matrix norms.
For norms $\|\cdot\|_\zeta$ and $\|\cdot\|_\xi$
on $\R^{\mathcal{S}}$ and $\R^{\mathcal{S}'}$ respectively,
we consider the \emph{matrix operator norm} of
$M \in \R^{\mathcal{S} \times \mathcal{S}'}$
given by
\[
    \| M \|_{\zeta \to \xi} = \max_{x \in \R^\mathcal{S} \setminus \{0\}} \frac{\| Mx \|_\xi}{\|x\|_\zeta}.
\]
For the special case of $\|M\|_{\ell_s \to \ell_t}$,
we will simply write $\|M\|_{s \to t}$.
Of particular importance are
$\|M\|_{1 \to \infty}$ which corresponds to the largest entries of $M$,
$\|M\|_{1 \rightarrow 2}$,
which corresponds to the maximum $\ell_2$-norm of a column of $M$,
and
$\|M\|_{2 \rightarrow \infty}$,
which corresponds to the maximum $\ell_2$-norm of a row of $M$.

The \emph{inner product} of two matrices $M$ and $N$ in
$\R^{\mathcal{S} \times \mathcal{S}'}$ is defined by
$M\bullet N = \tr(M^\top N) = \sum_{u \in \mathcal{S}, v\in
  \mathcal{S}'} m_{u,v} n_{u,v}$.
The \emph{Frobenius norm} of
$M \in \R^{\mathcal{S} \times \mathcal{S}'}$
is given by $\|M\|_F = \sqrt{M\bullet M}$.

Lastly, the \emph{factorization norms} $\fnorm$ and $\fnorminf$
central to this work are given for
$M \in \R^{\mathcal{S} \times \mathcal{S}'}$ by
\begin{align*}
    &\fnorm(M) = \min\left\{\tfrac{1}{|\mathcal{S}|^{1/2}} \|R\|_{F}\|A\|_{1 \to 2} : RA = M \right\}, \\
    &\fnorminf(M) = \min\{\|R\|_{2 \to \infty}\|A\|_{1 \to 2} : RA = M\}.
\end{align*}

\subsection{Differential privacy}

Let $\uni$ denote the \emph{data universe}.
A generic element from $\uni$ will be denoted by $\elem$.
We consider \emph{datasets} of the form
$\ds = (\elem_1,\dots,\elem_\dsize) \in \uni^n$,
each of which is identified with its \emph{histogram}
$\hist \in \Z_{\ge 0}^{\uni}$ where, for every $\elem \in \uni$, $\hist_{\elem} = | \set{i : \dsrow_i = \elem} |$,
so that $\| \hist \|_1 = \dsize$.
To refer to a dataset, we use $\ds$ and $h$ interchangeably.
A pair of datasets $\ds = (\elem_1,\dots,\elem_i,\dots,\elem_\dsize)$
and $\ds' = (\elem_1,\dots,\elem_i',\dots,\elem_\dsize)$ are called \emph{adjacent} if $X'$ is obtained from $X$ by replacing
an element $\elem_i$ of $\ds$ with a new universe element $\elem_i'$.

For parameters $\priv,\privd>0$,
an \emph{$(\priv,\privd)$-differentially private mechanism}~\cite{DworkMNS06} (or
$(\priv,\privd)-DP$ for short)
is a randomized function $\mech:\uni^\dsize \to \outspace$
which, for all adjacent datasets $\ds$ and $\ds'$,
for all outcomes $S \subseteq \outspace$, satisfies
\[
    \Pr_\mech[\mech(X) \in S] \le e^\priv\Pr_\mech[\mech(X') \in S] + \privd.
\]
A mechanism which is $(\priv,0)$-differentially private 
will be referred to as being simply \emph{$\priv$-differentially
  private} (or $\priv$-DP for short).

Of special interest are $(\priv,\privd)$-differentially private mechanisms $\mech_i:\uni \to \loutspace$
which take a singleton dataset $X = \{x\}$ as input.
These are referred to as \emph{local randomizers}.
A sequence of $(\priv,\privd)$-differentially private local randomizers $\mech_1,\dots,\mech_\dsize$
together with a \emph{post-processing function}
$\post:\loutspace^\dsize \to \outspace$
specify a \emph{(non-interactive) locally
  $(\priv,\privd)$-differentially private mechanism}
$\mech:\uni^\dsize \to \outspace$~\cite{EvfimievskiGS03,DworkMNS06, KasiviswanathanLNRS08}. In short, we say that such
mechanisms are $(\priv, \privd)$-LDP, or $\priv$-LDP when $\privd=0$.
When the local mechanism $\mech$ is applied to a dataset $\ds$,
we refer to
\[
    \trans(X) = (\mech_1(\elem_1),\dots,\mech_\dsize(\elem_\dsize))
\]
as the \emph{transcript} of the mechanism.
Then the output of the mechanism is given by
\(
    \mech(X) = \post(\trans(X)).
\)

\subsection{Linear queries}

A \emph{linear query} is specified by a bounded function
$q: \uni \to \R$.
Abusing notation slightly, its answer on a dataset $\ds$ is given by
$\query(\ds) = \frac{1}{\dsize} \sum_{i=1}^{\dsize} \query(\dsrow_i)$.
We also extend this notation to distributions: if $\dist$ is a
distribution on $\uni$, then we write $\query(\dist)$ for
 $\ex{\elem \sim \dist}{\query(\elem)}$.
A \emph{workload} is a set of linear queries
$\queries = \set{\query_1,\dots,\query_\qsize}$,
and
$\queries(\ds) = (\query_1(\ds),\dots,\query_\qsize(\ds))$
is used to denote their answers. The answers on a distribution $\dist$
on $\uni$ are denoted by 
$\queries(\dist) = (\query_1(\dist),\dots,\query_\qsize(\dist))$.
We will often represent $\queries$ by its \emph{workload matrix} $W \in \R^{\queries \times \uni}$ with entries $w_{q,x} = q(x)$.
In this notation, the answers to the queries are given by
$\frac{1}{n} \qmat \hist$.
We will often use $\queries$ and $\qmat$ interchangeably.

\subsection{Error and sample complexity}

The \emph{$\ell_\infty$ and $\ell_2^2$-error} of a mechanism $\mech$,
which takes a dataset of size $\dsize$,
on the query workload $\queries$ are given by
\begin{align*}
    \err^{\ell_\infty}(\mech,\queries, n) &= \max_{\ds \in \uni^n}
    \ex{\mech}{\| \mech(\ds) - \queries(\ds) \|_\infty},\\
    \err^{\ell_2^2}(\mech,\queries, n) &= \max_{\ds \in \uni^n} 
    \ex{\mech}{\tfrac{1}{|\queries|}\| \mech(\ds) - \queries(\ds) \|_2^2}^{1/2}.
\end{align*}
We can then define the \emph{sample complexity} of a mechanism $\mech$
for a given $\ell_\infty$ error $\alpha$ by 
\[
\sc^{\ell_\infty}_{\eps,\delta}(\mech,\queries,\alpha) = \min\{ n : \err^{\ell_\infty}(\mech,\queries, n) \leq \alpha \}.
\]
The sample complexity with respect to $\ell_2^2$ error $\sc^{\ell_2^2}_{\eps,\delta}(\queries,\alpha)$ is defined analogously.

Having defined error and sample complexity for a fixed mechanism, we
can define the optimal error and sample complexity by
\begin{align*}
    \err^{\ell_\infty}_{\priv, \privd}(\queries, n) &=
    \min_{\textrm{$\mech$ is $(\priv,\privd)$-DP}}
    \err^{\ell_\infty}(\mech,\queries,n),\\
    \sc^{\ell_\infty}_{\priv, \privd}(\queries, \alpha) &=
    \min_{\textrm{$\mech$ is $(\priv,\privd)$-DP}}
    \sc^{\ell_\infty}(\mech,\queries,n).
\end{align*}
The analogous quantities $\err^{\ell_2^2}_{\priv, \privd}(\queries,
n)$ and $\sc^{\ell_2^2}_{\priv, \privd}(\queries, \alpha)$ for
$\ell_2^2$-error are defined similarly. The optimal error and sample
complexity for the local model are denoted 
$\err^{\ell_\infty,\textrm{loc}}_{\priv, \privd}(\queries,n)$ and 
$\sc^{\ell_\infty,\textrm{loc}}_{\priv, \privd}(\queries, \alpha)$,
and are defined in the same way but with the minimum taken over
$(\priv,\privd)$-LDP mechanisms.

\subsection{Factorization Mechanisms}

The Gaussian mechanism~\cite{DinurN03,DworkN04,DworkMNS06} is defined as
\begin{align*}
\mech_{\textrm{Gauss}}(\qmat, \hist) = \frac{1}{n} \qmat \hist + Z,
&&Z\sim\mathcal{N}\left( 0 , \left(\frac{\gauss \| \qmat \|_{1 \to 2}}{n}\right)^2 \cdot I  \right),
\end{align*}
where $\gauss= O(\sqrt{\log(1/\delta)}/\eps)$ depends only on the
privacy parameters. Given a factorization $\qmat = RA$, we consider
the mechanism 
\begin{align*}
\mech_{R,A}(\hist)
= R \ \mech_{\textrm{Gauss}}(\qmat, \hist)
&={} \frac{1}{n} \qmat \hist + Z,
&&Z \sim\mathcal{N}\left( 0 , \left(\frac{\gauss \| A \|_{1 \to 2}}{n}\right)^2  \cdot R R^\top  \right),
\end{align*}
and, utilizing Gaussian tail bounds, one can show that the error is 
$$
\err^{\ell_\infty}(\mech_{R,A}, \queries, n) = O\left( \frac{\| R \|_{2 \to \infty} \| A \|_{1 \to 2} \sqrt{\log(1/\delta) \log |\queries|}}{ \eps n}\right).
$$
We define the \emph{factorization mechanism} $\mech_{\fnorminf}$ to be the mechanism that chooses $R,A$ to minimize this expression, and its error is proportional to the \emph{factorization norm}
$$
\fnorminf(\qmat) = \min \{ \| R \|_{2 \to \infty} \| A \|_{1 \to 2} : \qmat = RA \}.
$$
The sample complexity of this mechanism is thus 
$$
\sczinf(\mech_{\fnorminf},\queries,\alpha) = O\left(\frac{\fnorminf(\qmat) \sqrt{\log(1/\delta) \log|\queries|}}{\alpha}\right).
$$
This mechanism is implicit in~\cite{NikolovTZ13}, and is stated in
this form in~\cite{nikolov-thesis}.

Analogously, we can show that 
\[
\err^{\ell_2^2}(\mech_{R,A}, \queries, n) = O\left( \frac{|\queries|^{-1/2}\| R \|_{F} \| A \|_{1 \to 2} \sqrt{\log(1/\delta)}}{ \eps n}\right).
\]
Optimizing this error bound over the choice of $R$ and $A$ gives error
proportional to the factorization norm
\[
\fnorm(\qmat) = \min \{|\queries|^{-1/2} \| R \|_{F} \| A \|_{1 \to 2} : \qmat = RA \},
\]
and the mechanism $\mech_{\fnorm}$ that runs $\mech_{R, A}$ with the
$R$ and $A$ achieving $\fnorm(\qmat)$ has sample complexity 
\[
\sc^{\ell_2^2}_{\eps,\delta}(\queries,\alpha) =  O\left(\frac{\fnorm(\qmat) \sqrt{\log(1/\delta) }}{\alpha}\right).
\]
This factorization mechanism is equivalent to the Gaussian noise
matrix mechanism in~\cite{LiHRMM10}.

\section{Non-Interactive Local DP: Linear Queries}

In this section we give details about our results for answering linear
queries in the local model. We first present the local approximate
factorization mechanism. Then we give an information theoretic lemma
that bounds the KL-divergence between the transcripts of mechanisms in
the local model on inputs drawn from mixtures of product
distributions. We then use a dual formulation of the approximate
$\fnorminf$ norm to construct distributions to use with the
information theoretic lemma in order to prove the lower bound in
Theorem~\ref{thm:main-local}. 

\subsection{Approximate Factorization}

Here we give details of the approximate factorization mechanism, which
was sketched in the introduction. Recall that the approximate
$\fnorminf$ norm is defined by 
\[
\fnorminf(\qmat,\alpha) = \min \{
\fnorminf(\widetilde\qmat) : \| \qmat - \widetilde\qmat \|_{1 \to
  \infty} \leq \alpha/2 \},
\]
where $\fnorminf(\widetilde\qmat) = \min \{ \| R \|_{2 \to \infty} \| A
\|_{1 \to 2} : \qmat = RA \}$. Matrices $\widetilde\qmat$, $R$, and
$A$ achieving the minimum to any degree of accuracy can be computed in
polynomial time via semidefinite programming, as shown
in~\cite{LinialS09}. Our main positive result shows that the sample
complexity of the corresponding approximate factorization mechanism is
bounded above by the approximate $\fnorminf$ norm. As sketched in the
introduction, this can be achieved via a local version of the Gaussian
noise mechanism, which can then be transformed into a purely private
mechanism using the results of~\cite{BunNS18}. This gives, however, a
slightly suboptimal bound, and, instead, we use the local randomizer
from~\cite{BlasiokBNS19}, which is a variant of a local randomizer
from~\cite{DJW}. The relevant properties of this local randomizer are
captured by the next lemma. We recall that a random variable $Z$ over
$\R$ is $\sigma$-subgaussian if $\E \exp(Z^2/\sigma^2) \le 2$, and a
random variable $Z$ over $\R^d$ is $\sigma$-subgaussian if
$\theta^\top Z$ is $\sigma$-subgaussian for every vector $\theta$ such
that $\|\theta\|_2 = 1$.
\begin{lemma}[\cite{BlasiokBNS19}]
  \label{lm:local-release}
  There exists an $\priv$-DP mechanism $\mech$
  which takes as input a single datapoint $x \in \R^\inner$ such that $\|x\|_2 \le
  1$, and outputs a random $Y_x := \mech(x) \in \R^\inner$  such that 
  \begin{enumerate}
  \item $Y_x$ can be sampled in time polynomial in $d$ on input $x$,
  \item $\ex{}{Y_x} = x$,
  \item $Y_x - x$ is $\sigma$-subgaussian with $\sigma =
    O(\priv^{-1})$.
  \end{enumerate}
\end{lemma}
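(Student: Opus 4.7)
The lemma is cited as established in BBNS, but my plan is to sketch the underlying PrivUnit-style construction and verify the three properties. First I would reduce to the unit-norm case: given $x$ with $\|x\|_2 \le 1$, run the unit-norm randomizer on $u := x/\|x\|_2$ with probability $\|x\|_2$, and output $0 \in \R^\inner$ otherwise. This randomized rounding preserves $\priv$-DP, is still unbiased, and (since scaling by a constant in $[0,1]$ does not increase the subgaussian parameter) preserves the subgaussian property. So the task reduces to the case $u \in S^{\inner-1}$.

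For such a unit $u$, I would set $Y_x := cV$, where $V$ is a random unit vector biased toward $u$. Concretely: sample a sign $S \in \{+1,-1\}$ with $\Pr[S=+1] = e^\priv/(e^\priv+1)$; then draw $V$ uniformly from the hemisphere $\{v \in S^{\inner-1} : S\langle v, u\rangle \ge 0\}$. By spherical symmetry around $u$, $\ex{}{V \mid S = \pm 1} = \pm\gamma u$ for an explicit $\gamma = \Theta(\inner^{-1/2})$, so $\ex{}{V} = \gamma\tanh(\priv/2) u$, and $c := 1/(\gamma\tanh(\priv/2))$ makes $\ex{}{Y_x}=u$; in the relevant $\priv \le 1$ regime this gives $c = \Theta(\sqrt{\inner}/\priv)$. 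Privacy is immediate: the density of $Y_x$ on the scaled sphere takes only two values differing by a factor of exactly $e^\priv$, depending only on which hemisphere $y/c$ belongs to. Polynomial-time sampling is standard: draw a standard Gaussian $G \in \R^\inner$, flip its sign if $S\langle G, u\rangle < 0$, and normalize.

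The only nontrivial step is the subgaussian bound. Fix any unit $\theta$ and decompose $\theta = au + bw$ with $w \in u^\perp$ of unit norm, so that $\theta^\top Y_x = c\bigl(a\langle V, u\rangle + b\langle V, w\rangle\bigr)$. Conditional on $S$, $V$ is uniform on a hemisphere of $S^{\inner-1}$, and classical concentration of measure on the sphere (Levy's lemma, or a direct MGF computation using the marginal density $\propto (1 - v^2)^{(\inner-3)/2}$ on $[-1,1]$) shows that both $\langle V, u\rangle$ and $\langle V, w\rangle$ are $O(\inner^{-1/2})$-subgaussian around their conditional means of $\pm\gamma$ and $0$ respectively. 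Since the $w$-marginal is symmetric under rotations of $V$ fixing $u$, the mixture over $S$ only shifts the $u$-component, so the mixture of the two conditionals remains $O(\inner^{-1/2})$-subgaussian. Multiplying by $c = \Theta(\sqrt{\inner}/\priv)$ then gives that $\theta^\top(Y_x - x)$ is $O(1/\priv)$-subgaussian, uniformly in $\theta$ and $x$.

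The main obstacle, and essentially the only place where real work happens, is pinning down the constants in the concentration of $\langle V, u\rangle$ on a hemisphere tightly enough that the $\sqrt{\inner}$ in $c$ cancels against the $1/\sqrt{\inner}$ in the subgaussian parameter of $V$. A naive bound using only $|\langle V, u\rangle| \le 1$ would give a subgaussian parameter of $c = \Theta(\sqrt{\inner}/\priv)$, which is exponentially loose; using Levy's lemma (or the explicit Beta marginal) is what makes the argument go through. Everything else—privacy, unbiasedness, efficient sampling—is routine once the construction is in place.
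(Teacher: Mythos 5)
The paper cites this lemma from \cite{BlasiokBNS19} without proof, so there is no in-paper argument to compare against; your sketch of the unit-sphere hemisphere randomizer, including the debiasing constant $c = \Theta(\sqrt{\inner}/\priv)$, the privacy calculation, and the $O(\inner^{-1/2})$-subgaussianity of $\langle V, u\rangle$ via the Beta marginal, is sound and is the standard route.

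The one genuine gap is your reduction from $\|x\|_2 \le 1$ to the unit sphere. Outputting the deterministic value $0 \in \R^\inner$ with data-dependent probability $1 - \|x\|_2$ is not $\priv$-DP: the output distribution has an atom of mass $1 - \|x\|_2$ at $0$, and taking $x$ with $\|x\|_2 = 1$ against $x'$ with $\|x'\|_2 < 1$, the set $A = \{0\}$ has $\Pr[\mech(x) \in A] = 0$ while $\Pr[\mech(x') \in A] > 0$, so the $\priv$-DP inequality fails in one direction. A correct randomized rounding keeps the output on the sphere: sample $b \in \{\pm 1\}$ with $\Pr[b = +1] = (1 + \|x\|_2)/2$ and run the sphere randomizer on $b \cdot x/\|x\|_2$. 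The output law is then a convex combination of sphere-randomizer laws, hence still satisfies the $e^{\priv}$ density-ratio bound; unbiasedness follows from $\E[b] = \|x\|_2$; and the subgaussian parameter degrades by only an additive $O(1)$ coming from the gap between the two conditional means $\pm x/\|x\|_2$, which is dominated by $O(1/\priv)$ for $\priv \le 1$. Equivalently, one can lift $x$ to the unit vector $(x, \sqrt{1 - \|x\|_2^2}) \in S^{\inner} \subset \R^{\inner + 1}$, run the sphere randomizer one dimension higher, and drop the last coordinate.
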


Given this local randomizer, and approximate factorizations, we are ready to prove our upper bound.

\begin{theorem}[Approximate Factorization Mechanism]\label{thm:local-apxfact}
There exists an $\priv$-LDP mechanism $\mech_{\fnorminf, \alpha}^{\textrm{loc}}$ such that, for any $\qsize$ statistical queries $\queries$ with workload matrix $\qmat$, we have 
\[
\sczinf(\mech_{\fnorminf, \alpha}^{\textrm{loc}},\queries,\alpha) =
O\left(\frac{\fnorminf(\qmat, \alpha/2)^2 \log \qsize}{ \priv^2 \alpha^2}\right),
\]
and the mechanism runs in time polynomial in $\dsize$, $\qsize$, and
$|\uni|$.
\end{theorem}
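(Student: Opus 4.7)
The plan is to construct $\mech_{\fnorminf, \alpha}^{\textrm{loc}}$ by combining an optimal approximate factorization of $\qmat$ with the subgaussian local randomizer from Lemma~\ref{lm:local-release}. First, apply the semidefinite programming routine of~\cite{LinialS09} (with polynomially small additive slack) to obtain a matrix $\widetilde\qmat$ and a factorization $\widetilde\qmat = RA$ satisfying $\|\qmat - \widetilde\qmat\|_{1 \to \infty} \le \alpha/2$ and $\|R\|_{2 \to \infty}\|A\|_{1 \to 2} = O(\fnorminf(\qmat, \alpha/2))$. Let $b := \|A\|_{1 \to 2}$, so that for each $\elem \in \uni$ the normalized column $A_{\cdot,\elem}/b$ has $\ell_2$-norm at most $1$.

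On input dataset $\ds = (\elem_1,\dots,\elem_\dsize)$, each user $i$ independently applies the local randomizer of Lemma~\ref{lm:local-release} to $a_i := A_{\cdot,\elem_i}/b$, producing $Y_i \in \R^\uni$; the server then outputs
\[
    \widehat{\queries}(\ds) \;:=\; \frac{b}{\dsize}\, R \sum_{i=1}^{\dsize} Y_i.
\]
Each local randomizer is $\priv$-DP, and the server step is post-processing, so the whole mechanism is $\priv$-LDP. The per-user cost is polynomial by property~(1) of Lemma~\ref{lm:local-release}, the factorization is polynomial-time computable by~\cite{LinialS09}, and aggregation is trivially efficient, which gives the claimed runtime.

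For accuracy, decompose
\[
    \widehat{\queries}(\ds) - \queries(\ds) \;=\; \tfrac{1}{\dsize}(\widetilde\qmat - \qmat)\histo \;+\; \tfrac{b}{\dsize} R \sum_{i=1}^{\dsize}(Y_i - a_i).
\]
The first (bias) term has $\ell_\infty$-norm at most $\|\qmat - \widetilde\qmat\|_{1 \to \infty} \cdot \|\histo\|_1/\dsize \le \alpha/2$. For the second (noise) term, fix any row $R_{j,\cdot}$ of $R$; property~(3) of Lemma~\ref{lm:local-release} says $Y_i - a_i$ is $O(1/\priv)$-subgaussian in every unit direction, so $\langle R_{j,\cdot}, Y_i - a_i\rangle$ is $O(\|R_{j,\cdot}\|_2/\priv)$-subgaussian. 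Summing $\dsize$ independent copies and scaling by $b/\dsize$ makes the $j$-th entry of the noise $O(b\,\|R_{j,\cdot}\|_2/(\priv\sqrt{\dsize}))$-subgaussian, which is $O(b\,\|R\|_{2\to\infty}/(\priv\sqrt{\dsize}))$-subgaussian uniformly in $j$. The standard expected-maximum bound for $\qsize$ subgaussian random variables then gives
\[
    \ex{}{\|\widehat{\queries}(\ds) - \queries(\ds)\|_\infty} \;\le\; \tfrac{\alpha}{2} + O\!\left(\tfrac{\fnorminf(\qmat,\alpha/2)\sqrt{\log \qsize}}{\priv\sqrt{\dsize}}\right),
\]
and forcing the right-hand side to be at most $\alpha$ yields the desired $\dsize = O(\fnorminf(\qmat,\alpha/2)^2 \log \qsize/(\priv^2 \alpha^2))$.

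There is no serious obstacle in this argument; the only mildly delicate point is propagating the subgaussian parameter through the linear map $R$ row by row, which is direct from the definition of a subgaussian random vector recalled before Lemma~\ref{lm:local-release}. Everything else is bookkeeping: verifying $\priv$-LDP by post-processing, bounding the $\ell_\infty$ bias by the definition of $\|\cdot\|_{1\to\infty}$, and applying the standard subgaussian max inequality. The ``approximate'' vs.\ ``exact'' factorization terminology matters only in the bias term, which has been isolated cleanly by writing $\qmat = \widetilde\qmat + (\qmat-\widetilde\qmat)$ at the outset.
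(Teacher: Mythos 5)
Your proposal is correct and follows essentially the same route as the paper: compute an approximate factorization $\widetilde\qmat = RA$, apply the subgaussian local randomizer of Lemma~\ref{lm:local-release} to each user's (normalized) column of $A$, reconstruct with $R$, and split the error into a deterministic bias bounded by $\|\qmat-\widetilde\qmat\|_{1\to\infty}$ and a subgaussian fluctuation term controlled via $\|R\|_{2\to\infty}$ and a max-of-subgaussians bound. The only difference is cosmetic (you carry the scaling $b=\|A\|_{1\to 2}$ explicitly whereas the paper normalizes $A$ to have unit column norms), and the one small slip is that $Y_i$ lives in $\R^d$ for the inner factorization dimension $d$, not $\R^\uni$.
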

\begin{proof}
  Let $\widetilde{\qmat}$, $R$ and $A$ be such that
  $\|\widetilde{\qmat} - \qmat\|_{1 \to \infty} \le \frac{\alpha}{2}$,
  $\widetilde{\qmat} = RA$, 
  $\|R\|_{2 \to \infty} = \fnorminf(\qmat, \alpha/2)$, 
  and $\|A\|_{1 \to 2}= 1$. 
  Such matrices always exist, because if $R$ and $A$ achieve
  $\fnorminf(\qmat, \alpha/2)$, so do $tR$ and $A/t$, and we can
  choose $t$ so that $\|A\|_{1 \to 2}= 1$. Moreover, $R$ and
  $A$ can be computed in polynomial time via semidefinite programming,
  as noted above.

  Let $\ds = (\dsrow_1,\dots,\dsrow_n)$ be the dataset, and let
  $\hist$ be its histogram.  Each agent $i$ holds a data point
  $\dsrow_i$, and the histogram for the single point dataset
  containing $\dsrow_i$ is $\hist_i := e_{\dsrow_i}$, i.e.~the standard
  basis vector of $\R^\uni$ corresponding to $\dsrow_i$. Agent $i$
  releases \( \mech_i(\dsrow_i) = Y_{A\hist_i}, \) where $Y_{A\hist_i}$ is as
  defined in Lemma~\ref{lm:local-release}.  Then $Y = \frac{1}{n}
  \sum_{i=1}^n \mech_i(\dsrow_i)$ has expectation 
  \[
  \frac1n \sum_{i = 1}^n A\hist_i = \frac1n A \sum_{i = 1}^n \hist_i 
  = \frac1n A\hist.
  \]
  Moreover, since $Y - \frac1n A\hist = \frac1n \sum_{i =
    1}^n{(Y_{A\hist_i} - A\hist_i)}$ is the average of $n$ independent
  $\sigma$-subgaussian random variables, where $\sigma =
  O(\priv^{-1})$ is as in Lemma~\ref{lm:local-release}, $Y - \frac1n A\hist$ is
  $O\left(\frac{\sigma}{\sqrt{n}}\right)$-subgaussian (see
  e.g.~\cite[Proposition 2.6.1.]{Vershynin}). 

  Post-processing $Y$ by our reconstruction matrix $R$ gives the
  output of our mechanism, namely $\mech(\ds) = RY$. Then $\E[RY] =
  RA\hist = \frac1n\widetilde{\qmat}\hist$, and we have
  \begin{equation}\label{eq:error-expect}
    \left\|\E[RY] - \frac1n \qmat h\right\|_\infty = 
  \frac1n\|(\widetilde{\qmat} - \qmat)\hist\|_\infty 
  \le \frac1n\|\widetilde{\qmat} - \qmat\|_{1\to\infty} \|\hist\|_1
  \le \frac{\alpha}{2}.
  \end{equation}
  Every coordinate of $R(Y - \frac1n A\hist) = RY -
  \E[RY]$ is the inner product of $Y - \frac1n
  A\hist$ and a row of $R$, the latter having $\ell_2$ norm at most
  $\|R\|_{2 \to \infty}$. Since $Y - \frac1n A\hist$ is
  $O\left(\frac{\sigma}{\sqrt{n}}\right)$-subgaussian, every
  coordinate $(RY - \E[RY])_{\query}$ for every
  $\query \in \queries$, is
  $O\left(\frac{\sigma\|R\|_{2\to\infty}}{\sqrt{n}}\right)$-subgaussian. It
  is then a standard fact (see e.g.~\cite[Exercise~2.5.10]{Vershynin})
  that
  \[
  \E\|RY - \E[RY]\|_\infty
  = O\left(\frac{\sigma\|R\|_{2\to\infty}\sqrt{\log \qsize}}{\sqrt{n}}\right)
  = O\left(\frac{\fnorminf(\qmat, \alpha/2)\sqrt{\log \qsize}}{\priv\sqrt{n}}\right).
  \]
  Combining with \eqref{eq:error-expect}, and applying the triangle
  inequality, we get
  \begin{align*}
  \E \|\mech(\ds) - \qmat\hist\|_\infty
  &= \E\|RY - \qmat\hist\|_\infty \\
  &\le \E\|RY - \E[RY]\|_\infty +   \|\E[RY] - \tfrac1n \qmat h\|_\infty\\
  &= 
  \frac{\alpha}{2} + O\left(\frac{\fnorminf(\qmat, \alpha/2)\sqrt{\log \qsize}}{\priv\sqrt{n}}\right).
  \end{align*}
  The proof is completed by setting $n$ so that the second term is at
  most $\frac{\alpha}{2}$.
\end{proof}

\subsection{Bounding KL-Divergence}

Our lower bound will rely on the construction, based on a workload
$\queries$, of families $\{\dista_1,\dots,\dista_\qsize\}$ and
$\{\distb_1,\dots,\distb_\qsize\}$ of distributions on $\uni$.
Together with these, we consider a distribution $\pi$ over $[\qsize]$.
For any $v \in [\qsize]$, let $\dista_v^\dsize$ be the product
distribution induced by sampling $n$ times independently from
$\dista_v$, and let $\dista_\pi^\dsize$ be the mixture $\sum_{v =
  1}^{\qsize}\pi(v) \dista^\dsize_v$. Define $\distb_v^\dsize$ and
$\distb_\pi^\dsize$ analogously. Note that $\dista_\pi^\dsize$ and
$\distb_\pi^\dsize$ are \emph{not} product distributions, but mixtures
of such distributions.  For a mechanism $\mech$ in the local model,
and a probability distribution $\nu$ on $\uni^\dsize$, we use
$\trans(\nu)$ to denote the distribution on random transcripts
$\trans(\ds)$ when $\ds$ is sampled from $\nu$. Similarly, if $\nu$ is
a distribution on $\uni$, we use the notation $\mech_i(\nu)$ for the
distribution of $\mech_i(\dsrow)$, when $\dsrow$ is sampled from
$\nu$.

We approach the task of showing that
$\dista_1,\dots,\dista_\qsize$ and $\distb_1,\dots,\distb_\qsize$
are ``hard'' distributions on which to evaluate $\queries$
 in two steps.
On the one hand, we wish to argue that
being able to estimate $\queries$ on the distributions
$\dista_1,\dots,\dista_\qsize$ and $\distb_1,\dots,\distb_\qsize$
enables us to distinguish between $\dista_\pi^\dsize$ and $\distb_\pi^\dsize$.
On the other hand,
we show a lower bound on the number of samples required
for a locally private mechanism
to distinguish between $\dista_\pi^\dsize$ and $\distb_\pi^\dsize$.
The second of these objectives will be met by way of
the following bound on KL-divergence. Similar bounds were proved
in~\cite{DJW,DuchiR18} when only one of the two distributions is a
mixture of products, and our proof is similar to the proof of
Theorem~2 in \cite{DuchiR18}.
Our proof is in Appendix~\ref{ap:kl-div}.

\begin{lemma}\label{lm:kl-div}
Let $\eps \in (0,1]$, and let $\mech$ be an $\priv$-DP mechanism in
the local model. Then
\[
\div(\trans(\dista_\pi^\dsize)\|\trans(\distb_\pi^\dsize)) \le O(n \priv^2)\cdot 
\max_{f \in \R^\uni: \|f\|_\infty \le 1} \ex{V \sim \pi}{\left(\ex{x \sim \dista_V}{f_x} - \ex{x \sim \distb_V}{f_x}\right)^2}.
\]
In matrix notation, define
the matrix $M \in \R^{[K] \times \uni}$ by $m_{v, x} = (\dista_v(x) - \distb_v(x))$.
Then
\[
    \div(\trans(\dista_\pi^\dsize)\|\trans(\distb_\pi^\dsize)) \le O(n \priv^2)\cdot 
    \|M\|_{\ell_\infty \to L_2(\pi)}^2.
\]
\end{lemma}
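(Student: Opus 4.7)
The plan is to combine three ingredients: data processing (equivalently, joint convexity) of KL divergence, tensorization for product distributions, and a per-randomizer chi-squared bound that exploits $\eps$-DP.

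First, I would note that $P := \trans(\dista_\pi^\dsize)$ and $Q := \trans(\distb_\pi^\dsize)$ are mixtures of product distributions: $P = \sum_v \pi(v) P_v$ with $P_v = \bigotimes_{i=1}^\dsize \mech_i(\dista_v)$, and analogously $Q = \sum_v \pi(v) Q_v$. Joint convexity of $\div$, followed by tensorization for products, yields
\[
\div(P \| Q) \le \sum_v \pi(v) \div(P_v \| Q_v) = \sum_{i=1}^\dsize \ex{V \sim \pi}{\div(\mech_i(\dista_V) \| \mech_i(\distb_V))}.
\]
So it suffices to show that each summand is $O(\eps^2) \|M\|_{\ell_\infty \to L_2(\pi)}^2$.

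The core step is a per-randomizer chi-squared bound. For any local randomizer $\mech$ and any fixed reference distribution $\rho$ on $\uni$, the $\eps$-DP property guarantees that $r(x, z) := \mech(x)(z)/\mech(\rho)(z) \in [e^{-\eps}, e^\eps]$ for all $x, z$, with $\ex{x \sim \rho}{r(x,z)} = 1$. Setting $s(x, z) := (r(x,z) - 1)/(e^\eps - 1)$ gives $\|s(\cdot, z)\|_\infty \le 1$ and $\mech(\mu)(z) - \mech(\nu)(z) = (e^\eps - 1)\,\mech(\rho)(z)\cdot \langle s(\cdot, z), \mu - \nu\rangle$ for any input distributions $\mu, \nu$ on $\uni$. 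Combining $\div \le \chd$ with the pointwise bound $\mech(\nu)(z) \ge e^{-\eps} \mech(\rho)(z)$ then yields
\[
\div(\mech(\mu) \| \mech(\nu)) \le O(\eps^2) \int \mech(\rho)(z)\,\langle s(\cdot, z), \mu - \nu\rangle^2\, dz.
\]
I would then substitute $\mu = \dista_V$, $\nu = \distb_V$, take expectation over $V \sim \pi$, and swap the order of integration. For each fixed $z$, $\ex{V \sim \pi}{\langle s(\cdot, z), \dista_V - \distb_V\rangle^2} = \|M s(\cdot, z)\|_{L_2(\pi)}^2 \le \|M\|_{\ell_\infty \to L_2(\pi)}^2$ by definition of the operator norm, since $\|s(\cdot,z)\|_\infty \le 1$. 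Integrating the probability density $\mech(\rho)$ in $z$ then leaves $O(\eps^2) \|M\|_{\ell_\infty \to L_2(\pi)}^2$ per randomizer, and summing over $i = 1, \dots, \dsize$ completes the proof.

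The subtle point is that one must not drop the $z$-dependence of $s$. A direct application of the naive bound $\div(\mech(\mu) \| \mech(\nu)) \le O(\eps^2) \|\mu - \nu\|_1^2$ followed by averaging over $V$ would give $\ex{V \sim \pi}{\|\dista_V - \distb_V\|_1^2}$ on the right-hand side, which is always at least $\|M\|_{\ell_\infty \to L_2(\pi)}^2$ and can be much larger, because the supremum over $f$ with $\|f\|_\infty \le 1$ sits inside, rather than outside, the expectation over $V$. Retaining the bounded test function $s(\cdot, z)$ in the integrand and delaying the supremum until after averaging over $V$ is precisely what recovers the correct operator norm and is the main technical content of the argument.
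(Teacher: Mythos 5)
Your proposal is correct and follows essentially the same route as the paper's proof: convexity plus tensorization, the bound $\div \le \chd$, a fixed reference distribution (your $\rho$ is the paper's $\dist_0$) to control the denominator via $\eps$-DP, and the bounded test functions $s(\cdot,z)$ (the paper's $f^z/(e^\eps-1)$) that keep the supremum over $f$ outside the expectation over $V$. The "subtle point" you flag is indeed exactly the crux of the paper's argument.
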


Being able to distinguish between $\trans(\dista_\pi^\dsize)$ and
$\trans(\distb_\pi^\dsize)$ with constant probability implies, by
Pinsker's inequality, that
$\div(\trans(\dista_\pi^\dsize)\|\trans(\distb_\pi^\dsize)) \ge \Omega(1)$.
Together with Lemma \ref{lm:kl-div}, this would imply
\[
    n =
    \Omega \left(
        \frac{ 1 }{ \priv^2 \cdot \|M\|_{\ell_\infty \to L_2(\pi)}^2 }
    \right).
\]
Hence, our goal will be to define our distributions
so that that $\|M\|_{\ell_\infty \to L_2(\pi)}^2$ is small
while still meeting the requirement that
estimating the queries $\queries$ allows us to distinguish
between $\dista_\pi^\dsize$ and $\distb_\pi^\dsize$.

\subsection{Duality for \texorpdfstring{$\gamma_2(\qmat,
    \alpha)$}{gamma2(W,alpha)} and the Dual Norm}

Recall that our goal is to prove a lower bound on the sample
complexity of mechanisms in the local model in terms of the
approximate $\fnorminf$ norm. We will do so via Lemma~\ref{lm:kl-div},
and the distributions $\{\dista_1,\dots,\dista_\qsize\}$ and
$\{\distb_1,\dots,\distb_\qsize\}$ will serve as a certificate of a
lower bound on the sample complexity. On the other hand, convex
duality can certify a lower bound on the approximate $\fnorminf$
norm. In the proof of our lower bounds, we will show that these dual
certificates for which the approximate $\fnorminf$ norm is large can be
turned into hard families of distributions to use in
Lemma~\ref{lm:kl-div}.

The key duality statement follows. This dual formulation for
the $\fnorminf(\qmat, \alpha)$ was also given
in~\cite{LinialS09} for the special case when $\qmat$ has entries in
$\{-1, +1\}$.\footnote{Note that in~\cite{LinialS09}, Linial and
  Shraibman use the notation $\fnorminf^\alpha(\qmat) =
  \inf\{\fnorminf(\widetilde{\qmat}): 1 \le \widetilde{w}_{ij} w_{ij}
  \le \alpha \ \ \forall i, j\}$. For sign matrices $\qmat$ this is
  equal to $\frac{\alpha+1}{2}\fnorminf(\qmat, (\alpha-1)/(\alpha+1))$ in our
  notation.} For completeness, here we rederive it in Appendix~\ref{ap:duality} by directly
applying the hyperplane separator theorem. 

\begin{lemma}\label{lm:duality}
For any $\qsize \times \usize$ matrix $\qmat$ and $\alpha$, 
\[
\fnorminf(\qmat, \alpha) 
= \max
\frac{\qmat\bullet U - \alpha\|U\|_1}{\fnorminf^\ast(U)},
\]
where the max is over $\qsize \times \usize$ matrices $U\neq 0$,
and $\fnorminf^*$ is the dual norm to $\fnorminf$, given by 
\[
\fnorminf^\ast(U) = \max\{U\bullet V: \fnorminf(V) \le 1\}
= \max \sum_{i=1}^\qsize\sum_{j = 1}^\usize u_{i,j} y_i^\top z_j,
\]
where $a_1, \ldots, a_\qsize$ and $b_1, \ldots, b_\usize$ range over
vectors with unit $\ell_2$ norm in $\R^{\qsize + \usize}$.
\end{lemma}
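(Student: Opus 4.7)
The plan is to view $\fnorminf(\qmat,\alpha)$ as an infimal convolution and invoke the hyperplane separation theorem. Note first that $\fnorminf$ is a norm on $\R^{\qsize\times\usize}$: nonnegativity and homogeneity are immediate, the triangle inequality follows from block-concatenating factorizations, and positive definiteness holds because $\fnorminf(M)=0$ forces a factorization with $\|R\|_{2\to\infty}\|A\|_{1\to 2}=0$, hence $M=RA=0$. Consequently, the sublevel set $S_t=\{V:\fnorminf(V)\le t\}$ is a compact convex set, as is the ball $B=\{E:\|E\|_{1\to\infty}\le \alpha/2\}$, and the key reformulation is that $\fnorminf(\qmat,\alpha)\le t$ if and only if $\qmat\in S_t + B$.

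To apply separation, I need the support functions of $S_t$ and $B$. Since $\|\cdot\|_{1\to\infty}$ is the entrywise $\ell_\infty$ norm, its dual under the trace inner product $\bullet$ is the entrywise $\ell_1$ norm, giving $\sup_{E\in B} U\bullet E = (\alpha/2)\|U\|_1$. For $S_t$, I would first verify the second identity in the lemma. Unpacking the definition, $\fnorminf(V)\le 1$ is equivalent to writing $V_{ij}=a_i^\top b_j$ for vectors $a_i,b_j$ with $\|a_i\|_2,\|b_j\|_2\le 1$: take $a_i$ to be the $i$-th row of an optimal $R$ and $b_j$ the $j$-th column of $A$, and rescale so $\|R\|_{2\to\infty}=\|A\|_{1\to 2}$. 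The inner dimension is free, and embedding into $\R^{\qsize+\usize}$ is harmless. Substituting into $\fnorminf^\ast(U)=\sup_{\fnorminf(V)\le 1} U\bullet V$ and observing that a linear objective over a product of Euclidean balls is maximized on unit vectors yields the displayed formula for $\fnorminf^\ast$, together with $\sup_{V\in S_t} U\bullet V = t\,\fnorminf^\ast(U)$.

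The hyperplane separation theorem then gives $\qmat\in S_t + B$ iff $U\bullet\qmat \le t\,\fnorminf^\ast(U) + (\alpha/2)\|U\|_1$ for every $U$. Taking the infimum over $t$ for which this holds for all $U$, and rearranging over $U\neq 0$ (the kernel of $\fnorminf^\ast$ is trivial since it is itself a norm on a finite-dimensional space), yields
\[
\fnorminf(\qmat,\alpha) \;=\; \sup_{U\neq 0}\; \frac{\qmat\bullet U - (\alpha/2)\|U\|_1}{\fnorminf^\ast(U)},
\]
which is the stated identity modulo a convention on whether $\alpha$ in the lemma denotes the $\|\cdot\|_{1\to\infty}$ distortion budget or twice it; the same recipe delivers either. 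Attainment of the supremum follows from compactness once we normalize to $\fnorminf^\ast(U)=1$. I do not foresee any serious obstacle here: the only fiddly bookkeeping is keeping track of the constant multiplying $\|U\|_1$ and confirming closedness of the Minkowski sum so that strict separation can be turned into a non-strict inequality when $\qmat$ is on the boundary of $S_t+B$.
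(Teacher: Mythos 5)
Your proposal is correct and follows essentially the same route as the paper's: both reduce the identity to the hyperplane separation theorem, with your reformulation "$\fnorminf(\qmat,\alpha)\le t \iff \qmat \in S_t + B$" being an equivalent packaging of the paper's argument that the sublevel set $\{V : \fnorminf(V)\le t\}$ and the ball $\{C:\|\qmat-C\|_{1\to\infty}\le\alpha\}$ are disjoint when $\fnorminf(\qmat,\alpha)>t$, and with the same support-function/H\"older computations appearing in both. The factor-of-two ambiguity you flag in the constant multiplying $\|U\|_1$ is real — the paper's appendix proof takes the distortion budget to be $\alpha$ rather than the $\alpha/2$ used in the definition of $\fnorminf(\qmat,\alpha)$ in the body of the paper — and your remark that the recipe delivers either convention is the right way to handle it.
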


The expression 
\[
\fnorminf^\ast(U) = \max \sum_{i=1}^\qsize\sum_{j = 1}^\usize u_{i,j} a_i^\top b_j,
\]
with the max over unit vectors $a_1, \ldots a_\qsize$ and $b_1,
\ldots, b_\usize$ can be easily formulated as a semidefinite program,
and, in fact, is exactly the semidefinite program that appears in
Grothendieck's inequality
(see,e.g.,\cite{KhotNaor-Grothendieck,Pisier-Grothendieck}). It is
straightforward to check (just take all the $a_i$ and $b_j$ co-linear)
that
\begin{equation}\label{eq:gammastar-opnorm}
\fnorminf^\ast(U) \ge \max\{y^\top U z: y \in \{-1, 1\}^m, z \in \{-1, 1\}^N\}
= \|U\|_{\infty \to 1}.
\end{equation}
Moreover, Grothendieck showed that this inequality is always tight up
to a universal constant~\cite{Grothendieck53}, although this fact will
not be used here. Instead, we will need the following lemma, which can
be derived from SDP duality, and is also due to Grothendieck. For a
proof using the Hahn-Banach theorem, see~\cite{Pisier-Grothendieck}.

\begin{lemma}[\cite{Grothendieck53}]\label{lm:gammastar-dual}
  For any $\qsize\times\usize$ matrix $U$, $\fnorminf^\ast(U) \le t$
  if and only if there exist diagonal matrices $P \in \R^{\qsize
    \times \qsize}$ and $Q \in \R^{\usize \times \usize}$, and a
  matrix $\widetilde{U} \in \R^{\qsize\times\usize}$ such that $\tr(P^2) = \tr(Q^2) = 1$,  $U =  P\widetilde{U} Q$, and $\|\widetilde{U}\|_{2 \to 2} \le t$.
\end{lemma}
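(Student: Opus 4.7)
My plan is to prove each direction of the equivalence separately, using the semidefinite programming characterization of $\fnorminf^\ast$ alluded to in the text preceding the lemma. The forward direction ($\Leftarrow$) is a short Cauchy--Schwarz computation; the reverse direction ($\Rightarrow$) rests on taking the dual of the SDP that defines $\fnorminf^\ast(U)$ and invoking strong duality.

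For the forward direction, suppose $U = P\widetilde{U}Q$ with $\tr(P^2) = \tr(Q^2) = 1$ and $\|\widetilde{U}\|_{2 \to 2} \le t$. Fix unit vectors $a_1,\dots,a_\qsize,b_1,\dots,b_\usize \in \R^{\qsize+\usize}$, write $p_i = P_{ii}$, $q_j = Q_{jj}$, and for each coordinate index $e$ define $(\alpha_e)_i = p_i (a_i)_e$ and $(\beta_e)_j = q_j (b_j)_e$. A direct rearrangement gives
\[
\sum_{i,j} u_{i,j}\, a_i^\top b_j
= \sum_{i,j} \widetilde{u}_{i,j} (p_i a_i)^\top (q_j b_j)
= \sum_e \alpha_e^\top \widetilde{U}\, \beta_e,
\]
and bounding each summand by $\|\widetilde{U}\|_{2\to 2}\|\alpha_e\|_2\|\beta_e\|_2$, applying Cauchy--Schwarz across $e$, and using $\sum_e \|\alpha_e\|_2^2 = \sum_i p_i^2 \|a_i\|_2^2 = \tr(P^2) = 1$ (with the analogous identity for $\beta$), shows the sum is at most $\|\widetilde{U}\|_{2\to 2} \le t$. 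Maximizing over unit vectors then yields $\fnorminf^\ast(U) \le t$.

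For the reverse direction, I would write $\fnorminf^\ast(U)$ as the SDP
\[
\max\Bigl\{ U \bullet X : \begin{pmatrix} X_{aa} & X \\ X^\top & X_{bb} \end{pmatrix} \succeq 0,\ (X_{aa})_{ii} \le 1,\ (X_{bb})_{jj} \le 1 \Bigr\},
\]
observe that Slater's condition holds (take $X = 0$, $X_{aa} = X_{bb} = \tfrac{1}{2}I$), and take its dual to obtain
\[
\fnorminf^\ast(U) = \min\Bigl\{ \tr(P^2) + \tr(Q^2) : P, Q \text{ diagonal},\ \begin{pmatrix} P^2 & U/2 \\ U^\top/2 & Q^2 \end{pmatrix} \succeq 0\Bigr\}.
\]
A standard Schur-complement equivalence (extended to singular $P, Q$ by a limiting argument on $P^2 + \varepsilon I$ and $Q^2 + \varepsilon I$) converts the PSD block constraint into $U = 2PMQ$ for some $M$ with $\|M\|_{2\to 2} \le 1$. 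Since $(P,Q) \mapsto (sP, Q/s)$ preserves feasibility while changing the objective to $s^2 \tr(P^2) + s^{-2} \tr(Q^2)$, optimality forces $\tr(P^2) = \tr(Q^2)$, a common value equal to $t^\ast/2$ where $t^\ast := \fnorminf^\ast(U) \le t$. Replacing $P$ by $P/\sqrt{t^\ast/2}$ and $Q$ by $Q/\sqrt{t^\ast/2}$ normalizes $\tr(P^2) = \tr(Q^2) = 1$, and setting $\widetilde{U} = t^\ast M$ yields $U = P \widetilde{U} Q$ with $\|\widetilde{U}\|_{2\to 2} = t^\ast \le t$. The degenerate case $t^\ast = 0$ (equivalently $U = 0$) is handled by taking $P, Q$ proportional to the identity and $\widetilde{U} = 0$.

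The principal obstacle is the reverse direction: formulating the SDP and its dual carefully, verifying strong duality, and extracting the factorization $U = 2PMQ$ from the PSD block constraint when $P$ or $Q$ is singular. Once these steps are in place, the rescaling to normalize the traces to $1$ is routine.
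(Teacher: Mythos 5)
Your proof is correct and follows exactly the route the paper itself indicates: the paper does not prove Lemma~\ref{lm:gammastar-dual} (it cites Grothendieck and remarks that the statement ``can be derived from SDP duality''), and your argument carries out precisely that derivation — the Cauchy--Schwarz computation for the easy direction and strong SDP duality plus a Schur-complement factorization for the converse. The only cosmetic slips are that $\|\widetilde{U}\|_{2\to2} = t^\ast\|M\|_{2\to2} \le t^\ast$ rather than $= t^\ast$, and that the singular case is handled even more directly by noting that a zero diagonal entry of $P^2$ or $Q^2$ forces the corresponding row or column of $U$ to vanish.
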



By \eqref{eq:gammastar-opnorm}, the $\fnorminf^\ast(\cdot)$ norm is an upper
bound on the $\|\cdot\|_{\infty \to 1}$ norm. We use
Lemma~\ref{lm:gammastar-dual} to show a similar upper bound on the
$\|\cdot\|_{\infty \to 2}$, which allows projecting out some of the
rows of the matrix, but is quantitatively stronger. The reason we are
interested in the $\|\cdot \|_{\infty \to 2}$ norm is that this is the
norm that appears in the statement of Lemma~\ref{lm:kl-div}.

\begin{lemma}\label{lm:gammastar-inftyto2}
    For any matrix $U \in \R^{\qsize \times \usize}$,
    there exists a set $S \subseteq [\qsize]$ of size $|S| \ge \frac{\qsize}{2}$ such that
    \(
        \sqrt{\frac{\qsize}{2}} \|\Pi_SU\|_{\infty \to 2}
        \le \gamma_2^\ast(U),
    \)
    where $\Pi_S$ is the projection onto the subspace $\R^S$.
\end{lemma}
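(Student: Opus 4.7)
My plan is to extract the bound directly from the Grothendieck-type factorization supplied by Lemma~\ref{lm:gammastar-dual}. Set $t = \fnorminf^\ast(U)$. Applying that lemma, I obtain diagonal matrices $P = \mathrm{diag}(p_1, \ldots, p_\qsize)$ and $Q = \mathrm{diag}(q_1, \ldots, q_\usize)$ with $\tr(P^2) = \tr(Q^2) = 1$, and a matrix $\widetilde{U}$ with $\|\widetilde{U}\|_{2 \to 2} \le t$, such that $U = P\widetilde{U}Q$.

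The next step is to identify rows of $U$ on which the diagonal rescaling $P$ does not concentrate too much mass. Since $\sum_{i=1}^{\qsize} p_i^2 = 1$, a Markov/pigeonhole bound shows that at most $\qsize/2$ indices $i$ can have $p_i^2 > 2/\qsize$. Setting $S := \{i \in [\qsize] : p_i^2 \le 2/\qsize\}$ therefore yields $|S| \ge \qsize/2$, which meets the cardinality requirement of the lemma.

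To bound $\|\Pi_S U\|_{\infty \to 2}$, I would fix $x \in \R^\usize$ with $\|x\|_\infty \le 1$ and compute, using the factorization $U = P\widetilde{U}Q$ and the definition of $S$,
\[
\|\Pi_S U x\|_2^2 = \sum_{i \in S} p_i^2 (\widetilde{U} Q x)_i^2 \le \frac{2}{\qsize} \|\widetilde{U} Q x\|_2^2 \le \frac{2 t^2}{\qsize} \|Q x\|_2^2.
\]
The rightmost factor obeys $\|Q x\|_2^2 = \sum_{j=1}^{\usize} q_j^2 x_j^2 \le \|x\|_\infty^2 \tr(Q^2) \le 1$. Taking the supremum over $x$ and a square root gives $\|\Pi_S U\|_{\infty \to 2} \le t \sqrt{2/\qsize}$, which rearranges to the claimed inequality.

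There is no real obstacle: once Lemma~\ref{lm:gammastar-dual} is in hand, the rest is a one-line truncation argument. The only conceptual point worth flagging is why this sharpens \eqref{eq:gammastar-opnorm}: the bound $\fnorminf^\ast(U) \ge \|U\|_{\infty \to 1}$ handles every row uniformly, but by discarding the at most $\qsize/2$ rows where $P$ puts large mass, one can upgrade the weaker $\ell_1$ guarantee on the output to an $\ell_2$ guarantee at the cost of only a factor of $\sqrt{2/\qsize}$, which is exactly the strengthening needed to feed into Lemma~\ref{lm:kl-div}.
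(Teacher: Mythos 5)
Your proof is correct and follows essentially the same route as the paper's: both invoke Lemma~\ref{lm:gammastar-dual} to factor $U = P\widetilde{U}Q$, select $S$ by a Markov bound on $\tr(P^2)=1$, and use $\|Qx\|_2 \le \sqrt{\tr(Q^2)}\,\|x\|_\infty$ via H\"older. The only difference is presentational (a direct coordinate computation versus the paper's chain of operator-norm inequalities), so there is nothing further to add.
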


The next lemma slightly strengthens Lemma~\ref{lm:gammastar-inftyto2}
to allow for weights on the rows of the matrix. This is the key fact
about the $\fnorminf^*$ norm that we need for our lower bounds. 

\begin{lemma}\label{lm:gammastar-inftyto2-nonuniform}
  Let $U$ and $M$ be $\qsize \times \usize$ matrices, and let $\pi$ be
  a probability distribution on $[\qsize]$ where, for any $i \in
  [\qsize], j \in [\usize]$, we have $u_{i,j} = \pi(i)m_{i,j}$. Then
  there exists a probability distribution
  $\widehat{\pi}$ on $[\qsize]$,
  with support contained in the support of $\pi$,
  such that
    \(
        \| M \|_{\ell_\infty \to L_2(\widehat{\pi})}
        \le 4\fnorminf^\ast(U).
    \)
\end{lemma}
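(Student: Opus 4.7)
The plan is to use the dual characterization in Lemma~\ref{lm:gammastar-dual} to build $\widehat{\pi}$ explicitly from $\pi$ and the dual factorization of $U$. Applying that lemma, I fix diagonal matrices $P \in \R^{\qsize \times \qsize}$ and $Q \in \R^{\usize \times \usize}$ and a matrix $\widetilde{U}$ with $U = P\widetilde{U}Q$, $\tr(P^2) = \tr(Q^2) = 1$, and $\|\widetilde{U}\|_{2 \to 2} = t := \fnorminf^\ast(U)$.

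The first key step is a short exchange argument identifying the support of $P$ in an optimal factorization: if $P_{i_0 i_0} \neq 0$ for some $i_0$ with $U_{i_0} = 0$, then zeroing out $P_{i_0 i_0}$ and rescaling $P \mapsto P/\|P\|_F$, $\widetilde{U} \mapsto \|P\|_F \widetilde{U}$ yields a valid factorization with $\|\widetilde{U}\|_{2 \to 2}$ strictly smaller than $t$, contradicting optimality. Hence $T := \{i : P_{ii} \neq 0\}$ equals $\mathrm{supp}(\pi) \cap \{i : M_i \neq 0\}$, and in particular is contained in $\mathrm{supp}(\pi)$.

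Next I construct $\widehat{\pi}$ in two cases. If $\mathrm{supp}(\pi) \setminus T$ is nonempty, I place $\widehat{\pi}$ on any single element of this set; since $M_i = 0$ there, $\|M\|_{\ell_\infty \to L_2(\widehat{\pi})} = 0$ and the bound is trivial. Otherwise $T = \mathrm{supp}(\pi)$, and I set
\[
\widehat{\pi}(i) = \frac{c^2\,\pi(i)^2}{P_{ii}^2} \quad \text{for } i \in T, \qquad c := \Big(\sum_{i \in T} \pi(i)^2/P_{ii}^2\Big)^{-1/2}.
\]
The Cauchy--Schwarz inequality gives $\big(\sum_T \pi_i\big)^2 \le \big(\sum_T \pi_i^2/P_{ii}^2\big)\big(\sum_T P_{ii}^2\big) = 1/c^2$, and combined with $\sum_T \pi_i = 1$ and $\sum_T P_{ii}^2 = \tr(P^2) = 1$ this yields $c \le 1$.

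Setting $\sigma_i = \sqrt{\widehat{\pi}(i)}$, I have $\|M\|_{\ell_\infty \to L_2(\widehat{\pi})} = \|D_\sigma M\|_{\infty \to 2}$. Using $M_i = U_i/\pi_i = (P_{ii}/\pi_i)\widetilde{U}_i Q$ on $T$, the identity $\sigma_i P_{ii}/\pi_i = c\,\mathrm{sign}(P_{ii})$ gives $D_\sigma M = c\,D_{\mathrm{sign}(P)}\widetilde{U}Q$; together with $\|D_{\mathrm{sign}(P)}\|_{2 \to 2} \le 1$, $\|\widetilde{U}\|_{2 \to 2} \le t$, and $\|Q\|_{\infty \to 2} = \|Q\|_F = 1$, this gives $\|D_\sigma M\|_{\infty \to 2} \le ct \le t \le 4\fnorminf^\ast(U)$. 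The main subtlety is the exchange argument locating the support of $P$, which ensures that the division by $P_{ii}$ in the definition of $\widehat{\pi}$ is well posed on exactly those rows that contribute to $\|M\|_{\ell_\infty \to L_2(\widehat{\pi})}$; as a by-product, the argument actually yields the optimal constant $1$ in place of the stated $4$.
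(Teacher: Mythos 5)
Your argument is correct, and it takes a genuinely different and in fact sharper route than the paper's. The paper proves the lemma by first reducing to the uniform-weight case of Lemma~\ref{lm:gammastar-inftyto2}: it assumes without loss of generality that $\pi$ is rational, replicates each row $i$ of $M$ proportionally to $\pi(i)$ to build an enlarged matrix $\widetilde{M}$, invokes Lemma~\ref{lm:gammastar-inftyto2} (which is itself proved from Lemma~\ref{lm:gammastar-dual} via Markov's inequality on the entries of $P$, costing a factor of $\sqrt{2}$), and then renormalizes the resulting weight function, costing another factor of $2$; this is why the paper's constant is $4$. You instead apply Lemma~\ref{lm:gammastar-dual} directly to $U$, use the exchange argument to pin down $\mathrm{supp}(P)$, and then build $\widehat{\pi}$ in closed form from $\pi$ and $P$; the Cauchy--Schwarz step exactly replaces the Markov step. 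This dispenses with both the rational-approximation/row-replication trick and the projection subset $S$, and it yields $\|M\|_{\ell_\infty \to L_2(\widehat\pi)} \le \fnorminf^\ast(U)$, i.e.\ constant $1$. Two minor points are worth adding for completeness: (i) the exchange argument implicitly uses that an optimal factorization exists, which follows by taking $t = \fnorminf^\ast(U)$ in the ``if'' direction of Lemma~\ref{lm:gammastar-dual} and observing the ``only if'' direction forces $\|\widetilde{U}\|_{2\to 2} = t$; (ii) the exchange argument fails to produce a \emph{strict} decrease when $\fnorminf^\ast(U) = 0$, but in that degenerate case $U = 0$ forces $M_i = 0$ for every $i \in \mathrm{supp}(\pi)$, so any $\widehat\pi$ supported on $\mathrm{supp}(\pi)$ gives $\|M\|_{\ell_\infty \to L_2(\widehat\pi)} = 0$ and the lemma is trivial. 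Neither affects the correctness of your approach.
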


Lemmas \ref{lm:gammastar-inftyto2} and \ref{lm:gammastar-inftyto2-nonuniform}
are proved in Appendix \ref{ap:duality}.

\subsection{Symmetrization}

For our lower bound,
it will be convenient to narrow our attention
to the following restricted class of `symmetric' query workloads.

\begin{definition}\label{def:symmetric}
    Let $\queries$ be a workload of statistical queries
    with workload matrix $\qmat \in \R^{\queries \times \uni}$.
    Suppose there exists a partition of $\uni$
    into sets $\uni^+$ and $\uni^-$, $|\uni^+| = |\uni^-|$,
    where each element $\elem$ of $\uni^+$ is identified
    with a distinct element of $\uni^-$, denoted $-\elem$,
    such that,
    for all $\query \in \queries$,
    for all $\elem \in \uni$,
    $\query(-\elem) = -\query(\elem)$.
    In other words,
    $\qmat$ can be expressed as $(\qmat^+,\qmat^-)$,
    where $\qmat^+ \in \R^{\queries \times \uni^+}$
    and $\qmat^- \in \R^{\queries \times \uni^-}$
    are the restrictions of $\qmat$ to
    $\queries \times \uni^+$ and $\queries \times \uni^-$
    respectively,
    with each entry
    $\qent_{\query,\elem}^+$ of $\qmat^+$
    and the corresponding entry $\qent_{q,-\elem}^-$ of $\qmat^-$
    satisfying $\qent_{\query,\elem}^- = - \qent_{\query,-\elem}^+$.
    Also write $\queries^+$ to denote the collection of queries
    with workload matrix $\qmat^+$ so that the queries
    $q^+:\uni^+ \to \R$ of $\queries^+$
    are obtained by restricting queries $q:\uni \to \R$ of $\queries$
    to the input space $\uni^+$; define $\queries^-$ analogously.
    Then $\queries$, and also $\qmat$, are called symmetric.
\end{definition}

The following result will allow us
to translate our lower bound for the symmetric query workloads
into a lower bound for general query workloads.
Its proof is given in Appendix~\ref{ap:symmetrization}.

\begin{lemma}\label{lm:symtonotsym}
    Let $\alpha, \epsilon > 0$.
    Let $\queries$ be a symmetric workload of statistical queries
    and take $\queries^+$ as given by Definition~\ref{def:symmetric}.
    Suppose there exists
    a non-interactive locally $\eps$-LDP mechanism $\mech^+$
    which takes $n$ samples as input
    and achieves $\err^{\ell_\infty}(\mech^+, \queries^+, n) \le \alpha$.
    Then there exists
    a local $3\eps$-LDP mechanism $\mech$
    which takes $n' = \max\{n, \frac{1}{\eps^2 \alpha^2}\}$ samples as input
    and achieves $\err^{\ell_\infty}(\mech, \queries,n') \le 4 \alpha$.
\end{lemma}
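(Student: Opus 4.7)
The plan is to construct $\mech$ by lifting $\mech^+$ via the sign/folded-magnitude decomposition of the universe. Every $x_i \in \uni$ factorizes as $x_i = s_i \cdot x_i^+$ with $s_i \in \{\pm 1\}$ the sign ($s_i = +1$ iff $x_i \in \uni^+$) and $x_i^+ \in \uni^+$ the folded datapoint. The symmetry of $\queries$ gives $q(\ds) = \frac{1}{n'}\sum_{i=1}^{n'} s_i q^+(x_i^+)$ for every $q \in \queries$, so the task reduces to privately estimating signed averages of per-user evaluations of $q^+$ using only black-box access to $\mech^+$.

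I would have each user $i$ output the pair $(Y_i, \sigma_i)$, where $Y_i = \mech_i^+(x_i^+)$ is $\eps$-LDP in $x_i^+$ and $\sigma_i \in \{\pm 1\}$ is a randomized response on $s_i$ with privacy $2\eps$, so that $\E[\sigma_i \mid s_i] = \eta s_i$ for $\eta := (e^{2\eps}-1)/(e^{2\eps}+1) = \Theta(\eps)$. By composition, $\mech$ is $3\eps$-LDP. In post-processing, partition users by $\sigma_i$ into $T_+ = \{i : \sigma_i = +1\}$ and $T_- = \{i : \sigma_i = -1\}$, each of which has size at least $n$ with high probability when $n' \geq \max\{n, 1/(\eps^2\alpha^2)\}$. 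Apply $\mech^+$'s post-processing to the two halves of the $(Y_i)$'s to get estimates $\hat{q}^+_+$ and $\hat{q}^+_-$ of the subsample averages $\frac{1}{|T_+|}\sum_{i \in T_+} q^+(x_i^+)$ and $\frac{1}{|T_-|}\sum_{i \in T_-} q^+(x_i^+)$, each accurate to within $\alpha$ per query. Output $\hat{q}(\ds) := (|T_+| \hat{q}^+_+ - |T_-| \hat{q}^+_-)/(n'\eta)$. A short calculation using $\E[\sigma_i \mid s_i] = \eta s_i$ shows that this is a nearly unbiased estimator of $q(\ds)$, since $\E\bigl[\sum_{i \in T_+} q^+(x_i^+) - \sum_{i \in T_-} q^+(x_i^+)\bigr] = n'\eta \, q(\ds)$.

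The error of $\hat{q}(\ds)$ splits into two contributions. The first is sampling noise from the random signs $\sigma_i$, which has standard deviation $O(1/(\eta\sqrt{n'}))$ per query, hence $O(\alpha)$ exactly when $n' \geq 1/(\eps^2\alpha^2)$---matching the sample-size hypothesis of the lemma. The second is the propagated error of $\mech^+$, which enters via the $1/(n'\eta)$ rescaling; this is the main obstacle, since a naive application of the per-subsample $\alpha$-error gives a bound of $O(\alpha/\eta) = O(\alpha/\eps)$. Resolving it requires either (i) using that $\eps$ is bounded by an absolute constant throughout Theorem~\ref{thm:main-local}, which absorbs the $1/\eta$ factor into the constant in front of $\alpha$; or (ii) a refined analysis exploiting the fact that the errors $e_\pm = |T_\pm|\hat{q}^+_\pm - \sum_{i \in T_\pm} q^+(x_i^+)$ arise from independent sub-post-processings, so that $|e_+ - e_-|$ concentrates at a scale proportional to $\sqrt{n'}\,\alpha$ rather than $n'\alpha$, giving the stated $4\alpha$ bound.
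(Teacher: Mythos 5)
Your construction has a genuine gap that your own analysis already flags: the $1/\eta$ rescaling amplifies the error inherited from $\mech^+$ by a factor of $\Theta(1/\eps)$, and neither of your proposed fixes closes it. For fix (i), $1/\eta = \Theta(1/\eps)$ is not absorbable into a universal constant---it diverges as $\eps \to 0$, exactly the regime of interest---whereas the lemma asserts the specific constant $4$. For fix (ii), there is no reason for $e_+$ and $e_-$ to cancel or concentrate: the guarantee for $\mech^+$ is only an $\ell_\infty$-error bound in expectation and permits arbitrary (possibly deterministic) bias, so each $e_\pm$ may be of order $|T_\pm|\,\alpha \approx n'\alpha/2$ with no cancellation, and their difference need not be $O(\sqrt{n'}\,\alpha)$. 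There is also a secondary issue: $\mech^+$ is a fixed non-interactive mechanism taking exactly $n$ local reports, but $|T_\pm|$ is random and can be below $n$ when $n' = n$, so ``apply $\mech^+$'s post-processing to the two halves'' is not well defined.

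The paper avoids the amplification with a different decomposition. Rather than privatizing the sign $s_i$ by randomized response and then dividing by $\eta$, it runs two copies of the given mechanism on the full dataset with wrong-sign entries deterministically replaced: $\widetilde{\mech}^+$ has each agent map $x_i \in \uni^-$ to a fixed $x_0 \in \uni^+$ before running $\mech^+$'s local randomizer, and $\widetilde{\mech}^-$ symmetrically maps $\uni^+$ entries to $-x_0$ and runs the induced $\mech^-$. A simple identity expresses $\queries(\ds)$ as the sum of the two resulting empirical answers plus a bias-correction term proportional to $\queries(x_0)$, whose coefficient depends only on $n_+/n'$ with $n_+ = |\{i : x_i \in \uni^+\}|$. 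That single counting query is estimated to additive error $\alpha$ by randomized response once $n' \geq 1/(\eps^2\alpha^2)$. Since no $1/\eps$ rescaling ever touches the error inherited from $\mech^+$, the final error is at most $\alpha + \alpha + 2\alpha = 4\alpha$ by the triangle inequality, and the privacy is $3\eps$ by composition of three $\eps$-LDP local randomizers. Your idea of recovering the sign via randomized response is natural, but the replace-and-correct trick is what makes the $4\alpha$ bound achievable without a $1/\eps$ loss.
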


Lemma~\ref{lm:symdual} allows us to relate
$\gamma_2(\qmat)$ and $\gamma_2(\qmat^+)$
and their witnesses.
Its proof is also given in Appendix~\ref{ap:symmetrization}.
\begin{lemma}\label{lm:symdual}
    Let $\alpha > 0$
    and let $\qmat \in \R^{\queries \times \uni}$
    be a symmetric workload matrix with $\uni^+$ and $\qmat^+$
    as given by Definition~\ref{def:symmetric}.
    Then it holds that $\gamma_2(\qmat) = \gamma_2(\qmat^+)$
    and $\gamma_2(\qmat,\alpha) = \gamma_2(\qmat^+,\alpha)$.
    Moreover, if, for some $U^+ \in \R^{\queries \times \uni^+}$,
    \[
        \gamma_2(\qmat^+,\alpha)
        = \frac{\qmat^+\bullet U^+ - \alpha\|U^+\|_1}{\gamma_2^\ast(U^+)},
    \]
    then
    \[
        \gamma_2(\qmat,\alpha)
        = \frac{\qmat\bullet U - \alpha\|U\|_1}{\gamma_2^\ast(U)},
    \]
    where $U = \frac12 (U^+,U^-)$ is a matrix in $\R^{\queries \times
      \uni}$ such that the submatrix $U^-$ is indexed by $\uni^-$ and
    has entries $u^-_{\query,-\elem} = -u^+_{\query, \elem}$
    for all $\elem \in \uni^+$ and $\query \in \queries$.
\end{lemma}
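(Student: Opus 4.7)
The plan is to handle the lemma in two parts: first the norm equalities $\gamma_2(\qmat)=\gamma_2(\qmat^+)$ and $\gamma_2(\qmat,\alpha)=\gamma_2(\qmat^+,\alpha)$, and second the lifting of the dual certificate $U^+$ to an optimal dual certificate $U$ for $\qmat$. Throughout, the only tool I need is the antisymmetric block structure of $\qmat=(\qmat^+,-\qmat^+)$ (up to the identification $x\mapsto -x$) and $U=\frac12(U^+,U^-)$ with $u^-_{q,-x}=-u^+_{q,x}$.

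For the norm equality, I would prove both inequalities by direct manipulation of factorizations. For $\gamma_2(\qmat)\le\gamma_2(\qmat^+)$: given any factorization $\widetilde{\qmat^+}=RA^+$ with $\|\widetilde{\qmat^+}-\qmat^+\|_{1\to\infty}\le\alpha/2$, define $A=(A^+,-A^+)$ (where the second block is indexed by $\uni^-$ via $x\mapsto -x$) and $\widetilde{\qmat}=(\widetilde{\qmat^+},-\widetilde{\qmat^+})=RA$. The column $\ell_2$-norms of $A$ are exactly those of $A^+$, so $\|A\|_{1\to 2}=\|A^+\|_{1\to 2}$, and the antisymmetry of $\qmat$ guarantees $\|\widetilde{\qmat}-\qmat\|_{1\to\infty}=\|\widetilde{\qmat^+}-\qmat^+\|_{1\to\infty}\le\alpha/2$. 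Conversely, given any $\widetilde{\qmat}=RA$ witnessing $\gamma_2(\qmat,\alpha)$, I just restrict to the columns in $\uni^+$: setting $\widetilde{\qmat^+}:=\widetilde{\qmat}\big|_{\uni^+}=RA^+$ (with $A^+$ the corresponding submatrix of $A$) gives $\|A^+\|_{1\to 2}\le\|A\|_{1\to 2}$ and $\|\widetilde{\qmat^+}-\qmat^+\|_{1\to\infty}\le\alpha/2$. Taking $\alpha=0$ recovers the exact-norm equality.

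For the dual certificate, I would verify separately that $\qmat\bullet U-\alpha\|U\|_1=\qmat^+\bullet U^+-\alpha\|U^+\|_1$ and $\fnorminf^\ast(U)=\fnorminf^\ast(U^+)$. The numerator is an easy calculation: the $\uni^+$-block contributes $\tfrac12\qmat^+\bullet U^+$, and the $\uni^-$-block contributes $\tfrac12(-\qmat^+)\bullet(-U^+)=\tfrac12\qmat^+\bullet U^+$, summing to $\qmat^+\bullet U^+$; similarly $|u^-_{q,-x}|=|u^+_{q,x}|$ gives $\|U\|_1=\|U^+\|_1$. For the denominator I use the SDP formulation $\fnorminf^\ast(U)=\max\sum u_{i,j}\,a_i^\top b_j$ over unit-vector families $(a_i),(b_j)$. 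Plugging in $U=\frac12(U^+,U^-)$ and re-indexing the $\uni^-$ block by $x\in\uni^+$ gives
\[
\fnorminf^\ast(U)=\tfrac12\max\sum_{q,x\in\uni^+}u^+_{q,x}\,a_q^\top(b^+_x-b^-_{-x}).
\]
Setting $c_x:=\tfrac12(b^+_x-b^-_{-x})$, we have $\|c_x\|_2\le 1$, and conversely any $c_x$ with $\|c_x\|_2\le 1$ arises this way (take $b^+_x=c_x$, $b^-_{-x}=-c_x$). Hence the maximum equals $\max\sum u^+_{q,x}a_q^\top c_x=\fnorminf^\ast(U^+)$, as desired. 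Combining, the ratio in the dual formulation of Lemma~\ref{lm:duality} for $U$ equals that for $U^+$, which equals $\gamma_2(\qmat^+,\alpha)=\gamma_2(\qmat,\alpha)$, so $U$ is optimal.

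The only step that requires real care is the identity $\fnorminf^\ast(U)=\fnorminf^\ast(U^+)$, since it rests on the Grothendieck-style SDP formulation of $\fnorminf^\ast$ rather than on its defining duality with $\fnorminf$; I would be careful to use the version of $\fnorminf^\ast$ that allows $\|a_i\|_2,\|b_j\|_2\le 1$ (equivalent, by homogeneity, to unit vectors) so that the substitution $c_x=\tfrac12(b^+_x-b^-_{-x})$ produces an arbitrary vector in the unit ball. Everything else is a bookkeeping exercise in the antisymmetric structure.
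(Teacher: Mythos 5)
Your proof is correct, and its overall structure matches the paper's: both directions of the norm equalities are obtained by restricting a factorization of $\qmat$ to the columns indexed by $\uni^+$ (using that $\|\cdot\|_{1\to 2}$ and the entrywise approximation constraint only improve under taking submatrices) and, conversely, by extending a factorization $R A^+$ of $\widetilde{\qmat}^+$ antisymmetrically to $RA$ with $a_{\query,-\elem}=-a_{\query,\elem}$, which preserves column norms; the numerator identities $\qmat\bullet U=\qmat^+\bullet U^+$ and $\|U\|_1=\|U^+\|_1$ are the same bookkeeping in both write-ups. The one place where you genuinely diverge is the identity $\fnorminf^\ast(U)=\fnorminf^\ast(U^+)$. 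The paper proves ``$\le$'' by writing $U=\tfrac12(U^+,0)+\tfrac12(0,U^-)$ and invoking the triangle inequality and homogeneity of the dual norm, and proves ``$\ge$'' by taking an optimal test matrix $V^+$ with $\fnorminf(V^+)=1$, extending it antisymmetrically to $V$ with $\fnorminf(V)=\fnorminf(V^+)$, and computing $V\bullet U=V^+\bullet U^+$. You instead work directly in the Grothendieck SDP formulation and obtain both directions at once via the substitution $c_\elem=\tfrac12(b^+_\elem-b^-_{-\elem})$, correctly flagging that one must allow vectors in the unit ball rather than on the unit sphere for the substitution to be surjective (this is harmless, since the bilinear objective is linear in each $b_j$ and so is maximized on the sphere). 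Your version is a single self-contained computation; the paper's version is arguably more robust in that it only uses that $\fnorminf^\ast$ is a dual norm and that $\fnorminf$ is preserved under antisymmetric extension, facts already established, without re-entering the vector formulation. Both are complete proofs.
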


\subsection{Lower Bound based on Dual Solutions}

In this section we put together the different tools we have already
set up -- the KL-divergence lower bound, and the duality of the
approximate $\fnorminf$ norm -- in order to prove our main lower bound
result Theorem~\ref{thm:main-local}.

For this section, it is convenient to consider the enumeration
$\query_1,\dots,\query_{\qsize}$ of the queries of a symmetric
workload $\queries$ with workload matrix $\qmat \in \R^{[\qsize] \times
  \uni}$.  Let $U$ be the dual witness to the lower bound on
$\fnorminf(W,\alpha)$, as given by Lemma~\ref{lm:duality}, so that
\begin{equation}\label{eq:dualu}
    \gamma_2(\qmat, \alpha) 
    = \frac{\qmat\bullet U - \alpha\|U\|_1}{\gamma_2^\ast(U)}.
\end{equation}
By Lemma~\ref{lm:symdual}, we may assume
without loss of generality that $U$ is of the form $(U^+,U^-)$
where each entry of $U^-$ is the additive inverse
of the corresponding entry of $U^+$.
Furthermore, by dividing each entry of $U$ by $\|U\|_1$
if necessary,
then we may assume without loss of generality that $\|U\|_1=1$.
In this case,
\[
    \gamma_2(\qmat, \alpha) 
    = \frac{\qmat\bullet U - \alpha}{\gamma_2^\ast(U)}.
\]
Let us make a first attempt at constructing
our collection of ``hard'' distributions
$\dista_1,\dots,\dista_\qsize$ and $\distb_1,\dots,\distb_\qsize$ for $\queries$.
Since $\|U\|_1=1$, then
\begin{equation}\label{eq:pidef}
    \pi(v) = \sum_{\elem \in \uni}{|u_{v,\elem}|}
\end{equation}
defines a valid probability distribution over $[\qsize]$.
For each $v \in [\qsize]$, we then define a pair of distributions
$\dista_v$ and $\distb_v$ given by
\begin{align}
    \forall \elem \in \uni^+: \ \ &\dista_v(\elem) = \dista_v(-\elem) = |u_{v,\elem}|/\pi(v)
    \label{eq:pdef} \\
    \forall \elem \in \uni^+: \ \
        &\distb_v(\elem)
        = \begin{cases}
          2|u_{v,\elem}|/\pi(v) &\text{if }u_{v,\elem}\ge 0 \\
          0 &\text{if }u_{v,\elem}< 0
        \end{cases}
        \label{eq:qdef1} \\
        &\distb_v(-\elem)
        = \begin{cases}
          0 &\text{if }u_{v,\elem}\ge 0 \\
          2|u_{v,\elem}|/\pi(v) &\text{if }u_{v,\elem}< 0
        \label{eq:qdef2}
        \end{cases}
\end{align}

Then, for all $i,v \in [\qsize]$,
the symmetry of $\dista_v$ implies $\query_i(\dista_v) = 0$.
By contrast, it holds for all $v \in [\qsize]$ that
\begin{align*}
  \query_v(\distb_v) &= \sum_{\elem \in \uni} \query_v(\elem) \distb_v(\elem)\\
  &= \sum_{\elem \in \uni^+} \query_v(\elem) (\distb_v(\elem) - \distb_v(-\elem))\\
  &= 2\qmat^+ \bullet U^+ = \qmat \bullet U.
\end{align*}

Hence,
\[
    \ex{V \sim \pi}{\max_{i \in [\qsize]} \query_i(\distb_V)}
    \ge \ex{V \sim \pi}{\query_V(\distb_V)}
    = \qmat \bullet U.
\]
Since
$\qmat \bullet U = \gamma_2^*(U)\gamma_2(\qmat,\alpha) +\alpha \ge \alpha$
by Lemma~\ref{lm:duality},
then $\E_{V \sim \pi}[\max_{i \in [\qsize]} \query_i(\distb_V)] \ge \alpha$.
If we could guarantee that $\query_V(\distb_V)$ was close to its expectation
when $V \sim \pi$,
then estimating each of the queries $\query_i$ of $\queries$
with error less than $\alpha$
would allow us to distinguish the distributions
$\dista_1,\dots,\dista_\qsize$
from the distributions $\distb_1,\dots,\distb_\qsize$.
The following result modifies our distributions
in a way that resolves this issue.

\begin{lemma}\label{lm:hard-distros}
    Let $\queries$ be a collection of symmetric queries with
    workload matrix $W \in \R^{[\qsize] \times \uni}$.
    Let $U \in \R^{[\qsize] \times \uni}$
    be the dual witness so that \eqref{eq:dualu} is satisfied.
    Then there exist probability distributions
    $\widetilde{\dista}_1, \ldots, \widetilde{\dista}_\qsize$
    and $\widetilde{\distb}_1, \ldots, \widetilde{\distb}_\qsize$
    over $\uni$, and a distribution $\widetilde{\pi}$ over $[\qsize]$ 
    such that: 
    \begin{enumerate}
        \item $\query_i(\widetilde{\dista}_v) = 0$
            for all $i, v \in [\qsize]$; \label{crit:dista}
        \item for all $v$ in the support of $\widetilde{\pi}$,
            $
                \query_v(\widetilde{\distb}_v)
                \ge \frac{\qmat \bullet U - \alpha/4}{O(\log(1/\alpha))}
            $; \label{crit:distb}
        \item the matrix
            $
                \widetilde{U} \in \R^{[\queries] \times \uni}
            $
            with entries 
            $
                \widetilde{u}_{v,x}
                = \widetilde{\pi}(v) ( \widetilde{\dista}_v(x) - 
                    \widetilde{\distb}_v(x) )
            $
            satisfies
            $
                \fnorminf^\ast(\widetilde{U})
                \le \fnorminf^\ast(U).
            $
    \end{enumerate}
\end{lemma}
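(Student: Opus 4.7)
The plan is to modify the candidate distributions $\{\dista_v\}$ and $\{\distb_v\}$ from~\eqref{eq:pdef}--\eqref{eq:qdef2} and the probability distribution $\pi$ from~\eqref{eq:pidef}, so that the single-coordinate advantage $\query_v(\distb_v)$ is uniformly large over the support of a new distribution $\widetilde{\pi}$, while keeping the associated dual matrix essentially unchanged. A direct case analysis of the definitions gives the key identity
\[
\pi(v)\bigl(\dista_v(\elem) - \distb_v(\elem)\bigr) = -u_{v,\elem} \qquad \text{for all } v, \elem,
\]
which shows that the matrix built from $\pi$ and the pair $(\dista_v, \distb_v)$ is, up to sign, exactly $U$. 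Combined with $\|U\|_1 = 1$, this also gives $\ex{V \sim \pi}{\query_V(\distb_V)} = \qmat \bullet U =: M \ge \alpha$, but individual summands $\query_v(\distb_v)$ may be small or even negative, which is the issue the modification must address.

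The first step is to truncate and dyadically partition. Let $V^+ = \{v : \query_v(\distb_v) \ge \alpha/4\}$. Since queries are bounded by $1$ in absolute value, the contribution of $v \notin V^+$ to $M$ is at most $\alpha/4$, so $\sum_{v \in V^+} \pi(v) \query_v(\distb_v) \ge M - \alpha/4$. I would partition $V^+$ into $L = O(\log(1/\alpha))$ dyadic levels $S_k = \{v : \query_v(\distb_v) \in [2^k \alpha/4,\, 2^{k+1}\alpha/4)\}$. Pigeonhole then yields a level $k^\star$ with $\sum_{v \in S_{k^\star}} \pi(v) \query_v(\distb_v) \ge (M - \alpha/4)/L$. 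Combining the upper bound $\query_v(\distb_v) < 2^{k^\star+1}\alpha/4$ on each summand with the lower bound $\query_v(\distb_v) \ge 2^{k^\star}\alpha/4$ for $v \in S_{k^\star}$ yields
\[
\pi(S_{k^\star}) \cdot \query_v(\distb_v) \;\ge\; \frac{M - \alpha/4}{2L} \qquad \text{for every } v \in S_{k^\star}.
\]

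I would then set $\widetilde{\pi}(v) = \pi(v)/\pi(S_{k^\star})$ for $v \in S_{k^\star}$ (and $0$ otherwise), take $\widetilde{\dista}_v := \dista_v$, and define the mixture
\[
\widetilde{\distb}_v := \pi(S_{k^\star}) \distb_v + (1 - \pi(S_{k^\star})) \dista_v
\]
for $v \in S_{k^\star}$. Property~(1) is immediate from the symmetry of $\dista_v$. For property~(2), linearity of expectation together with $\query_v(\dista_v) = 0$ gives $\query_v(\widetilde{\distb}_v) = \pi(S_{k^\star}) \query_v(\distb_v)$, and the displayed inequality shows this is at least $(M - \alpha/4)/O(\log(1/\alpha))$. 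For property~(3), the mixing weight $\pi(S_{k^\star})$ has been chosen precisely so that it cancels the probability renormalization of $\widetilde{\pi}$: a short calculation using the identity above gives $\widetilde{u}_{v,\elem} = -u_{v,\elem}$ for $v \in S_{k^\star}$ and $\widetilde{u}_{v,\elem} = 0$ otherwise, so $\widetilde{U}$ is, up to sign, the row-restriction of $U$ to $S_{k^\star}$.

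The main obstacle is precisely this last bookkeeping step: without the mixture, naively renormalizing $\pi$ to a probability measure would inject a factor $1/\pi(S_{k^\star})$ into $\widetilde{U}$, potentially blowing up its $\fnorminf^\ast$ norm. Once $\widetilde{U} = -U|_{S_{k^\star}}$ is in hand, the remaining monotonicity $\fnorminf^\ast(U|_{S_{k^\star}}) \le \fnorminf^\ast(U)$ follows from the dual characterization in Lemma~\ref{lm:gammastar-dual}: given any factorization $U = P \widetilde{V} Q$ with $\tr(P^2) = \tr(Q^2) = 1$ and $\|\widetilde{V}\|_{2 \to 2} \le \fnorminf^\ast(U)$, zeroing out the diagonal entries of $P$ outside $S_{k^\star}$ produces a diagonal $P'$ with $\tr((P')^2) \le 1$, and the resulting factorization of $U|_{S_{k^\star}}$ witnesses the desired bound after a trivial rescaling.
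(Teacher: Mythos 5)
Your proposal is correct and follows essentially the same route as the paper: dyadic binning of the values $\query_v(\distb_v)$ to select a level set $S$, conditioning $\pi$ on $S$, and mixing $\widetilde{\distb}_v = \pi(S)\distb_v + (1-\pi(S))\dista_v$ so that the renormalization of $\pi$ is exactly cancelled and $\widetilde{U}$ is a row-restriction of $U$ (up to sign, which is immaterial for the norm). Your treatment is in fact slightly more careful than the paper's in two minor respects — you explicitly truncate the possibly non-positive summands before binning rather than asserting $\query_v(\distb_v)>0$, and you record the correct sign relation $\widetilde{u}_{v,\elem} = -u_{v,\elem}$ — but these do not change the argument.
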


The proof of Lemma~\ref{lm:hard-distros}
will take advantage of the following exponential binning lemma.
\begin{lemma}\label{lm:binning}
    Suppose that $a_1, \ldots, a_\qsize \in [0,1]$
    and that $\pi$ is a probability distribution over $[\qsize]$.
    Then for any $\beta\in (0,1]$,
    there exists a set $S \subseteq [\qsize]$ such that
    $\pi(S) \cdot \min_{v\in S}a_v \ge \frac{\sum_{v=1}^\qsize\pi(v)a_v - \beta}{O(\log(1/\beta))}$.
\end{lemma}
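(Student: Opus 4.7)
The plan is a standard exponential-binning argument. Let $E = \sum_{v=1}^\qsize \pi(v) a_v$, set $L = \lceil \log_2(1/\beta) \rceil$, and partition $[\qsize]$ into the ``small'' set $B_{\mathrm{small}} = \{v : a_v \le 2^{-L-1}\}$ together with the bins $B_j = \{v : 2^{-j-1} < a_v \le 2^{-j}\}$ for $j = 0, 1, \dots, L$. The idea is that $B_{\mathrm{small}}$ contributes negligibly to $E$, while among the remaining logarithmically many bins one must capture an $\Omega(1/\log(1/\beta))$ fraction of the expectation by pigeonhole.

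First I would bound the contribution of the small set: since $2^{-L-1} \le \beta/2$, we have $\sum_{v \in B_{\mathrm{small}}} \pi(v) a_v \le \beta/2$, hence $\sum_{j=0}^L \sum_{v \in B_j} \pi(v) a_v \ge E - \beta/2 \ge E - \beta$. Upper bounding $a_v \le 2^{-j}$ within each $B_j$ then gives $\sum_{j=0}^L \pi(B_j)\cdot 2^{-j} \ge E - \beta$. By the pigeonhole principle over the $L+1 = O(\log(1/\beta))$ bins, there exists an index $j^\star$ with $\pi(B_{j^\star})\cdot 2^{-j^\star} \ge (E-\beta)/(L+1)$.

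Finally I would take $S = B_{j^\star}$. Every $v \in S$ satisfies $a_v > 2^{-j^\star-1}$, so $\min_{v \in S} a_v \ge 2^{-j^\star-1}$, and combining with the previous inequality,
\[
    \pi(S) \cdot \min_{v \in S} a_v \;\ge\; \tfrac{1}{2}\,\pi(B_{j^\star}) \cdot 2^{-j^\star} \;\ge\; \frac{E-\beta}{2(L+1)},
\]
which matches the claimed $O(\log(1/\beta))$ denominator. The only caveat is the degenerate case $E \le \beta$, in which the right-hand side of the claimed bound is non-positive and any non-empty choice of $S$ (say $S = [\qsize]$) works vacuously. The whole argument is elementary; there is no real obstacle beyond choosing the cutoff $2^{-L-1}$ so that the discarded mass is comparable to $\beta$, and then trading an extra factor of two for the lower bound on $\min_{v \in S} a_v$ within the chosen bin.
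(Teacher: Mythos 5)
Your proof is correct and follows essentially the same exponential-binning argument as the paper: discard the indices with $a_v$ below a $\Theta(\beta)$ cutoff (which contribute at most $O(\beta)$ to the expectation), partition the rest into $O(\log(1/\beta))$ dyadic bins, pigeonhole to find a heavy bin, and pay a factor of $2$ because $\min$ and $\max$ differ by at most a factor of $2$ within a bin. The only differences are cosmetic bookkeeping (your cutoff is $2^{-L-1}$ with $L=\lceil\log_2(1/\beta)\rceil$ rather than $\beta$ itself, and you bound $a_v \le 2^{-j}$ before pigeonholing rather than after), and your explicit note about the degenerate case $E \le \beta$ is a small but welcome addition the paper leaves implicit.
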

\begin{proof}
Let $S_\ell = \{v: 2^{-\ell-1}< a_v \le 2^{-\ell}\}$
for $\ell \in \{0, \ldots, L\}$, where $L = \log_2(1/\beta)-1$,
and let $S_\infty = \{v: a_v \le \beta\}$.
Then, because $\sum_{v\in S_\infty}\pi(v)a_v \le \beta$, we have 
\[
\sum_{\ell = 1}^L\sum_{v \in S_\ell} \pi(v)a_v \ge 
\sum_{v=1}^{\qsize}\pi(v)a_v - \beta.
\]
Therefore, there exists $\ell$ such that
\[
    \sum_{v \in S_\ell} \pi(v)a_v
    \ge
    \frac{\sum_{v=1}^m\pi(v)a_v - \beta}{L}.
\]
The lemma now follows by taking $S = S_\ell$, since $\min_{v \in S_\ell}a_v \ge \frac12 \max_{v \in S_\ell}a_v$.
\end{proof}

\begin{proof}[Proof of Lemma~\ref{lm:hard-distros}]
    Let
    $\dista_1,\dots,\dista_\qsize$,
    $\distb_1,\dots,\distb_\qsize$,
    and $\pi$
    be as given by equations (\ref{eq:pidef}) - (\ref{eq:qdef2}).
    Since $q_v(\distb_v) > 0$ for all $v$,
    we may apply Lemma~\ref{lm:binning}
    with $a_v = q_v(\distb_v)$ and $\beta  = \alpha/4$
    to obtain a subset $S\subseteq [\qsize]$ for which
    \[
        \pi(S) \cdot \min_{v \in S} \query_v(\distb_v) 
        \ge \frac{\E_{V \sim \pi} \query_v(\distb_v)
            - {\alpha}/{4}}{O(\log(1/\alpha))}
        = \frac{\qmat \bullet U - {\alpha}/{4}}{O(\log(1/\alpha))}.
    \]
    Now define $\widetilde{\pi}$ as $\pi$ conditional on $S$.
    In particular,
    \[
        \widetilde{\pi}(v)
        =
        \begin{cases}
            \pi(v)/\pi(S), &\text{if }v \in S \\
            0, &\text{otherwise.}
        \end{cases}
    \]
    Then, for all $v \in [\qsize]$,
    define $\widetilde{\dista}_v = \dista_v$
    and $\widetilde{\distb}_v = \pi(S)\distb_v + (1-\pi(S))\dista_v$.
    This implies
    \[
        \forall i, v \in [\qsize]: \ \ 
        \query(\widetilde{\dista}_v) = \query({\dista}_v) = 0,
    \]
    \[
        \forall v \in [\qsize]: \ \
        \query_v(\widetilde{\distb}_v) = \pi(S) \query_v(\distb_v) 
        \ge 
        \frac{\qmat \bullet U - {\alpha}/{4}}{O(\log(1/\alpha))},
    \]
    \[
        \forall v \in [\qsize]: \ \
        \widetilde{\distb}_v - \widetilde{\dista}_v = \pi(S)(\distb_v - \dista_v).
    \]
    By the last of these facts,
    together with the definition of $\widetilde{\pi}$,
    it follows that the entries
    $
        \widetilde{u}_{v,x}
        =
        \widetilde{\pi}(v)
        ( \widetilde{\dista}_v(x) - \widetilde{\distb}_v(x) )
    $
    of the matrix $\widetilde{U}$ satisfy
    \[
        \widetilde{u}_{v,x}
        =
        \begin{cases}
            u_{v,x}, &\text{if }v \in S \\
            0, &\text{otherwise}.
        \end{cases}
    \]
    In other words, $\widetilde{U}$ is obtained from $U$
    by replacing some of its rows with the zero-vector.
    It is easy to see from 
    the definition of $\fnorminf^\ast$ that this implies
    $\fnorminf^\ast(\widetilde{U}) \le \fnorminf^\ast(U)$.
\end{proof}

Consider now the matrix $\widetilde{M} \in \R^{[\qsize] \times \uni}$
with entries
$
    \widetilde{m}_{v, \elem} = \widetilde{\dista}_v(\elem) - \widetilde{\distb}_v(\elem)
$.
Since $\widetilde{M}$ is obtained from the matrix $\widetilde{U}$
of Lemma~\ref{lm:hard-distros}
by scaling each row $v$ of $\widetilde{U}$ by $\frac{1}{2\pi(v)}$,
it follows that
\begin{equation*}
    \|\widetilde{M}\|_{\ell_\infty \to L_1(\widetilde{\pi})}
    = \frac{1}{2} \|\widetilde{U}\|_{\infty \to 1}
    \le \gamma^\ast_2(\widetilde{U})
    \le \gamma^\ast_2(U)
    = \frac{\qmat \bullet U - \alpha}{\gamma_2(\qmat, \alpha)}.
\end{equation*}
This is not quite the quantity 
\[
    \|\widetilde{M}\|_{\ell_\infty \to L_2(\widetilde{\pi})}^2
    =
    \max_{f \in \R^\uni: \|f\|_\infty \le 1} \ex{V \sim \pi}{\left(\E_{\elem \sim \widetilde{\dista}_V}[f_\elem] - \E_{\elem \sim \widetilde{\distb}_V}[f_\elem]\right)^2}
\]
which Lemma~\ref{lm:kl-div} would have us bound.
For comparison, note
\begin{equation*}
    \|\widetilde{M}\|_{\ell_\infty \to L_1(\widetilde{\pi})}
    = 
    \max_{f \in \R^\uni: \|f\|_\infty \le 1} \ex{V \sim \pi}{\left|\ex{x \sim \widetilde{\dista}_V}{f_x}] - \ex{\elem \sim \widetilde{\distb}_V}{f_\elem}\right|}.
\end{equation*}
Since the trivial case of Holder's inequality implies that
the $L_1(\widetilde{\pi})$-norm is always bounded above
by the $L_2(\widetilde{\pi})$-norm,
it holds that  
$
    \|\widetilde{M}\|_{\ell_\infty \to L_1(\widetilde{\pi})}
    \le
    \|\widetilde{M}\|_{\ell_\infty \to L_2(\widetilde{\pi})}
$.
However, this inequality goes in the wrong direction for our requirements.
This issue is remedied by taking advantage of
Lemma~\ref{lm:gammastar-inftyto2-nonuniform}.



\begin{lemma}\label{lm:hard-distros2}
    Let $\queries$ be a collection of symmetric queries with
    workload matrix $W \in \R^{[\qsize] \times \uni}$.
    Let $U \in \R^{[\qsize] \times \uni}$
    be the dual witness so that \eqref{eq:dualu} is satisfied.
    Then there exist probability distributions
    $\widetilde{\dista}_1, \ldots, \widetilde{\dista}_\qsize$
    and $\widetilde{\distb}_1, \ldots, \widetilde{\distb}_\qsize$
    over $\uni$, and a distribution $\widehat{\pi}$ over $[\qsize]$ 
    such that: 
    \begin{enumerate}
        \item
            $\widetilde{\dista}_1, \dots, \widetilde{\dista}_\qsize,
            \widetilde{\distb}_1, \dots, \widetilde{\distb}_\qsize$
            and $\widehat{\pi}$
            satisfy
            criteria~\ref{crit:dista}.~and~\ref{crit:distb}.~of Lemma~\ref{lm:hard-distros};
        \item
            the matrix $\widetilde{M}$ with entries 
            $
                \widetilde{m}_{v,x}
                = \widetilde{\dista}_v(x) - \widetilde{\distb}_v(x) 
            $
            satisfies
            \[
                \| \widetilde{M} \|_{\ell_\infty \to L_2(\widehat{\pi})}
                \le 4\fnorminf^\ast(U)
                = \frac{4(\qmat \bullet U - \alpha)}{\gamma_2(\qmat, \alpha)}
            \]
    \end{enumerate}
\end{lemma}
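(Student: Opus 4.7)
The plan is to combine Lemma~\ref{lm:hard-distros} with Lemma~\ref{lm:gammastar-inftyto2-nonuniform}; almost all of the work has already been done, and the present lemma is essentially a packaging result that replaces the $L_1(\widetilde{\pi})$ bound available on $\widetilde{M}$ with an $L_2(\widehat{\pi})$ bound under a reweighting of the row distribution.

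First, I would apply Lemma~\ref{lm:hard-distros} to obtain distributions $\widetilde{\dista}_1, \ldots, \widetilde{\dista}_\qsize$ and $\widetilde{\distb}_1, \ldots, \widetilde{\distb}_\qsize$ on $\uni$, together with a distribution $\widetilde{\pi}$ on $[\qsize]$, satisfying criteria~\ref{crit:dista} and~\ref{crit:distb} of that lemma, and such that the matrix $\widetilde{U}$ with entries $\widetilde{u}_{v,x} = \widetilde{\pi}(v)(\widetilde{\dista}_v(x) - \widetilde{\distb}_v(x))$ obeys $\fnorminf^\ast(\widetilde{U}) \le \fnorminf^\ast(U)$. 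The key structural observation is that the matrix $\widetilde{M}$ defined in the present lemma and the matrix $\widetilde{U}$ just produced differ only by scaling each row $v$ by $\widetilde{\pi}(v)$; that is, $\widetilde{u}_{v,x} = \widetilde{\pi}(v)\widetilde{m}_{v,x}$, which is precisely the hypothesis of Lemma~\ref{lm:gammastar-inftyto2-nonuniform} with the roles of $U, M, \pi$ played by $\widetilde{U}, \widetilde{M}, \widetilde{\pi}$ respectively.

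Applying Lemma~\ref{lm:gammastar-inftyto2-nonuniform} then yields a probability distribution $\widehat{\pi}$ on $[\qsize]$, with support contained in the support of $\widetilde{\pi}$, for which
\[
    \|\widetilde{M}\|_{\ell_\infty \to L_2(\widehat{\pi})}
    \le 4\fnorminf^\ast(\widetilde{U})
    \le 4\fnorminf^\ast(U).
\]
To convert this to the form stated in the lemma, I would invoke the normalization already established in the paragraph preceding~\eqref{eq:dualu}, where $\|U\|_1 = 1$, so that~\eqref{eq:dualu} rearranges to $\fnorminf^\ast(U) = (\qmat\bullet U - \alpha)/\gamma_2(\qmat,\alpha)$, giving the second conclusion in the displayed form requested.

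Finally, I would verify the first conclusion. Criterion~\ref{crit:dista} of Lemma~\ref{lm:hard-distros} holds for all pairs $i,v \in [\qsize]$ and is therefore inherited verbatim for every $v$ in the support of $\widehat{\pi}$. Criterion~\ref{crit:distb} of Lemma~\ref{lm:hard-distros} holds for every $v$ in the support of $\widetilde{\pi}$, and since $\mathrm{supp}(\widehat{\pi}) \subseteq \mathrm{supp}(\widetilde{\pi})$ by construction of $\widehat{\pi}$ from Lemma~\ref{lm:gammastar-inftyto2-nonuniform}, criterion~\ref{crit:distb} continues to hold for every $v$ in the support of $\widehat{\pi}$. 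There is no real obstacle here; the only conceptual point that merits care is the matching of parameters between the two lemmas, so that the row weights used to form $\widetilde{U}$ from $\widetilde{M}$ are exactly the distribution $\widetilde{\pi}$ to which the nonuniform $\infty \to 2$ lemma is applied, which is precisely what Lemma~\ref{lm:hard-distros} delivers.
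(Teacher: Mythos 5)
Your proposal is correct and follows essentially the same route as the paper's own proof: invoke Lemma~\ref{lm:hard-distros}, note that $\widetilde{u}_{v,x} = \widetilde{\pi}(v)\widetilde{m}_{v,x}$, apply Lemma~\ref{lm:gammastar-inftyto2-nonuniform} to obtain $\widehat{\pi}$, and then carry over criteria~\ref{crit:dista} and~\ref{crit:distb} via the support containment. In fact you are slightly more careful than the paper's text, which writes $\pi(v)$ rather than $\widetilde{\pi}(v)$ at the reweighting step.
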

\begin{proof}
  Let $\widetilde{\dista}_1, \dots, \widetilde{\dista}_\qsize,
  \widetilde{\distb}_1, \dots, \widetilde{\distb}_\qsize$ and
  $\widetilde{\pi}$ be the distributions guaranteed to exist by
  Lemma~\ref{lm:hard-distros}, and let $\widetilde{U} \in \R^{[\qsize]
    \times \uni}$ be the corresponding matrix with entries $
  \widetilde{u}_{v,x} = \widetilde{\pi}(v) ( \widetilde{\dista}_v(x) -
  \widetilde{\distb}_v(x) ) $.  The entries of the matrix
  $\widetilde{M}$  satisfy $\pi(v)
  \widetilde{m}_{v,x} = \widetilde{u}_{v,x}$, so  we may apply
  Lemma~\ref{lm:gammastar-inftyto2-nonuniform} to obtain a
  distribution $\widehat{\pi}$ such that
    \[
        \| \widetilde{M} \|_{\ell_\infty \to L_2(\widehat{\pi})}
        \le 4\fnorminf^\ast(\widetilde{U})
        \le 4\fnorminf^\ast(U)
        = \frac{4(\qmat \bullet U - \alpha)}{\gamma_2(\qmat, \alpha)}.
    \]
    Lemma~\ref{lm:gammastar-inftyto2-nonuniform}
    further guarantees that the support of $\widehat{\pi}$
    lies within the support of $\tilde{\pi}$,
    which together with the properties of the distributions
    $\widetilde{\dista}_1, \dots, \widetilde{\dista}_\qsize,
    \widetilde{\distb}_1, \dots, \widetilde{\distb}_\qsize$
    and $\widetilde{\pi}$
    gives the first condition of our lemma.
\end{proof}

At last, we have all the components needed to prove our lower bounds
for symmetric workloads.

\begin{theorem}\label{lm:sqlbsym}
    Let $\alpha, \eps \in (0,1]$.
    Let $\queries$ be a symmetric workload of statistical queries
    with workload matrix $\qmat \in \mathbb{R}^{[\qsize] \times \uni}$.
    Then, for some
    $\alpha' = \Omega( \alpha / \log(1/\alpha) )$, 
    if
    $\frac{\gamma_2(\qmat, \alpha)^2}{\eps^2 \alpha^2} \ge 
    \frac{C\log 2\qsize}{(\alpha')^2}$ for a large enough constant
    $C$, we have
    \[
    \sc_{\priv,0}^{\ell_\infty, \mathrm{loc}}(\queries, \alpha')
    = \Omega \left( \frac{\gamma_2(\qmat, \alpha)^2}{\eps^2 \alpha^2} \right).
    \]    
\end{theorem}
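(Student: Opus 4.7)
I would begin by fixing an optimal dual witness $U$ for $\gamma_2(\qmat,\alpha)$ (normalized so $\|U\|_1 = 1$, and taken to be symmetric via Lemma~\ref{lm:symdual}), and applying Lemma~\ref{lm:hard-distros2} to produce distributions $\widetilde{\dista}_v, \widetilde{\distb}_v$ on $\uni$ and $\widehat{\pi}$ on $[\qsize]$ satisfying $q_i(\widetilde{\dista}_v) = 0$ for all $i,v$, $q_v(\widetilde{\distb}_v) \ge G := (\qmat \bullet U - \alpha/4)/O(\log(1/\alpha))$ for $v \in \mathrm{supp}(\widehat{\pi})$, and $\|\widetilde{M}\|_{\ell_\infty \to L_2(\widehat{\pi})} \le 4\gamma_2^\ast(U) = 4(\qmat \bullet U - \alpha)/\gamma_2(\qmat,\alpha)$.

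A direct application of Lemma~\ref{lm:kl-div} to these distributions would only yield $n = \Omega(\gamma_2(\qmat,\alpha)^2 / \eps^2(\qmat \bullet U - \alpha)^2)$, which can miss the target by up to a factor of $1/\alpha^2$ when $\qmat \bullet U \gg \alpha$. The key additional step is to introduce a further round of mixing: set
\[
    \widetilde{\distb}_v^\tau := \tau\, \widetilde{\distb}_v + (1-\tau)\, \widetilde{\dista}_v, \qquad \tau := C\alpha'/G,
\]
for a sufficiently large constant $C$ and with $\alpha' = c\alpha/\log(1/\alpha)$ for sufficiently small $c$. This preserves $q_i(\widetilde{\dista}_v) = 0$, enforces $q_v(\widetilde{\distb}_v^\tau) \ge C\alpha'$, and scales the difference matrix to $\widetilde{M}^\tau = \tau\widetilde{M}$, so $\|\widetilde{M}^\tau\|_{\ell_\infty \to L_2(\widehat{\pi})} \le 4\tau\gamma_2^\ast(U)$. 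The key arithmetic is $\tau \gamma_2^\ast(U) \le O(\alpha'\log(1/\alpha))/\gamma_2(\qmat,\alpha) = O(\alpha)/\gamma_2(\qmat,\alpha)$, using $(\qmat\bullet U - \alpha)/(\qmat\bullet U - \alpha/4) \le 1$ and $\alpha'\log(1/\alpha) = O(\alpha)$.

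Next, I would assume for contradiction that $\mech$ is an $\eps$-LDP mechanism on $n$ samples with $\err^{\ell_\infty}(\mech,\queries,n) \le \alpha'$, and consider the test $\phi$ that accepts iff $\max_i \hat{q}_i \ge 2\alpha'$. The theorem's hypothesis and the desired conclusion together ensure $n = \Omega(\log \qsize/(\alpha')^2)$, so Hoeffding combined with a union bound gives $\|\queries(X) - \queries(\widetilde{\dista}_V)\|_\infty \le \alpha'$ with constant probability when $X \sim \widetilde{\dista}_V^n$, and analogously for $(\widetilde{\distb}^\tau)_V^n$. Combined with Markov's inequality on the mechanism's error, $\phi$ rejects with constant probability under $\widetilde{\dista}_V^n$ and accepts with constant probability under $(\widetilde{\distb}^\tau)_V^n$ (since $q_V \ge C\alpha' \gg 2\alpha'$ for large enough $C$). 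By the data-processing inequality and Pinsker's inequality, the KL-divergence between the two transcript distributions is $\Omega(1)$.

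Finally, Lemma~\ref{lm:kl-div} gives the upper bound $O(n\eps^2)\cdot 16\tau^2 \gamma_2^\ast(U)^2 = O(n\eps^2 \alpha^2/\gamma_2(\qmat,\alpha)^2)$, which combined with the $\Omega(1)$ KL lower bound yields $n = \Omega(\gamma_2(\qmat,\alpha)^2/\eps^2\alpha^2)$, contradicting our assumption. The main obstacle is recognizing the need for the extra mixing step: Lemma~\ref{lm:hard-distros2} alone only produces the weaker bound $\Omega(\gamma_2(\qmat,\alpha)^2/\eps^2(\qmat\bullet U-\alpha)^2)$, and the trick of mixing $\widetilde{\distb}_v$ back toward $\widetilde{\dista}_v$ at rate $\tau \sim \alpha'/G$ is what restores the correct $1/\alpha^2$ dependence. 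One must also verify $\tau \in (0, 1]$ (which follows from $G = \Omega(\alpha/\log(1/\alpha)) \gg \alpha'$ for small enough $c$) and check that the Markov-plus-concentration argument yields a genuine constant gap in acceptance probabilities between the two cases.
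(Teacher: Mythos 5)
Your proposal is correct and follows essentially the same strategy as the paper: pad the dataset so uniform convergence reduces the empirical problem to the population problem, invoke Lemma~\ref{lm:hard-distros2} to produce the hard distribution families, and combine Lemma~\ref{lm:kl-div} with Pinsker's inequality. The one place you genuinely deviate is in how you handle the $(\qmat\bullet U - \alpha)$-versus-$\alpha$ gap: the paper splits into two cases (if $\qmat\bullet U \le 2\alpha$ it uses the distributions from Lemma~\ref{lm:hard-distros2} directly; if $\qmat\bullet U > 2\alpha$ it mixes $\widehat{\distb}_v = (1-\beta)\widetilde{\dista}_v + \beta\widetilde{\distb}_v$ with $\beta \approx \alpha/(\qmat\bullet U)$), whereas you always mix with rate $\tau = C\alpha'/G$. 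These are the same idea --- your $\tau$ specializes to $\Theta(1)$ in the paper's Case~1 and to $\Theta(\alpha/(\qmat\bullet U))$ in Case~2 --- but the unified form is cleaner and avoids the case analysis. Two small things worth tightening: the test threshold and constant $C$ need to be set so that Markov on the mechanism's expected error together with Hoeffding concentration actually gives acceptance probabilities separated by a constant (you flag this at the end, and the paper handles it by choosing $\alpha' = \tfrac18\min_v q_v(\widetilde{\distb}_v)$); and the claim ``the theorem's hypothesis and the desired conclusion together ensure $n=\Omega(\log\qsize/(\alpha')^2)$'' should be replaced by the paper's non-circular version, namely first bounding $\max\{n, C'\log 2\qsize/(\alpha')^2\}$ from below and then using the hypothesis on $\gamma_2(\qmat,\alpha)^2/(\eps^2\alpha^2)$ to conclude that $n$ itself is the dominant term.
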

\begin{proof}
  Let $\alpha' = \Omega( \alpha / \log(1/\alpha) )$ be a value that
  will be decided shortly, and $C'$ be a sufficiently large constant. 
  If we run a $\priv$-DP mechanism $\mech$ on
  $\dsize =  \max\left\{\sc^{\ell_\infty}(\mech, \queries, \alpha'), \frac{C'\log
      2\qsize}{(\alpha')^2}\right\}$ samples drawn i.i.d.~from some
  distribution $\dist$ on $\uni$, then, by 
  classical uniform convergence results, 
  \(
  \ex{\ds \sim\dist^\dsize}{\left\|\queries(\ds)-\queries(\dist)\right\|_\infty}
  \le {\alpha'},
  \)
  where $\queries(\dist) = (q_1(\dist), \ldots,
  q_\qsize(\dist))$. Therefore, the mechanism will satisfy 
  \begin{equation}
    \label{eq:pop-err}
    \ex{\ds \sim
      \dist^\dsize}{\left\|\mech(\ds)-\queries(\dist)\right\|_\infty} 
    \le 2\alpha'.
  \end{equation}
  We will show that for any $\eps$-LDP mechanism $\mech$ such that
  \eqref{eq:pop-err} holds for an arbitrary $\dist$, we must have 
  \begin{equation}
    \label{eq:lb-pop}
    \dsize =
    \Omega\left( \frac{\gamma_2(\qmat, \alpha)^2}{\eps^2 \alpha^2}
    \right).    
  \end{equation}
  Therefore, we get that 
  \(
  \max\left\{ \sc^{\ell_\infty}(\mech, \queries, \alpha'), 
    \frac{C'\log  2\qsize}{(\alpha')^2}\right\}
  = 
  \Omega\left( \frac{\gamma_2(\qmat, \alpha)^2}{\eps^2 \alpha^2},
  \right)\)
  which implies the theorem by the assumption on
  $\frac{\gamma_2(\qmat, \alpha)^2}{\eps^2 \alpha^2}$.

    Let
    $\widetilde{\dista}_1,\dots,\widetilde{\dista}_\qsize$,
    $\widetilde{\distb}_1,\dots,\widetilde{\distb}_\qsize$
    and $\widehat{\pi}$ be the distributions,
    and $\widetilde{M} \in \R^{[\qsize] \times \uni}$ the matrix,
    guaranteed to exist by Lemma \ref{lm:hard-distros2}.
    The matrix $\widetilde{M}$ has entries
    $
        \widetilde{m}_{v,x}
        = {\widetilde{\dista}}_v(x) - {\widetilde{\distb}}_v(x)
    $
    and satisfies
    \[
        \| \widetilde{M} \|_{\ell_\infty \to L_2(\widehat{\pi})}
        \le \frac{4(\qmat \bullet U - \alpha)}{\gamma_2(\qmat, \alpha)}.
    \]
    Equivalently,
    \[
        \max_{f \in \R^\uni: \|f\|_\infty \le 1} \ex{V \sim \widehat{\pi}}{\left(\ex{x \sim \widetilde{\dista}_V}{f_x} - \ex{x \sim \widetilde{\distb}_V}{f_x}\right)^2}
        \le 
        \left( \frac{4(\qmat \bullet U - \alpha)}{\gamma_2(\qmat, \alpha)} \right)^2.
    \]
    By Lemma \ref{lm:kl-div}, this implies
    \begin{equation}\label{eq:divbound}
        \div(\trans(\widetilde{\dista}_{\widehat{\pi}}^n)\|\trans(\widetilde{\distb}_{\widehat{\pi}}^n)) \le O(n \eps^2)\cdot 
        \left( \frac{\qmat \bullet U - \alpha}{\gamma_2(\qmat, \alpha)} \right)^2
    \end{equation}

    Lemma \ref{lm:hard-distros2} guarantees further that
    $q_i(\widetilde{\dista}_v) = 0$ for all $i,v \in [\qsize]$, while
    $ q_v(\widetilde{\distb}_v) \ge \frac{\qmat \bullet U - \alpha /
      4}{O(\log ( 1 / \alpha ) )} $ for all $v$ in the support of
    $\widehat{\pi}$.  Let 
    \(
    \alpha' = \frac{1}{8} \min_{v \in   [\qsize]} q_v(\widetilde{\distb}_v).  
    \)
    Then a mechanism $\mech$ satisfying \eqref{eq:pop-err} can distinguish
    between the distributions $\widetilde{\dista}_{\widehat{\pi}}^n$
    and $\widetilde{\distb}_{\widehat{\pi}}^n$ with constant
    probability, and, by Pinsker's inequality,   
    $
    \div(\trans(\widetilde{\dista}_{\widehat{\pi}}^n)\|\trans(\widetilde{\distb}_{\widehat{\pi}}^n))
    $
    is bounded from below by some constant $C > 0$.
    By (\ref{eq:divbound}),
    this implies that
    \[
        n =
        \Omega\left(
            \frac{\gamma_2(\qmat, \alpha)}{\eps \cdot (\qmat \bullet U - \alpha)} 
        \right)^2
    \]
    samples are required to obtain
    accuracy $\alpha' / 4$ and privacy $\priv$.

    \ul{Case 1: $\qmat \bullet U \le 2 \alpha$}. 
    Recall that $\qmat \bullet U \ge \alpha$.
    Hence, if $\qmat \bullet U \le 2 \alpha$,
    then
    $
        n = \Omega \left(
            \frac{\gamma_2(\qmat, \alpha)^2}{\eps^2 \alpha^2} 
        \right)
    $
    and furthermore
    \[
        \alpha' 
        \ge
        \frac{\qmat \bullet U - \alpha/4}{O(\log(1/\alpha))}
        =
        \Omega \left( \frac{\alpha}{\log ( 1 / \alpha )} \right)
    \]

    \ul{Case 2: $\qmat \bullet U > 2 \alpha$.}
    However, if $\qmat \bullet U > 2 \alpha$,
    then, for $\beta \in [0,1]$,
    we may instead consider the distributions
    $
        \widehat{\distb}_v = (1 - \beta) \cdot \widetilde{\dista}_v + \beta \cdot \widetilde{\distb}_v
    $
    and $\widehat{\dista}_v = \widetilde{\dista}_v$, given for $v \in [\qsize]$.
    We have
    \begin{align*}
      \div(\trans(\widehat{\dista}_{\widehat{\pi}}^n)\|\trans(\widehat{\distb}_{\widehat{\pi}}^n))
        & \le O(\eps^2 n) \cdot \max_{f \in \R^\uni: \|f\|_\infty \le  1} \ex{V \sim \widehat{\pi}}{\left(\ex{x \sim \widehat{\dista}_V}{f_x} - \ex{x \sim \widehat{\distb}_V}{f_x}\right)^2} \\
        & = O(\eps^2 n) \cdot \beta^2 \max_{f \in \R^\uni: \|f\|_\infty \le  1} \ex{V \sim \widehat{\pi}}{\left(\ex{x \sim \widetilde{\dista}_V}{f_x} - \ex{x \sim \widetilde{\distb}_V}{f_x}\right)^2} \\
        &\le O(\eps^2 n) \cdot \beta^2 \cdot \left( \frac{\qmat \bullet U - \alpha}{\gamma_2(\qmat, \alpha)} \right)^2.
    \end{align*}

    Also, $q_i(\widehat{\dista}_v) = 0$ for all $i, v \in [\qsize]$,
    while
    \[
        q_v(\widehat{\distb}_v) = \beta\cdot q_v(\widetilde{\distb}_v)
        \ge
        \beta\cdot \left(
            \frac{\qmat\bullet U-\alpha/4}{O(\log(1/\alpha))}
        \right)
    \]
    for all $i$ in the support of $\widehat{\pi}$.
    In particular,
    if we set
    $\alpha' = \frac{1}{8} \min_v q_v(\widehat{\distb}_v) \ge
    \frac{\beta(\qmat \bullet U - \alpha / 4)}{O(\log ( 1 / \alpha )
      )}$,
    and \eqref{eq:pop-err} holds for $\mech$ and this value of
    $\alpha'$,
    then $\mech$ can distinguish between $\widehat{\dista}_{\widehat{\pi}}^n$ and $\widehat{\distb}_{\widehat{\pi}}^n$.
    This implies
    $\div(\trans(\widehat{\dista}_{\widehat{\pi}}^n)\|\trans(\widehat{\distb}_{\widehat{\pi}}^n))$
    is bounded below by a constant,
    from which we obtain that
    \[
        n
        = \Omega \left( \frac{\gamma_2(\qmat, \alpha)}{\eps \beta \cdot (\qmat\bullet U)} \right)^2
    \]
    samples are required for privacy $\eps$ and accuracy $\alpha'$.
    Indeed,
    by taking $\beta = \frac{U \bullet \qmat}{\alpha}$,
    we get that
    \[
        n = \Omega \left( \left( \frac{\gamma_2(\qmat, \alpha)}{\eps \alpha} \right)^2 \right)
    \]
    samples are required
    for privacy $\eps$ and accuracy $\alpha'$
    which satisfies
    \(
        \alpha'
        \ge
        \frac{\beta(\qmat \bullet U - \alpha / 4)}{O ( \log ( 1 / \alpha ) )}
        =
        \Omega \left( \frac{\alpha}{\log ( 1 / \alpha )} \right).
    \)

    In both cases,
    $
        n = \Omega \left(
            \frac{\gamma_2(\qmat, \alpha)^2}{\eps^2 \alpha^2}
        \right)
    $
    samples are required
    for privacy $\eps$ and accuracy
    $\alpha'$,
    where
    $
        \alpha' = \Omega \left(\frac{\alpha}{\log ( 1 / \alpha )} \right)
    $    
\end{proof}

\begin{theorem}[Formal version of Theorem~\ref{thm:main-local}]\label{thm:sqlbfinal}

   Let $\alpha, \priv \in (0,1]$.
   Let $\queries$ be a collection of queries
   with workload matrix $\qmat$.
   Then, for some
   $\alpha' = \Omega \left( \frac{\alpha}{\log(1 / \alpha)} \right)$, if
    $\frac{\gamma_2(\qmat, \alpha)^2}{\eps^2 \alpha^2} \ge 
    \frac{C\log 2\qsize}{(\alpha')^2} + \frac{C}{\priv^2(\alpha')^2 }$ for a large enough constant
    $C$, we have
    \[
    \sc_{\priv,0}^{\ell_\infty, \mathrm{loc}}(\queries, \alpha')
    = \Omega \left( \frac{\gamma_2(\qmat, \alpha)^2}{\eps^2 \alpha^2} \right).
    \]        
\end{theorem}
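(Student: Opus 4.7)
The plan is to reduce the general workload $\queries$ to a symmetric one and then invoke Theorem~\ref{lm:sqlbsym}. Given $\queries$ on the universe $\uni$ with workload matrix $\qmat$, I first construct the symmetric extension $\queries'$ on the doubled universe $\uni' = \uni^+ \cup \uni^-$: let $\uni^+$ be a copy of $\uni$, let $\uni^-$ be a disjoint copy whose elements are denoted $-\elem$ for $\elem \in \uni^+$, and for each $\query \in \queries$ define the extended query $\query'$ by $\query'(\elem) = \query(\elem)$ on $\uni^+$ and $\query'(-\elem) = -\query(\elem)$ on $\uni^-$. By construction $\queries'$ is symmetric in the sense of Definition~\ref{def:symmetric}, we have $(\queries')^+ = \queries$ (after identifying $\uni^+$ with $\uni$), and $|\queries'| = |\queries| = \qsize$. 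By Lemma~\ref{lm:symdual} the workload matrix $\qmat'$ of $\queries'$ satisfies $\gamma_2(\qmat', \alpha) = \gamma_2(\qmat, \alpha)$.

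Next, I apply Theorem~\ref{lm:sqlbsym} to $\queries'$ at privacy parameter $3\priv$. By tuning the universal constant inside $\alpha' = \Omega(\alpha/\log(1/\alpha))$, I may arrange things so that the target accuracy produced by Theorem~\ref{lm:sqlbsym} equals $4\alpha'$; the condition $\gamma_2(\qmat,\alpha)^2/(\priv^2\alpha^2) \ge C\log(2\qsize)/(\alpha')^2$ in the hypothesis of the present theorem is (up to constants) exactly the hypothesis of Theorem~\ref{lm:sqlbsym}, so we obtain
\[
\sc_{3\priv, 0}^{\ell_\infty, \mathrm{loc}}(\queries', 4\alpha') \;=\; \Omega\!\left(\frac{\gamma_2(\qmat', \alpha)^2}{(3\priv)^2\alpha^2}\right) \;=\; \Omega\!\left(\frac{\gamma_2(\qmat, \alpha)^2}{\priv^2\alpha^2}\right),
\]
where the factor of $9$ is absorbed into the $\Omega(\cdot)$.

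Finally, I apply Lemma~\ref{lm:symtonotsym} in its contrapositive form. Writing $N = \sc_{\priv, 0}^{\ell_\infty, \mathrm{loc}}(\queries, \alpha')$ and recalling $(\queries')^+ = \queries$, Lemma~\ref{lm:symtonotsym} yields the upper bound $\sc_{3\priv, 0}^{\ell_\infty, \mathrm{loc}}(\queries', 4\alpha') \le \max\{N,\ 1/(\priv^2(\alpha')^2)\}$. Combining with the previous display gives
\[
\max\!\left\{N,\ \tfrac{1}{\priv^2(\alpha')^2}\right\} \;=\; \Omega\!\left(\frac{\gamma_2(\qmat, \alpha)^2}{\priv^2\alpha^2}\right).
\]
The extra $C/(\priv^2(\alpha')^2)$ term in the theorem's hypothesis is precisely what is required to guarantee that the right-hand side strictly dominates $1/(\priv^2(\alpha')^2)$, forcing the maximum to be attained by $N$, which gives the desired lower bound on $\sc_{\priv, 0}^{\ell_\infty, \mathrm{loc}}(\queries, \alpha')$.

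All the heavy lifting has already been done in Theorem~\ref{lm:sqlbsym} and in Lemmas~\ref{lm:symtonotsym} and~\ref{lm:symdual}; the only work here is to assemble them and to track constants. There is no substantive obstacle -- the only care required is to make the symmetric extension correctly, to propagate the factor of $3$ in the privacy and the factor of $4$ in the accuracy through the application of Theorem~\ref{lm:sqlbsym}, and to observe that the additional $C/(\priv^2(\alpha')^2)$ term in the hypothesis of Theorem~\ref{thm:sqlbfinal} is exactly what is needed to absorb the $\max$ produced by the symmetrization lemma.
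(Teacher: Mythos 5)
Your proposal is correct and follows essentially the same route as the paper: form the symmetric extension $\queries'$ with workload matrix $(\qmat,-\qmat)$, observe via Lemma~\ref{lm:symdual} that $\gamma_2(\qmat',\alpha)=\gamma_2(\qmat,\alpha)$, use Lemma~\ref{lm:symtonotsym} to turn an accurate $\priv$-LDP mechanism for $\queries$ into a $3\priv$-LDP mechanism for $\queries'$ with a modest overhead in samples, apply Theorem~\ref{lm:sqlbsym} to $\queries'$, and finally use the extra $C/(\priv^2(\alpha')^2)$ term in the hypothesis to ensure the resulting bound is dominated by the sample count rather than the additive term. Your version is slightly more explicit about propagating the factors of $3$ in privacy and $4$ in accuracy, which the paper's proof absorbs silently into the asymptotics, but there is no substantive difference.
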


\begin{proof}
    Let $\queries$ be the workload of queries
    with workload matrix $\qmat$
    and let $\queries'$ be the symmetric workload of queries
    with workload matrix of the form $(\qmat,-\qmat)$. Recall that, by
    Lemma~\ref{lm:symdual}, $\fnorminf(\qmat', \alpha) =
    \fnorminf(\qmat, \alpha)$. 

    For arbitrary $\alpha' > 0$,
    consider a mechanism $\mech$ for $n$ agents
    which achieves
    \[
    \err^{\ell_\infty}(\mech, \queries,n) \le \alpha'.
    \]
    In this case,
    Lemma \ref{lm:symtonotsym} guarantees the existence of
    a mechanism $\mech'$
    which takes $2n + \frac{1}{\eps^2 (\alpha')^2}$ samples as input
    and achieves $\err^{\ell_\infty}(\mech', \queries',n) \le 4\alpha'$.

    Moreover,
    for $\alpha > 0$, Theorem~\ref{lm:sqlbsym} says that
    we can choose $\alpha' = \Omega \left( \frac{\alpha}{\log(1 / \alpha)} \right)$
    so that our sample complexity guarantee for $\mech'$ implies
    \[
        2n + \frac{1}{\eps^2 (\alpha')^2}
        = \Omega \left( \frac{ \gamma_2(\qmat', \alpha)^2 }{\eps^2 \alpha^2} \right).
    \]
    Thus, for some constant $C > 0$,
    $\gamma_2(\qmat',\alpha) =\gamma_2(\qmat,\alpha) \ge \frac{C}{(\alpha')^2 \priv^2}$ 
    implies
    $
        n = \Omega \left(
            \frac{\gamma_2(\qmat',\alpha)}{\alpha^2 \eps^2}
        \right)
        = \Omega\left(\frac{\gamma_2(\qmat,\alpha)}{\alpha^2 \eps^2}\right)
    $.
\end{proof}

\subsection{Applications of the Lower Bounds}

In this subsection we apply Theorem~\ref{thm:sqlbfinal} to several
workloads of interest, and, using known bounds on the approximate
$\fnorminf$ norm, prove new lower bounds on the sample complexity of these
workloads. 

We start with the threshold queries
$\queries^{\textrm{cdf}}_{T}$. Identifying $q_t$ with $t$, we see
that the corresponding workload matrix $\qmat$ is a lower triangular
matrix, with entries equal to $1$ on and below the main diagonal. Let
us consider a different matrix $\qmat' = 2\qmat - J$, where $J$ is the
all-ones $\usize \times \usize$ matrix. Forster et al.~\cite{ForsterSSS03} showed
a lower bound on the margin complexity of $\qmat'$, which implies that
for any $\widehat{\qmat}$ such that $\widehat{w}_{t,\elem}
w'_{t,\elem} \ge 1$ holds for all $t,\elem \in [T]$, we have
\begin{equation}\label{eq:thresh-lb}
\fnorminf(\widehat{\qmat}) = \Omega(\log \usize). 
\end{equation}
Note that, if $\widetilde{\qmat}$ satisfies $\|\widetilde{\qmat} -
\qmat'\|_{1\to\infty} \le \frac12$, then we can take $\widehat{\qmat}
= 2\widetilde{\qmat}$, and \eqref{eq:thresh-lb} implies
\(
\fnorminf(\qmat', 1/2) = \Omega(\log \usize).
\)
Finally, homogeneity and the triangle inequality for $\fnorminf$, and
$\fnorminf(J) = 1$ imply that 
\(
\fnorminf(\qmat, 1/2) 
\ge \frac12\fnorminf(\qmat', 1/2) - \frac12
= \Omega(\log \usize).
\)
Together with Theorem~\ref{thm:sqlbfinal}, this gives
Corollary~\ref{cor:threshold}. 

Next, we consider the parity queries
$\queries^{\textrm{parity}}_{d,w}$. Note that the workload matrix
$\qmat$ of these queries is a submatrix consisting of ${d \choose w}$
rows of the $2^d \times 2^d$ Hadamard matrix. Let $s = 2^d {d \choose  w}$ be the number of entries in $\qmat$. To prove a lower bound on
$\fnorminf(\qmat, \alpha)$, we can use Lemma~\ref{lm:duality} with $U = \qmat$. The
rows of a Hadamard matrix are pairwise orthogonal and have $\ell_2$
norm $2^{d/2}$, and, so, Lemma~\ref{lm:gammastar-dual}, used with $P$ and $Q$
set to appropriately scaled copies of the identity matrices of the
respective dimensions, implies that $\fnorminf^*(U) \le
\sqrt{s2^{d}}$. Moreover, $W\bullet U = \|U\|_1 = s$, and, by
Lemma~\ref{lm:duality}, we have
\[
\fnorminf(\qmat, 1/2) \ge \frac{\sqrt{s}}{2^{(d/2) + 1}}
= \Omega\left({d \choose  w}^{1/2}\right).
\]
This gives Corollary~\ref{cor:parity}. 

Finally, we treat marginal queries. Let us define these queries slightly
more generally than we did in the introduction, by allowing for
negation. We define $\queries^{\textrm{marginal}}_{d,w}$ to consist of
the queries $q_{S, y}(\ds) = \frac{1}{n} \sum_{i=1}^{n} \prod_{j \in
  S} \mathbb{I}[\dsrow_{i,j} = y_j]$, with $S$ ranging over subsets of $[d]$
of size at most $w$, and $y$ ranging over $\{0,1\}^d$. These queries
can be expressed in terms of the $q_S$ queries defined in the
introduction by doubling the dimension $d$. 

To prove a lower bound for $\queries^{\textrm{marginal}}_{d,w}$, we
use the pattern matrix method of Sherstov~\cite{Sherstov11}. We will omit a full
definition of a pattern matrix here, and refer the reader to
Sherstov's paper. Instead, we remark that, denoting by $f$ the
AND function on $w$ bits, a $(d, w, f)$-pattern matrix $\qmat'$ is a
$\frac{(2d)^w}{w^w} \times 2^d$ submatrix of the workload matrix
$\qmat$ for $\queries^{\textrm{marginal}}_{d,w}$. Let $s =2^d
\frac{(2d)^w}{w^w}$ be the number of entries in $\qmat'$.  By
Theorem~8.1.~in~\cite{Sherstov11}, we have that, for any $\alpha \le \frac16$,
\[
\min\left\{\frac{1}{\sqrt{s}}
\|\widetilde{\qmat} \|_{tr}: \|\widetilde{\qmat} -\qmat'\|_{1\to \infty} \le \alpha
\right\}
=\Omega\left(\frac{d}{w}\right)^{\mathrm{deg}_{1/3}(f)/2},
\]
where $\|\widetilde{\qmat}\|_{tr}$ is the trace-norm, i.e., the sum of
singular values of $\widetilde{\qmat}$, and $\mathrm{deg}_{1/3}(f)$ is
the $(1/3)$-approximate degree of $f$, which is known to be $
\Omega(\sqrt{w})$~\cite{NisanS94}. Since
$\frac{1}{\sqrt{s}}\|\widetilde{\qmat} \|_{tr}$ is a lower bound on
$\fnorminf(\widetilde{\qmat})$ (see~\cite[Lemma 3.4]{LinialMSS07}), this implies
\[
\fnorminf(\qmat, 1/6) \ge 
\fnorminf(\qmat', 1/6) =
\Omega\left(\frac{d}{w}\right)^{\Omega(\sqrt{w})},
\]
giving us Corollary~\ref{cor:marginal}.


\section{Non-Interactive Local DP: PAC Learning}

It turns out that we are able to translate
our algorithm and lower bound for answering linear queries in the local model
into an algorithm and lower bound for
\emph{probably approximately correct learning} in the local model.

A concept
$\concept:\uni^+ \rightarrow \{-1,+1\}$
from a concept class $\concepts$
identifies each sample $x$ of $\uni^+$ with a label $c(x)$.
The labelled pair $(x,c(x)) = (x,1)$
may be identified with the sample $x$ of $\uni^+$,
while the labelled pair $(x,c(x)) = (x,-1)$
may be identified with the sample $-x$ of $\uni^-$.
Let $q:\uni \to \{-1,+1\}$ be given by
\[
    q(x) =
    \begin{cases}
        \concept(\elem), &\text{if } \elem \in \uni^+ \\
        -\concept(-\elem), &\text{if } \elem \in \uni^-
    \end{cases}
\]
Then the \emph{loss} of the concept $\concept$ on a dataset
$\overline{X} = ((x_1,y_1),\dots,(x_\dsize,y_\dsize))$,
denoted $\loss_{\overline{X}}(c)$,
is
\begin{align*}
    \loss_{\overline{X}}(c)
    &= \frac{1}{\dsize} \sum_{i = 1}^{\dsize} (1 - \I[f(x_i) = y_i]) \\
    &= \frac{1}{2} - \frac{1}{2\dsize} \sum_{i = 1}^{\dsize} f(x_i) y_i
    = \frac{1}{2} - \frac{1}{2\dsize} \sum_{i = 1}^{\dsize} q(x_i \cdot y_i)
    = \frac{1}{2} - \frac{1}{2} \query(X)
\end{align*}
where $X$ is the dataset
$(x_1 \cdot y_1,\dots,x_\dsize \cdot y_\dsize)$.
In this way, estimating $\loss_{\overline{X}}(c)$
given the dataset $\overline{X}$
is equivalent to estimating $q(X)$
given the dataset $X$.
More generally,
if we consider the query workload $\queries$
consisting of all such queries $\query$
obtained from some concept $c$ of $\concepts$
in this way,
then estimating $\queries(X)$
is equivalent to estimating
$
    \loss_{\overline{X}}(\concepts)
    = (\loss_{\overline{X}}(\concept))_{\concept \in \concepts}
$.
This idea allows us to adapt the algorithm of
Theorem~\ref{thm:local-apxfact}
for estimating linear queries to an algorithm for learning.
The result is stated in terms of the \emph{concept matrix}
$\cmat \in \R^{\concepts \times \uni^+}$ of $\concepts$ with entries given by
\[
    d_{c,x} = c(x)
\]
and takes advantage of the fact that the workload matrix $W$ of the
corresponding query workload $\queries$ is obtained by extending $D$
to $\concepts \times \uni$ in the usual way with $w_{q,x} = d_{c,x}$
and $w_{q,-x} = -d_{c,x}$ for $q \in \queries$ and $x \in \uni^+$ when
$c$ is the concept that corresponds to $q$.  In particular, the
queries $\queries$ are symmetric, and, by Lemma~\ref{lm:symdual},
$\fnorminf(D,\alpha) = \fnorminf(W,\alpha)$.

In order to state our results for agnostic learning,
we need to define notation for population loss, in addition to the
empirical loss defined above.
For a distribution $\dist$ over $\uni^+ \times \{-1,+1\}$,
we will use $\loss_{\dist}(c)$
to denote the loss of the concept $c$ on $\dist$,
given by
\[
    \loss_{\dist}(c)
    = \Pr_{(x,y) \sim \dist} \left[ c(x) \neq y \right].
\]
For $\alpha,\beta > 0$ we will say that the mechanism $\mech$
\emph{($\alpha$,$\beta$)-learns} $\concepts$
with $n$ samples if,
for all distributions $\dist$ over $\uni^+ \times \{-1,+1\}$, 
given as input a dataset $\overline{X} =
((x_1,y_1),\dots,(x_\dsize,y_\dsize))$
of $n$ samples drawn IID from $\dist$,
$\mech$ outputs a concept $c \in \concepts$ and an estimate
$\overline{\loss}$ such that
\[
    \Pr_{\mech,\overline{X}}[
        \loss_{\dist}(c) \le \min_{\concept' \in \concepts} \loss_{\dist}(c') + \alpha
        \text{ and }
        |\overline{\loss} - \loss_{\dist}(c)| \le \alpha
    ]
    \ge 1 - \beta.
\]

Typically, the learning problem does not require outputting an
estimate of the loss $\loss_{\dist}(c)$, since it is usually easy to
compute such an estimate with few additional samples, once a concept
$c$ has been computed. In the local model, however, this would require
an additional round of interactivity. Since we focus on the
non-interactive local model, it is natural to make this additional
requirement on the learning algorithm. 

Since we wish to bound population loss,
it is necessary to assume that there are sufficiently many samples
to guarantee uniform convergence.
It suffices to assume, for some constant $C$,
that the number of samples is at least
$n \ge \frac{C\log 2|\concepts|}{\alpha^2}$ 
to guarantee 
\[
    \Pr_{\overline{X}}
    [\forall \concept \in \concepts, \ |\loss_{\overline{X}}(c) - \loss_{\dist}(c)| \le \alpha]
    \ge 1 - \frac{\beta}{2} 
\]
when $\overline{X}$ consists of $n$ IID samples
drawn from $\mu$.

\begin{theorem}

    Let $\alpha, \beta \in (0,1)$, and let $\priv > 0$.
    There exists
    an $\priv$-LDP mechanism $\mech$
    such that,
    for any concept class $\concepts$ of size $|\concepts| = \qsize$
    with corresponding concept matrix
    $D \in \R^{\concepts \times \uni^+}$,
    it suffices to have
    a dataset 
    $
        \overline{\ds}
        = ((x_1,y_1),\dots,(x_\dsize,y_\dsize))
    $
    of
    \[
        \dsize
        = \max \left\{
            O\left( \frac{\gamma_2(D,\alpha)^2 \log \qsize}{\priv^2 \alpha^2} \right),
            O\left( \frac{\log \qsize}{\alpha^2} \right)
        \right\}
    \]
    samples to guarantee that
    $\mech$ $(\alpha,\beta)$-learns $\concepts$.
    
\end{theorem}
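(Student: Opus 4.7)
The strategy is to reduce $(\alpha, \beta)$-agnostic PAC learning of $\concepts$ to estimating the linear queries in the workload $\queries$ derived from $\concepts$ as in the paragraph preceding the theorem, and then invoke Theorem~\ref{thm:local-apxfact}. For each $\concept \in \concepts$, define the symmetric query $\query_c$ on $\uni = \uni^+ \cup \uni^-$ so that, writing $\ds = (x_1 \cdot y_1, \ldots, x_\dsize \cdot y_\dsize)$, we have $\loss_{\overline{\ds}}(c) = \tfrac{1}{2} - \tfrac{1}{2} \query_c(\ds)$. The resulting workload $\queries$ is symmetric with workload matrix $\qmat$ extending $\cmat$, so by Lemma~\ref{lm:symdual}, $\fnorminf(\qmat, \alpha') = \fnorminf(\cmat, \alpha')$ for every $\alpha' > 0$. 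In particular, the approximate factorization norm that governs Theorem~\ref{thm:local-apxfact} on $\queries$ is exactly $\fnorminf(\cmat, \alpha/c)$ for a suitable constant $c$.

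\textbf{The mechanism.} Run the mechanism $\mech_{\fnorminf, \alpha/8}^{\mathrm{loc}}$ from Theorem~\ref{thm:local-apxfact} applied to $\queries$ with the dataset $\ds$ (obtained locally, since each agent $i$ can form $x_i \cdot y_i$ from its own data) to obtain estimates $\widehat{a}_c$ of $\query_c(\ds)$. Define $\widehat{\loss}(c) = \tfrac{1}{2} - \tfrac{1}{2}\widehat{a}_c$, output $\hat{c} = \arg\min_{c \in \concepts} \widehat{\loss}(c)$, and report the loss estimate $\overline{\loss} = \widehat{\loss}(\hat{c})$. Privacy follows immediately from the local privacy of Theorem~\ref{thm:local-apxfact}, since the post-processing does not touch the raw data.

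\textbf{Accuracy.} Two events must hold simultaneously. First, the proof of Theorem~\ref{thm:local-apxfact} shows that each coordinate of the noise vector is $O(\sigma \|R\|_{2\to\infty}/\sqrt{n})$-subgaussian with $\sigma = O(1/\priv)$; a union bound over the $|\concepts|$ coordinates replaces the expectation bound on $\|RY - \E RY\|_\infty$ by a high-probability tail bound, so that, for $n = \Omega(\fnorminf(\cmat,\alpha/8)^2 \log(|\concepts|/\beta)/(\priv^2 \alpha^2))$, with probability at least $1 - \beta/2$ we have $|\widehat{a}_c - \query_c(\ds)| \le \alpha/4$ for every $\concept$. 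Second, standard uniform convergence for $|\concepts|$ bounded functions (as noted before the theorem) yields, provided $n \ge \Omega(\log(|\concepts|/\beta)/\alpha^2)$, that with probability at least $1 - \beta/2$ over $\overline{\ds}$, $|\loss_{\overline{\ds}}(c) - \loss_\dist(c)| \le \alpha/4$ for all $\concept$. On the intersection of these events (probability at least $1 - \beta$), $|\widehat{\loss}(c) - \loss_\dist(c)| \le \alpha/2$ for every $\concept$, and the triangle inequality then gives both $\loss_\dist(\hat{c}) \le \min_{c' \in \concepts} \loss_\dist(c') + \alpha$ and $|\overline{\loss} - \loss_\dist(\hat{c})| \le \alpha$, establishing that $\mech$ $(\alpha,\beta)$-learns $\concepts$.

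\textbf{Main obstacle.} The only genuinely new step beyond Theorem~\ref{thm:local-apxfact} is upgrading its expectation guarantee to a high-probability guarantee, which just amounts to applying a subgaussian maximal inequality together with a union bound; this introduces a $\log(1/\beta)$ term that is absorbed into $O(\log |\concepts|)$ for constant $\beta$ (and in any case is the standard cost for boosting confidence in agnostic learning). The remaining nontrivial point is bookkeeping: the approximation parameter in Theorem~\ref{thm:local-apxfact} is $\alpha/8$ rather than $\alpha$, but since $\fnorminf(\cmat, \cdot)$ is monotone non-increasing in its second argument, $\fnorminf(\cmat,\alpha/8) \ge \fnorminf(\cmat,\alpha)$, and the stated bound only needs the weaker quantity $\fnorminf(\cmat,\alpha)$, so the constants can be chosen to match.
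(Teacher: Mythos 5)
Your proposal is essentially the intended argument: reduce agnostic learning to estimating the symmetric query workload $\queries$ derived from $\concepts$, run the local approximate factorization mechanism of Theorem~\ref{thm:local-apxfact} (each agent forms $x_i\cdot y_i$ locally), boost the expectation guarantee to a high-probability guarantee via subgaussian tail bounds plus a union bound, add uniform convergence, and output the concept minimizing the estimated loss together with its estimate. The paper omits an explicit proof of this theorem, and what you write is the natural fill-in; the use of Lemma~\ref{lm:symdual} to pass from $\qmat$ to the concept matrix $D$ also matches the paper's remark preceding the theorem.

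One point in your final paragraph is backwards and deserves a correction. You argue that, since $\fnorminf(D,\cdot)$ is non-increasing, $\fnorminf(D,\alpha/8)\geq\fnorminf(D,\alpha)$, and then conclude that the constants ``can be chosen to match.'' But the inequality runs the wrong way: your mechanism's sample complexity involves the \emph{larger} quantity $\fnorminf(D,\alpha/8)$ (or $\fnorminf(D,\alpha/16)$, depending on bookkeeping), while the theorem states the bound with the \emph{smaller} quantity $\fnorminf(D,\alpha)$, so your upper bound is a priori weaker. The gap need not be a constant factor: if every entry of $D$ has magnitude in $(\alpha/8,\alpha)$, then $\fnorminf(D,\alpha)=0$ while $\fnorminf(D,\alpha/8)$ can be arbitrarily large. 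The correct reading of the theorem (consistent with how the paper handles the approximation parameter throughout, e.g.\ in Theorem~\ref{thm:local-apxfact}) is that the bound should have $\fnorminf(D,c\alpha)$ inside the $O(\cdot)$ for some fixed constant $c\in(0,1)$, and your proof establishes exactly that. You should simply state the bound as $O\bigl(\fnorminf(D,\alpha/c)^2\log\qsize/(\priv^2\alpha^2)\bigr)$ for an appropriate absolute constant $c$, rather than trying to argue that $\alpha/8$ and $\alpha$ are interchangeable.
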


Applying the same ideas,
we know that if we estimate
$\min_{\concept \in \concepts} \loss_{\overline{\ds}}(\concept)$,
then we can estimate
$\max_{\query \in \queries} q(X)$.
Similarly, estimating
$\min_{\concept \in \concepts} \loss_{\dist}(\concept)$
is equivalent to estimating
$\max_{\query \in \queries} q(\dist')$
where $\dist'$ is the distribution on $\uni$ obtained from $\dist$
by associating samples of the forms $(x,1)$ and $(x,-1)$
with $x$ and $-x$, respectively.
Since the matrix $\qmat \in \R^{\concepts \times \uni}$ obtained from
$D$ is symmetric, and estimating $\max_{\query \in \queries} q(\dist')$
is precisely what is required for the lower bound of Theorem~\ref{lm:sqlbsym},
we the following lower bound for agnostic learning.
\begin{theorem}

    Let $\beta \in (0,1)$ be a small enough constant, and let $\priv > 0$.
    Let $\concepts$ be a concept class
    with concept matrix $D \in \R^{\concepts \times \uni^+}$.
    For some
    $
        \alpha' = \Omega \left( \frac{\alpha}{\log(1/\alpha)}  \right)
    $,
    if
    $
        \frac{\gamma_2(\qmat,\alpha)}{\priv^2 \alpha^2}
        \ge \frac{C\log{2k}}{\alpha'}
    $
    for a large enough constant $C > 0$,
    then any $\priv$-LDP mechanism $\mech$
    which $(\alpha',\beta)$-learns $\concepts$
    requires
    \[
        n = \Omega \left( \frac{\gamma_2(W,\alpha)^2}{\priv^2 \alpha^2} \right)
    \]
    samples as input.

\end{theorem}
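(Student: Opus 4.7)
The plan is to reduce the agnostic learning lower bound to the distinguishing argument already developed in the proof of Theorem~\ref{lm:sqlbsym}. Let $\qmat \in \R^{\concepts \times \uni}$ be the symmetric extension of the concept matrix $D$, with $w_{q,x} = d_{c,x}$ for $x \in \uni^+$ and $w_{q,-x} = -d_{c,x}$ for $x \in \uni^-$; by Lemma~\ref{lm:symdual}, $\fnorminf(\qmat, \alpha) = \fnorminf(D, \alpha)$. Take an optimal dual witness $U$ for $\fnorminf(\qmat, \alpha)$ and apply Lemma~\ref{lm:hard-distros2} to obtain distributions $\widetilde{\dista}_1, \ldots, \widetilde{\dista}_\qsize$ and $\widetilde{\distb}_1, \ldots, \widetilde{\distb}_\qsize$ on $\uni$, together with a distribution $\widehat{\pi}$ on $[\qsize]$. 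Since $\widetilde{\dista}_v$ is symmetric by construction (equations~\eqref{eq:pdef}) and each $\widetilde{\distb}_v$ is supported on $\uni$, we translate them into distributions on $\uni^+ \times \{-1,+1\}$ via the standard identification $(x, y) \leftrightarrow x \cdot y$, producing learning distributions $\widetilde{\dista}_v'$ and $\widetilde{\distb}_v'$.

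The key observation is that estimating the loss of any concept $c \in \concepts$ on one of these distributions is equivalent to evaluating the corresponding query $\query \in \queries$ on the associated distribution on $\uni$, since $\loss_{\dist'}(\concept) = \frac12 - \frac12 \query(\dist)$. Thus, under every $\widetilde{\dista}_v'$, every concept in $\concepts$ has loss exactly $\frac12$, because Lemma~\ref{lm:hard-distros2} guarantees $\query_i(\widetilde{\dista}_v) = 0$ for all $i, v$. On the other hand, under $\widetilde{\distb}_v'$ with $v$ in the support of $\widehat{\pi}$, the concept $\concept_v$ corresponding to $\query_v$ has loss at most $\frac12 - \frac12 \query_v(\widetilde{\distb}_v)$, which is bounded above by $\frac12 - \Omega(\alpha/\log(1/\alpha))$ by the same lemma.

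To finish, I would choose $\alpha'$ to be a constant fraction of $\min_{v} \query_v(\widetilde{\distb}_v)/2$. If $\mech$ is an $(\alpha', \beta)$-learner, its reported loss estimate $\overline{\loss}$ must lie within $\alpha'$ of $\min_\concept \loss_{\dist}(\concept)$, which equals $\frac12$ under $\widetilde{\dista}_v'$ and is at most $\frac12 - 4\alpha'$ under $\widetilde{\distb}_v'$, so $\overline{\loss}$ distinguishes $\trans(\widetilde{\dista}_{\widehat{\pi}}^\dsize)$ from $\trans(\widetilde{\distb}_{\widehat{\pi}}^\dsize)$ with probability $1 - \beta$. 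Pinsker's inequality then forces the KL-divergence between the two transcript distributions to be $\Omega(1)$, and combining with the bound $\div(\trans(\widetilde{\dista}_{\widehat{\pi}}^\dsize) \| \trans(\widetilde{\distb}_{\widehat{\pi}}^\dsize)) \le O(\dsize \priv^2) \cdot \|\widetilde{M}\|_{\ell_\infty \to L_2(\widehat{\pi})}^2$ from Lemma~\ref{lm:kl-div} and the estimate on $\|\widetilde{M}\|_{\ell_\infty \to L_2(\widehat{\pi})}$ from Lemma~\ref{lm:hard-distros2} delivers the claimed sample complexity lower bound. The main obstacle is repeating the two-case analysis (based on whether $\qmat \bullet U \le 2\alpha$ or $\qmat \bullet U > 2\alpha$) exactly as in the proof of Theorem~\ref{lm:sqlbsym}, in the second case interpolating between $\widetilde{\dista}_v$ and $\widetilde{\distb}_v$ with a mixing weight $\beta_0 = (\qmat \bullet U)/\alpha$ in order to obtain the correct $\Omega(\alpha/\log(1/\alpha))$ accuracy parameter; the assumption $\fnorminf(\qmat, \alpha)^2/(\priv^2 \alpha^2) \ge C \log(2\qsize)/(\alpha')^2$ absorbs the uniform-convergence term, as in Theorem~\ref{thm:sqlbfinal}.
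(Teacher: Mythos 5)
Your proposal matches the paper's intended route: the paper itself gives no explicit proof of this theorem, only a sketch saying that estimating $\min_{c}\loss_\dist(c)$ is equivalent to estimating $\max_q q(\dist')$ and that the latter is ``precisely what is required for the lower bound of Theorem~\ref{lm:sqlbsym}.'' Your filling-in of that sketch — take the symmetric extension $\qmat$ of $D$, build the hard distributions from the dual witness via Lemma~\ref{lm:hard-distros2}, pass to learning distributions through the identification $(x,y)\leftrightarrow x\cdot y$ so that $\loss_{\nu'}(c)=\tfrac12-\tfrac12 q(\nu)$, and observe that the learner's reported loss estimate $\overline{\loss}$ serves as the distinguisher — is exactly right, and you correctly note that no symmetrization reduction (Lemma~\ref{lm:symtonotsym}) is needed because $\qmat$ is symmetric by construction.

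One small slip worth correcting: the learner's guarantee gives $|\overline{\loss}-\loss_\dist(c)|\le\alpha'$ \emph{and} $\loss_\dist(c)\le\min_{c'}\loss_\dist(c')+\alpha'$, so $\overline{\loss}$ lies within $2\alpha'$ (not $\alpha'$) of the minimum loss, except for the one-sided bound $\overline{\loss}\ge\tfrac12-\alpha'$ under $\widetilde{\dista}_v'$ where $\min_{c'}\loss_\dist(c')=\tfrac12$ exactly. This changes the threshold to require, say, $\alpha'\le\frac16\min_v q_v(\widetilde{\distb}_v)$ rather than $\alpha'\le\frac18$, a constant that is absorbed by the $\Omega(\cdot)$. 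Otherwise the argument, including the need to repeat the two-case interpolation from Theorem~\ref{lm:sqlbsym} when $\qmat\bullet U>2\alpha$, is sound.
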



\section{Characterizing Central DP for Large Datasets}
The goal of this section is to show that the sample complexity of releasing a given set of linear queries with workload matrix $\qmat$ is
$$
\scz(\qmat, \alpha, \priv, \privd) = \Theta\left(\frac{\fnorm(\qmat)}{\alpha \priv}\right)
$$
when $\alpha$ is sufficiently small (smaller than some $\alpha^*( \queries, \priv )$).  Or, equivalently, we show that $\errz(\qmat,\dsize, \priv, \privd) = \Theta(\frac{\fnorm(\qmat)}{\priv \dsize})$, when $\dsize$ is sufficiently large (larger than some $n^*(\queries, \priv)$).

The proof consists of two steps.  First, we argue that error $\err(\qmat, \dsize, \priv, \privd) = \Theta(\frac{\fnorm(\qmat)}{\priv \dsize})$ is necessary for \emph{every} $n$ if we restrict attention only to mechanisms that are \emph{data-independent}.  That is, mechanisms that perturb the output with noise from a fixed distribution independent of the dataset.  Then, we apply a lemma of Bhaskara et al.~\cite{BhaskaraDKT12} that says, when $n$ is sufficiently large, any instance-dependent mechanism can be replaced with an instance-independent mechanism with the same error and similar privacy parameters.

\subsection{Data-Independent Mechanisms}

Let $\queries$ be a workload of linear queries over data universe $\uni$ and let $\qmat \in \R^{\queries \times \uni}$ be the matrix form of this workload.  An \emph{instance-independent} mechanism $\mech$ can be written (as a function of the histogram of the dataset) as, 
$$
\mech(\hist) = \frac{1}{n} (\qmat \hist + Z)
$$
where $Z$ is a random variable over $\R^{\queries}$ whose distribution does not depend on $\hist$.  Without loss of generality, we assume $\ex{}{Z}=0$.  Let $\Sigma = \ex{}{ZZ^T}$ be the covariance matrix of $Z$.  Then the $\ell_2$ error of such a mechanism is
\begin{align*}
\errz(\mech, \qmat,\dsize) ={} &\max_{\hist : \|\hist\|_1 = \dsize} \sqrt{ \ex{}{ \frac{\| \mech(\hist) - \frac{1}{n} \qmat \hist \|_2^2}{|\queries|}  } } 
={}  \sqrt{\ex{}{ \frac{ \| Z \|_2^2}{ \dsize |\queries|} }} ={} \sqrt{\frac{\tr(\Sigma)}{ \dsize |\queries|}}
\end{align*}
In this section, we will show that, if $\mech$ is $(\priv,\privd)$-differentially private (for $\priv,\privd$ smaller than some absolute constants), then $\tr(\Sigma) = \Omega(\frac{|\queries| \fnorm(\qmat)^2}{\priv^2})$, and thus $\err(\mech,\qmat, n) = \Omega(\frac{\fnorm(\qmat)}{\priv \dsize })$.

We start with the following basic lemma about differential privacy, which says that the variance of any differentially private algorithm for answering a single query $w$ must be proportional to the sensitivity of the query.
\begin{lemma}[\cite{KasiviswanathanRSU10}] \label{lem:single-query-variance}
For any single-query workload $w \in \R^{|\uni|}$, and any data-independent mechanism $\mech(\hist) = \frac{1}{n} w^\top \hist + \frac{1}{n} z$ that is $(\priv,\privd)$-differentially private for $\priv,\privd$ smaller than some absolute constants, $\ex{}{z^2} \geq \frac{1}{C\priv} \| w \|_\infty$
for some absolute constant $C > 0$.
\end{lemma}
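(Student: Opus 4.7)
The plan is to reduce this single-query statement to a simple two-point indistinguishability argument and then invoke a standard single-parameter variance lower bound for $(\eps,\delta)$-DP. The key observation enabling this reduction is that, because $\mech$ is data-independent, the distributions $\mech(\hist)$ and $\mech(\hist')$ on neighboring $\hist, \hist'$ are translates of one another, and so the DP constraint on $\mech$ translates cleanly into a DP-style constraint on the scalar noise variable $z$ alone.

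First, I would identify the ``hardest'' pair of neighboring histograms for the query $w$. Let $j^* \in \uni$ achieve $|w_{j^*}| = \|w\|_\infty$; assume without loss of generality $w_{j^*} = \|w\|_\infty$, and let $j^{**}$ be any other universe element. Take $\hist$ to be any histogram containing at least one element at $j^{**}$, and let $\hist'$ be obtained by moving one element from $j^{**}$ to $j^*$. Then $\hist$ and $\hist'$ differ in a single person's record, so they are adjacent in the sense defined in Section~\ref{sec:prelims}, and
\[
    w^\top \hist' - w^\top \hist \;=\; w_{j^*} - w_{j^{**}} \;\geq\; \|w\|_\infty
\]
(one can always arrange $w_{j^{**}} \le 0$; otherwise run the argument with a second pair swapping the role of max and min coordinate and take the larger shift, losing only a factor of two). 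Set $\Delta := w^\top \hist' - w^\top \hist$.

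Next, I would translate the DP guarantee into a statement purely about $z$. Since $\mech(\hist) = \tfrac{1}{n}(w^\top \hist + z)$ and similarly for $\hist'$, a change of variables shows that $(\eps,\delta)$-DP of $\mech$ on the pair $(\hist,\hist')$ is exactly equivalent to the assertion that, for every Borel set $S \subseteq \R$,
\[
    \Pr[z \in S] \;\leq\; e^\eps \Pr[z + \Delta \in S] + \delta,
\]
and the reverse inequality. In other words, the laws of $z$ and $z + \Delta$ are $(\eps,\delta)$-indistinguishable.

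The main step is then the single-parameter lower bound: any real random variable $z$ whose law is $(\eps,\delta)$-indistinguishable from its translate by $\Delta$, with $\eps$ and $\delta$ below absolute constants, satisfies $\E[z^2] = \Omega(\Delta^2/\eps^2)$. This is precisely the content imported from~\cite{KasiviswanathanRSU10}; the standard proof considers the test that accepts when $z$ lies in an interval around its mean of half-width $O(\sqrt{\E[z^2]})$, applies Chebyshev to bound acceptance under $z$ from below by a constant, and uses the DP constraint to bound acceptance under $z + \Delta$ --- the latter probability must then also be at least a constant, which forces the translated interval to overlap the original, giving $\Delta = O(\eps \sqrt{\E[z^2]})$. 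Combining this with $\Delta \geq \|w\|_\infty$ yields the claimed bound (up to the apparent typo: the intended conclusion is $\E[z^2] \geq \tfrac{1}{C\eps^2}\|w\|_\infty^2$).

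The only potential obstacle is the single-parameter inequality itself, but this is a well-established black box; the rest of the argument is a purely structural reduction that exploits the data-independence of $\mech$ and the worst-case choice of adjacent histograms tuned to the coordinate of $w$ with maximum absolute value.
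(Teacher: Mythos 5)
The paper does not prove this lemma; it imports it from~\cite{KasiviswanathanRSU10} (and you are right that the stated bound should read $\E[z^2]\ge\|w\|_\infty^2/(C\eps^2)$, as its use in Lemma~\ref{lem:workload-variance} makes clear). Your structural reduction is sound: data-independence turns the DP constraint on $\mech$ into $(\eps,\delta)$-indistinguishability of $z$ and $z+\Delta$. But the sketch of the single-parameter inequality in your last step is wrong, and the error is exactly where the content of the lemma lives. The one-interval Chebyshev argument you describe --- take $I$ of half-width $O(\sigma)$ around $\E[z]$, note $\Pr[z\in I]$ is a large constant, use the DP constraint to force $\Pr[z+\Delta\in I]$ to also be a constant, and conclude the two intervals overlap --- yields only $\Delta = O(\sigma)$. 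The privacy parameter never enters the overlap constraint, so the bound $\Delta = O(\eps\sigma)$ you write down does not follow from this argument. To recover the $1/\eps$ one has to chain: group privacy over the shifts $z, z+\Delta, \dots, z+k\Delta$ shows $z$ and $z+k\Delta$ are $(k\eps,\ O(k)\delta)$-indistinguishable for $k\eps=O(1)$, and then the one-step overlap argument applied to $z$ versus $z+k\Delta$ with $k\asymp 1/\eps$ gives $k\Delta = O(\sigma)$, i.e.\ $\sigma = \Omega(\Delta/\eps)$. This also makes explicit the implicit requirement $\delta = O(\eps)$, not merely ``below an absolute constant'': a random variable uniform on an arithmetic progression of $\Theta(1/\delta)$ steps of size $\Delta$ is $(0, O(\delta))$-indistinguishable from its $\Delta$-translate yet has $\E[z^2]=O(\Delta^2/\delta^2)$, which is far below $\Delta^2/\eps^2$ once $\eps\ll\delta$.

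There is a secondary issue in your first step. Under the replacement adjacency of Section~2, the sensitivity of a single query $w$ is $\max_x w_x - \min_x w_x$, and your ``swap max and min'' patch does not produce $\Delta\ge\|w\|_\infty$ when all entries of $w$ have the same sign (for constant $w$ the mechanism with $z\equiv 0$ is perfectly private, so the inequality fails as literally written). In the paper's application the query is $w=W^\top u$ and the sensitivity polytope $K=WB_1$ is centrally symmetric, so one effectively has $\max_x w_x = -\min_x w_x = \|w\|_\infty$; you should either invoke this symmetry or restate the lemma in terms of the two-sided sensitivity $\max_x w_x - \min_x w_x$.
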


Next, we define the \emph{sensitivity polytope} $K = \qmat B_1^{|\uni|}$, where $B_1^{|\uni|} = \{ h \in \R^{|\uni|} : \|h\|_1 \leq 1\}$.  With this definition, we have that for any pair of neighboring datasets $\ds, \ds'$ with associated histograms $\hist, \hist'$, we have $\qmat (\hist - \hist') \in K$.  The next lemma says that the covariance matrix $\Sigma$ defines an ellipsoid that contains at least a constant multiple of the sensitivity polytope.

\begin{lemma} \label{lem:workload-variance}
Let $W$ be a workload matrix such that the sensitivity polytope $K$ is full dimensional.  Let $\mech$ be an $(\priv,\privd)$-differentially private data-independent mechanism for $\qmat$ that has covariance matrix $\Sigma$, for $\priv,\privd$ smaller than some absolute constants.   Then $\Sigma$ is invertible, and
$$
\max_{y \in K} \| \Sigma^{-1/2} y\|_2^2 = \max_{y \in K} y^{\top} \Sigma^{-1} y \leq C^2 \priv^2
$$
for some absolute constant $C > 0$.
\end{lemma}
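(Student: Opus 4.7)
The plan is to reduce the multidimensional problem to many one-dimensional problems by projecting the mechanism along arbitrary directions $a \in \R^{\queries}$, apply Lemma~\ref{lem:single-query-variance} to each projection, and then dualize using the standard variational identity $y^\top \Sigma^{-1} y = \max_{a \neq 0}(a^\top y)^2/(a^\top \Sigma a)$. Specifically, for any fixed $a$, the scalar statistic $a^\top \mech(\hist) = \frac{1}{\dsize}((\qmat^\top a)^\top \hist + a^\top Z)$ is, by post-processing, an $(\priv,\privd)$-DP data-independent mechanism for the single linear query $w := \qmat^\top a$; the noise $z := a^\top Z$ has variance $a^\top \Sigma a$. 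Lemma~\ref{lem:single-query-variance} then yields a lower bound of the form $a^\top \Sigma a \geq \|\qmat^\top a\|_\infty^2/(C^2\priv^2)$ for an absolute constant $C$ (after accounting for the sensitivity scaling of the projected query).

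Invertibility of $\Sigma$ falls out immediately. Full-dimensionality of $K = \qmat B_1^{|\uni|}$ is equivalent to $\qmat$ having full row rank, so $\qmat^\top a \neq 0$, and hence $\|\qmat^\top a\|_\infty > 0$, for every nonzero $a \in \R^{\queries}$. Combining with the previous step gives $a^\top \Sigma a > 0$ for every such $a$, so $\Sigma \succ 0$.

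To finish, I would identify $\|\qmat^\top a\|_\infty$ with the support function of $K$ at $a$: for any $y = \qmat \hist \in K$ with $\|\hist\|_1 \leq 1$, one has $|a^\top y| = |(\qmat^\top a)^\top \hist| \leq \|\qmat^\top a\|_\infty$, with equality attained at $\hist = \pm e_{\elem^\ast}$ for $\elem^\ast$ the argmax coordinate of $\qmat^\top a$. Swapping the two maximizations in the variational identity then yields
\[
\max_{y \in K} y^\top \Sigma^{-1} y
= \max_{a \neq 0} \frac{\max_{y \in K} (a^\top y)^2}{a^\top \Sigma a}
= \max_{a \neq 0} \frac{\|\qmat^\top a\|_\infty^2}{a^\top \Sigma a}
\leq C^2 \priv^2,
\]
as claimed.

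The principal obstacle is really just bookkeeping: verifying that the constant produced by Lemma~\ref{lem:single-query-variance} aligns with the target $C^2\priv^2$ after the one-dimensional reduction. The structural content---that the sensitivity polytope $K$ must be contained in the ellipsoid $\{y : y^\top \Sigma^{-1} y \leq C^2\priv^2\}$---is then a clean consequence of the duality between support functions of convex sets and directional variances of the noise.
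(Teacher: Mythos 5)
Your proposal is correct and takes essentially the same approach as the paper: both project the mechanism along a direction $a$, apply Lemma~\ref{lem:single-query-variance} to the resulting one-dimensional mechanism to obtain $a^\top\Sigma a \geq \|\qmat^\top a\|_\infty^2/(C^2\priv^2)$, identify $\|\qmat^\top a\|_\infty$ with the support function of $K$, and then dualize. The only cosmetic difference is in the final step: the paper changes variables $v=\Sigma^{1/2}u$ and uses the equality case of Cauchy--Schwarz, while you invoke the equivalent variational identity $y^\top\Sigma^{-1}y = \max_{a\neq 0}(a^\top y)^2/(a^\top\Sigma a)$ and swap the two maximizations.
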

\begin{proof}
By post-processing, for any $u \in \R^{|\uni|}$, $$ u^\top \mech(\hist) = \frac{1}{n} u^\top \qmat \hist + \frac{1}{n} u^\top Z$$ is an $(\eps, \delta)$-DP mechanism for the single query $u^\top \qmat$. The sensitivity polytope of the workload $u^\top W$ is the line $[-h_K(u), h_K(u)]$, where $h_K(u) = \max_{y \in K}{u^\top y}$ is the support function.  By Lemma~\ref{lem:single-query-variance}, if $\mech$ is an $(\priv,\privd)$-differentially private mechanism, then for some constant $C$, 
\begin{equation}\label{eq:dir-var}
\| \Sigma^{1/2} u \|_2 = \sqrt{u^\top \Sigma u} \geq \frac{h_K(u)}{C\priv}.
\end{equation}
If $K$ is full dimensional, then in particular we have $h_{K}(e_i) > 0$ for any standard basis vector $e_i$, which implies that the matrix $\Sigma$ is positive definite and invertible.

By change of variables we can write $v = \Sigma^{1/2} u$ and rewrite \eqref{eq:dir-var} as
$$
\| v \|_2 \geq \frac{1}{C \priv} \cdot h_{K}(\Sigma^{-1/2} v) = \frac{1}{C \priv} \cdot \max_{y \in K} (\Sigma^{-1/2} v)^\top y = \frac{1}{C \priv} \cdot \max_{y \in K} v^\top \Sigma^{-1/2} y
$$
Since the above holds for any unit vector $v \in \mathbb{S}^{|\uni|-1}$, we have
$$
\max_{y \in K} \| \Sigma^{-1/2} y \|_2 =  \max_{y \in K}\max_{v \in \mathbb{S}^{|\uni|-1}}{v^\top \Sigma^{-1/2} y} = \max_{v \in \mathbb{S}^{|\uni|-1}}\max_{y \in K}{v^\top \Sigma^{-1/2} y} \leq C \priv
$$
where the first equality is the equality-case of Cauchy-Schwarz.  
\end{proof}

Recall that for a matrix $\qmat \in \R^{\queries \times \uni}$,
$$\fnorm(\qmat) = \inf \left\{ \tfrac{1}{ |\queries|^{1/2} }\|R\|_F \|A\|_{1 \to 2}: RA = \qmat \right\}.$$ 
We now prove our main result, which shows that the error of data-independent private mechanisms must be proportional to $\fnorm(\qmat)$.

\begin{theorem} \label{thm:indep-lb}
Let $\qmat$ be a workload matrix.  Let $\mech$ is a $(\priv,\privd)$-differentially private data-independent mechanism for $\qmat$ with covariance matrix $\Sigma$, for $\priv,\privd$ smaller than some absolute constants.  Then
$$\err^{\ell_2^2}(\mech,\qmat, n) = \Omega\left(\frac{\fnorm(\qmat)}{C \priv n}\right).$$
\end{theorem}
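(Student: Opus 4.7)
The plan is to read a factorization of $\qmat$ directly off the noise covariance $\Sigma$. Lemma~\ref{lem:workload-variance} states that $\Sigma^{-1/2}$ maps the sensitivity polytope $K = \qmat B_1^{|\uni|}$ into the ball of $\ell_2$-radius $O(\priv)$. This is morally the statement that $R := \Sigma^{1/2}$ is the ``reconstruction'' half of a good factorization of $\qmat$, which is exactly what is needed to lower bound $\tr(\Sigma)$ by $\fnorm(\qmat)^2$ up to an $\priv^{-2}$ factor, and thereby translate Lemma~\ref{lem:workload-variance} into a statement about $\fnorm$.

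Concretely, I would set $R = \Sigma^{1/2}$ and $A = \Sigma^{-1/2} \qmat$, so that $RA = \qmat$. The two norms appearing in the definition of $\fnorm$ are immediate to evaluate: $\|R\|_F = \sqrt{\tr(\Sigma)}$ by definition, and for each index $i$ the column $\qmat e_i$ lies in $K$, so Lemma~\ref{lem:workload-variance} gives
\[
\|A e_i\|_2^2 = (\qmat e_i)^\top \Sigma^{-1} (\qmat e_i) \leq \max_{y \in K} y^\top \Sigma^{-1} y \leq C^2 \priv^2,
\]
and hence $\|A\|_{1 \to 2} \leq C \priv$. Substituting into the definition of the factorization norm yields
\[
\fnorm(\qmat) \leq \frac{1}{|\queries|^{1/2}} \|R\|_F \|A\|_{1 \to 2} \leq \frac{C \priv}{|\queries|^{1/2}} \sqrt{\tr(\Sigma)},
\]
which rearranges to $\tr(\Sigma) \geq |\queries| \fnorm(\qmat)^2 / (C^2 \priv^2)$. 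Plugging this into the expression for $\err^{\ell_2^2}(\mech, \qmat, n)$ in terms of $\tr(\Sigma)$ established at the start of the section then gives the desired $\Omega(\fnorm(\qmat)/(\priv n))$ bound.

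The only real subtlety is that forming $A = \Sigma^{-1/2} \qmat$ requires $\Sigma$ to be invertible, which by Lemma~\ref{lem:workload-variance} is guaranteed when $K$ is full dimensional. When $K$ is degenerate I would restrict everything to $\mathrm{range}(\qmat) = \mathrm{span}(K)$: the component of the noise $Z$ orthogonal to this subspace affects neither the $\ell_2^2$ error nor the privacy guarantee, so we may project it out and interpret $\Sigma^{-1/2}$ on the relevant subspace as a pseudoinverse. I do not anticipate any serious obstacle; once Lemma~\ref{lem:workload-variance} is read as the statement ``$\Sigma^{1/2}$ is the right half of a good factorization of $\qmat$,'' the rest is a one-line algebraic manipulation.
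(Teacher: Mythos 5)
Your proposal is correct and is essentially identical to the paper's proof: both take $R = \Sigma^{1/2}$, $A = \Sigma^{-1/2}\qmat$, bound $\|A\|_{1\to 2}\le C\priv$ via Lemma~\ref{lem:workload-variance}, identify $\|R\|_F = \sqrt{\tr(\Sigma)}$ with $|\queries|^{1/2}\, n\, \err^{\ell_2^2}(\mech,\qmat,n)$, and combine. Your remark about handling a degenerate sensitivity polytope by projecting onto $\mathrm{range}(\qmat)$ is a small extra care step that the paper's proof glosses over; it does not change the argument.
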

\begin{proof}
Let $w_1,\dots,w_{|\uni|}$ be the columns of the workload matrix $\qmat$.  Let $A = \Sigma^{-1/2} \qmat$ with columns $a_1,\dots,a_{|\uni|}$ and let $R = \Sigma^{1/2}$ so that $RA = \qmat$.  By Lemma~\ref{lem:workload-variance}, the matrix $A$ is well defined, and for every $i$, $\| a_i \| = \| \Sigma^{-1/2} w_i \|_2 \leq C \priv$.  Hence $\|A\|_{1 \to 2} \leq C \priv$.  We also have
\[
\|R\|_F = \tr(R^\top R)^{1/2} = \tr(\Sigma)^{1/2} = |\queries|^{1/2}
\cdot n\cdot \err^{\ell_2^2}(\mech, W, \dsize).
\]
Combining the inequalities, we get 
\[
\fnorm(W) \le \frac{1}{|\queries|^{1/2}} \|R\|_F \|A\|_{1\to 2} \le C
\priv \cdot n \cdot \err^{\ell_2^2}(\mech, W, \dsize).
\]
The theorem follows from rearranging this inequality.
\end{proof}

\subsection{From Data-Dependent to Data-Independent Mechanisms}

In this section we describe a reduction of Bhaskara et al.~\cite{BhaskaraDKT12} showing that any data-dependent mechanism with small error for datasets of arbitrary size can be converted into a data-independent mechanism with approximately the same error.
\begin{lemma}[\cite{BhaskaraDKT12}] \label{thm:indep-reduction}
Let $\qmat \in \R^{\queries \times \uni} $ be a workload matrix.  For every $(\priv,\privd)$-differentially private mechanism $\mech$, there exists a $(2\priv,2e^{\eps}\delta)$-differentially private data-independent mechanism $\mech'$ such that
$$
\err^{\ell_2^2}(\mech',\qmat, \dsize) \leq \frac{1}{\dsize} \max_{m \in \mathbb{N}} (m \cdot \err^{\ell_2^2}(\mech, \qmat, m))
$$
\end{lemma}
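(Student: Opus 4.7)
The plan is to construct a data-independent mechanism $\mech'$ explicitly from $\mech$, inheriting privacy up to a constant factor and preserving the $\ell_2^2$ error. Let $m^* \in \N$ achieve $m^* \cdot \err^{\ell_2^2}(\mech, \qmat, m^*) = \max_m m \cdot \err^{\ell_2^2}(\mech, \qmat, m)$ (assuming the maximum is attained; otherwise a supremum argument suffices). The mechanism $\mech'$ will be built using invocations of $\mech$ on datasets of size $m^*$, so that the error inherited from $\mech$ on this scale matches the right-hand side of the claimed bound after rescaling from $m^*$ back to $n$.

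The key construction runs $\mech$ on carefully chosen random datasets of size $m^*$ that are coupled to the input $h$, and then combines the outputs via linear post-processing to extract a noise variable $Z$ whose distribution does not depend on $h$. The crucial use of linearity is that, for any auxiliary dataset $H$ known to $\mech'$, the data-dependent contribution $\qmat H$ can be subtracted off exactly, so only the randomness that $\mech$ itself injected remains. The resulting output can then be written as $\frac{1}{n}\qmat h + \frac{1}{n} Z$, witnessing that $\mech'$ is data-independent in the sense required by this section. Invoking $\mech$ on two datasets that each change by at most one element under a neighboring change to $h$, and then applying group privacy together with basic composition, yields the $(2\priv, 2e^\priv\privd)$-DP guarantee.

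The error analysis is then a direct calculation. By the definition of $\err^{\ell_2^2}(\mech, \qmat, m^*)$, the noise introduced by $\mech$ on any size-$m^*$ dataset has expected squared $\ell_2$-norm at most $(m^*)^2 \cdot |\queries| \cdot \err^{\ell_2^2}(\mech, \qmat, m^*)^2$ (the factor of $(m^*)^2$ accounts for the $1/m^*$ scaling in the definition of the error). The triangle inequality and linearity of expectation then imply that $\E\|Z\|_2^2$ is bounded by the same quantity up to constants. Dividing by $n^2|\queries|$ and taking square roots gives $\err^{\ell_2^2}(\mech', \qmat, n) \leq \frac{m^* \cdot \err^{\ell_2^2}(\mech, \qmat, m^*)}{n}$, which equals the claimed bound by the choice of $m^*$.

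The principal obstacle is the design of the coupling that produces the data-independent $Z$. A naive attempt, such as sampling an independent random padding $H$ and running $\mech(h \cup H)$, fails because the noise distribution of $\mech(h \cup H)$ remains a function of the combined dataset, which depends on $h$. A successful coupling must arrange for the residual noise, after subtracting the linear contribution of the (known) auxiliary randomness, to be \emph{identically} distributed across all inputs $h$, while simultaneously ensuring that the datasets fed into $\mech$ differ by at most a single element under a neighboring change to $h$ so that composition yields the desired privacy. These combinatorial and probabilistic choices constitute the core technical content of the Bhaskara et al.~reduction; once they are in place, the remaining verifications of data-independence, privacy via composition, and the error bound proceed as sketched above.
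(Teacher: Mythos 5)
The paper does not prove this lemma; it cites it directly from Bhaskara et al.\ and uses it as a black box, so there is no in-paper proof to compare against. That said, your proposal has a genuine gap, and it is exactly the gap you flag yourself: you defer the construction of the coupling, and that construction is essentially the entire content of the lemma. Everything else in your sketch (choosing $m^*$, the error calculation, rescaling by $m^*/n$) is a routine verification once the construction is in hand.

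Worse, the sketch as written is internally inconsistent about what the construction must look like. You say $\mech$ is run on ``random datasets of size $m^*$ that are coupled to the input $h$,'' and you derive the privacy guarantee by observing that these datasets ``change by at most one element under a neighboring change to $h$'' and then applying composition. But if the auxiliary datasets fed to $\mech$ depend on $h$, then the residual noise $Z$ obtained after subtracting off $\qmat H$ is in general still $h$-dependent: for a generic data-dependent $\mech$, the \emph{distribution} of $\mech(H) - \tfrac{1}{m^*}\qmat H$ depends on $H$, not just its mean, so subtracting the linear term does not ``leave only the randomness $\mech$ injected'' in an $h$-free way. The construction in Bhaskara et al.\ goes the other way: the reference dataset(s) fed to $\mech$ are drawn \emph{independently} of $h$, so data-independence of $Z$ is immediate. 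The work then all goes into the privacy argument, which is not a composition argument at all. One must show that the fixed noise distribution $Z$ satisfies $\Pr[Z\in S]\le e^{2\priv}\Pr[Z+y\in S]+2e^{\priv}\privd$ for every vector $y$ in the sensitivity polytope $K$, and this is done by a change-of-variable/coupling argument on the random reference dataset: a shift of $Z$ by $y=w_x-w_{x'}$ corresponds (in distribution) to swapping one element of the reference dataset from $x'$ to $x$, and one then invokes DP of $\mech$ on that adjacent pair, together with the near-invariance of the reference distribution under such swaps. The factor of $2$ in the privacy parameters arises from having to pass through an intermediate reference distribution, not from running $\mech$ twice and composing. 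So the two ingredients you would need to supply are (i) the choice of distribution over reference datasets that is stable under single-element swaps, and (ii) the change-of-variable privacy argument for the fixed noise $Z$; your sketch replaces both with a composition story that is incompatible with the data-independence you also need.
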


As an immediate, corollary, lower bounds for data-independent mechanisms imply lower bounds for arbitrary data-dependent mechanisms for some dataset size $n^*$.  Thus we obtain the following theorem by combining Theorem~\ref{thm:indep-lb} with Lemma~\ref{thm:indep-reduction}.

\begin{theorem}\label{thm:central-err}
Let $\queries$ be linear queries with workload matrix $W \in \R^{\queries \times \uni}$.  Then for every $\priv,\privd$ smaller than some absolute constants, there exists $n^* \in \mathbb{N}$ such that
$$
\forall \dsize \leq \dsize^*~~\err_{\priv,\privd}^{\ell_2^2}(\queries,\dsize) \geq \frac{\fnorm(\qmat)}{C \priv \dsize}.
$$
\end{theorem}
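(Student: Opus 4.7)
The plan combines the data-dependent to data-independent reduction of Lemma~\ref{thm:indep-reduction} with the data-independent lower bound of Theorem~\ref{thm:indep-lb}, applied to a single $(\priv,\privd)$-DP mechanism that is simultaneously near-optimal at every dataset size. The key move, in contrast to a proof that only gives the bound at one dataset size, is to aggregate optimal mechanisms across all sizes into one composite and then invoke the Bhaskara et al.\ reduction once, yielding a universal estimate on $\sup_m m\cdot\err^{\ell_2^2}_{\priv,\privd}(\queries, m)$.

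For each $m$, fix a near-optimal $(\priv,\privd)$-DP mechanism $\mech^*_m$ with $\err^{\ell_2^2}(\mech^*_m, W, m) \le 2\,\err^{\ell_2^2}_{\priv,\privd}(\queries, m)$. Since $|X|$ is public, these combine into a single $(\priv,\privd)$-DP mechanism $\mech^\dagger$ that runs $\mech^*_{|X|}$ on input $X$. Applying Lemma~\ref{thm:indep-reduction} to $\mech^\dagger$ produces a data-independent $(2\priv, 2e^\priv\privd)$-DP mechanism $\mech'$ with
\[
\err^{\ell_2^2}(\mech', W, \dsize) \;\le\; \frac{2}{\dsize}\,\sup_{m \in \mathbb{N}} m\cdot \err^{\ell_2^2}_{\priv,\privd}(\queries, m) \qquad \forall \dsize.
\]
Theorem~\ref{thm:indep-lb} applied to $\mech'$ (whose parameters remain within the permitted range) gives $\err^{\ell_2^2}(\mech', W, \dsize) \ge \fnorm(W)/(C_0\priv\dsize)$ for every $\dsize$. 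Chaining and cancelling $\dsize$ produces the key estimate
\[
\sup_{m \in \mathbb{N}} m\cdot \err^{\ell_2^2}_{\priv,\privd}(\queries, m) \;\ge\; \fnorm(W)/(2C_0\priv).
\]
Let $n^*$ be the largest $m$ approximately attaining this supremum, so $\err^{\ell_2^2}_{\priv,\privd}(\queries, n^*) \ge \fnorm(W)/(4C_0\priv n^*)$.

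To extend this single-point bound to every $\dsize \le n^*$, I would use two ingredients. First, the map $\dsize \mapsto \err^{\ell_2^2}_{\priv,\privd}(\queries, \dsize)$ is non-increasing, so $\err(\dsize) \ge \err(n^*)$ whenever $\dsize \le n^*$. Second, any $\priv$-DP mechanism incurs error $\Omega(1)$ when the dataset is small enough, which directly handles the regime where $\fnorm(W)/(C\priv\dsize)$ would otherwise exceed the trivial $O(1)$ upper bound on error. The maximality of $n^*$ in attaining the supremum, combined with these facts, converts the sup-bound into a pointwise bound across the entire range $\dsize \le n^*$ after absorbing a universal constant into $C$.

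The main obstacle is exactly this lifting: monotonicity alone delivers only $\err(\dsize) \ge \fnorm(W)/(C_0\priv n^*)$, which is strictly weaker than the target $\fnorm(W)/(C_0\priv\dsize)$ whenever $\dsize < n^*$. Choosing $n^*$ so that the supremum is essentially saturated (i.e., $\dsize\cdot\err(\dsize)$ stays comparable to $\fnorm(W)/\priv$ across the window $\dsize\le n^*$), together with an appeal to the trivial $\Omega(1)$ DP error lower bound in the very-small-$\dsize$ regime, closes this gap up to the universal constant $C$ in the theorem.
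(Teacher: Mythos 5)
Your first two paragraphs are essentially the paper's own proof: the paper obtains Theorem~\ref{thm:central-err} precisely by combining Lemma~\ref{thm:indep-reduction} with Theorem~\ref{thm:indep-lb}, and your aggregation of near-optimal per-size mechanisms into one $(\priv,\privd)$-DP mechanism (valid because the dataset size is public, so neighbouring datasets have the same size) is a correct way to make the paper's one-line derivation precise. This gives $\sup_m m\cdot\err^{\ell_2^2}_{\priv,\privd}(\queries,m) \ge \fnorm(\qmat)/(C_0\priv)$, hence the claimed bound at some size $n^*$, and in fact at every size $\dsize \ge n^*$, since $\dsize \mapsto \dsize\cdot\err^{\ell_2^2}_{\priv,\privd}(\queries,\dsize)$ is non-decreasing (pad a size-$\dsize'$ dataset with $\dsize-\dsize'$ fixed dummy rows and rescale the output affinely; this is one of the ``standard transformations'' the paper invokes). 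Note also that a ``largest $m$ approximately attaining the supremum'' need not exist, since this sequence is non-decreasing; any $m$ with $m\cdot\err^{\ell_2^2}_{\priv,\privd}(\queries,m)\ge \tfrac12\sup$ serves.

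The genuine gap is your final lifting step to every $\dsize\le n^*$, and it cannot be repaired, because that inequality is false for small $\dsize$ whenever $\fnorm(\qmat)\gg\|\qmat\|_{1\to\infty}$: the mechanism that outputs the true answers on a fixed dataset is $0$-DP and has $\ell_2^2$-error at most $2\|\qmat\|_{1\to\infty}$, so, e.g., for threshold queries (where $\fnorm(\qmat)=\Theta(\log T)$ and $\|\qmat\|_{1\to\infty}=1$) one has $\err^{\ell_2^2}_{\priv,\privd}(\queries,1)\le 2\ll\fnorm(\qmat)/(C\priv)$ for constant $\priv$. Neither of your patches closes this: monotonicity of $\err$ in $\dsize$ only yields $\fnorm(\qmat)/(C\priv n^*)$, as you note, and the ``$\Omega(1)$ error for tiny datasets'' observation gives a constant lower bound, which is useless exactly in the regime where the target $\fnorm(\qmat)/(C\priv\dsize)$ exceeds the trivial $O(1)$ upper bound on error---there no lower-bound argument can succeed because the statement fails. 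Likewise, choosing $n^*$ so that $\dsize\cdot\err^{\ell_2^2}_{\priv,\privd}(\queries,\dsize)$ stays comparable to $\fnorm(\qmat)/\priv$ throughout $\dsize\le n^*$ is impossible, since that quantity is non-decreasing and is at most $2\|\qmat\|_{1\to\infty}$ at $\dsize=1$. The resolution is that the quantifier in the theorem statement is evidently a typo: the intended claim (see the opening of the section, ``when $\dsize$ is sufficiently large (larger than some $n^*$)'', and Corollary~\ref{cor:central-sc}, which is stated for $\alpha\le\alpha^*$, i.e., the large-$\dsize$ regime) is that the bound holds for all $\dsize\ge n^*$. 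Under that reading, your first two paragraphs plus the monotonicity of $\dsize\cdot\err^{\ell_2^2}_{\priv,\privd}(\queries,\dsize)$ complete the proof, and the final paragraph should simply be dropped.
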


By standard transformations (see e.g.~\cite{BunUV14}), we can convert this to the following sample complexity lower bound,
\begin{corollary}\label{cor:central-sc}
Let $\queries$ be linear queries with workload matrix $W \in \R^{\queries \times \uni}$.  Then for every $\priv,\privd$ smaller than some absolute constants, there exists $\alpha^* > 0$ such that
$$
\forall \alpha \leq \alpha^*~~\sc_{\priv, \privd}^{\ell_2^2}(\queries, \alpha) \geq \frac{\fnorm(\qmat)}{C \priv \alpha}
$$
\end{corollary}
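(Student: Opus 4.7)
The plan is to apply the routine duality between error-versus-$n$ bounds and sample-complexity-versus-$\alpha$ bounds to Theorem~\ref{thm:central-err}, taking advantage of the fact that the optimal error $\err^{\ell_2^2}_{\priv,\privd}(\queries,\cdot)$ is monotonically non-increasing in $n$. (I read Theorem~\ref{thm:central-err} as providing the lower bound $\err^{\ell_2^2}_{\priv,\privd}(\queries,n) \geq \fnorm(\qmat)/(C\priv n)$ for all \emph{sufficiently large} $n$, i.e.\ for every $n \geq n^*$; this is the natural reading given the surrounding discussion of the high-accuracy/large-$n$ regime.)

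Concretely, I would set $\alpha^* := \fnorm(\qmat)/(2C\priv n^*)$, with the factor of $2$ included as a safety margin to avoid a boundary case. Fix any $\alpha \leq \alpha^*$, and let $n := \sc_{\priv,\privd}^{\ell_2^2}(\queries,\alpha)$; by the definition of sample complexity, $\err^{\ell_2^2}_{\priv,\privd}(\queries,n) \leq \alpha$. I first argue that $n \geq n^*$: otherwise, by monotonicity of the optimal error in $n$, one would have $\err^{\ell_2^2}_{\priv,\privd}(\queries,n) \geq \err^{\ell_2^2}_{\priv,\privd}(\queries,n^*) \geq \fnorm(\qmat)/(C\priv n^*) = 2\alpha^* > \alpha$, a contradiction. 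Now that $n \geq n^*$, Theorem~\ref{thm:central-err} applies and yields $\alpha \geq \err^{\ell_2^2}_{\priv,\privd}(\queries,n) \geq \fnorm(\qmat)/(C\priv n)$, which rearranges to the claimed lower bound $n \geq \fnorm(\qmat)/(C\priv\alpha)$, after absorbing the factor of $2$ into the implicit constant.

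There is essentially no substantive obstacle here: the entire argument is book-keeping and relies only on (i) Theorem~\ref{thm:central-err} and (ii) the monotonicity of $\err^{\ell_2^2}_{\priv,\privd}(\queries,n)$ in $n$. The latter is a standard fact---any $(\priv,\privd)$-DP mechanism for $n$ samples gives a mechanism for $n+1$ samples with at most a negligible increase in error from the additional datum, via a standard padding/subsampling argument---and is exactly the ``standard transformation'' cited from~\cite{BunUV14}. The only minor subtlety is the boundary case $\alpha = \alpha^*$, and this is why I define $\alpha^*$ a constant factor below the obvious threshold $\fnorm(\qmat)/(C\priv n^*)$; the loss is absorbed into $C$.
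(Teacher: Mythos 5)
Your proposal is the standard ``inversion'' of an error lower bound into a sample-complexity lower bound, which is exactly what the paper means by its one-line appeal to ``standard transformations (see e.g.~\cite{BunUV14}).'' The paper itself gives no explicit proof, so this is as close a match as one can ask for. Two remarks are worth making.

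First, you are right to re-read the quantifier in Theorem~\ref{thm:central-err}. As literally printed (``$\forall \dsize \leq \dsize^*$''), the statement cannot be correct: at $\dsize = 1$ the claimed lower bound $\fnorm(\qmat)/(C\priv)$ would exceed the trivial upper bound $\|\qmat\|_{1\to\infty}$ whenever $\priv$ is small, which contradicts the existence of the zero mechanism. The derivation sketched in the paper (combining Theorem~\ref{thm:indep-lb} with the Bhaskara et al.\ reduction) naturally yields a lower bound at \emph{some} $n^*$, or, combined with monotonicity considerations, for all $n \geq n^*$; your reading is the sensible one, and your derivation of the corollary from it is correct.

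Second, a small caveat on the monotonicity claim. You invoke exact monotonicity of $\err^{\ell_2^2}_{\priv,\privd}(\queries,\cdot)$ in $n$, but the padding/subsampling constructions you allude to only give \emph{approximate} monotonicity: from a size-$n$ mechanism one obtains a size-$n'$ mechanism (for $n' > n$) with error at most $\err^{\ell_2^2}_{\priv,\privd}(\queries,n) + O(\|\qmat\|_{1\to\infty}/\sqrt{n})$, because subsampling $n$ out of $n'$ points contributes its own statistical error, and dropping/padding points shifts $\queries(\ds)$ by $O(1/n)$ per step. Your contradiction step ``$\err(n) \geq \err(n^*) \geq 2\alpha^* > \alpha$'' therefore picks up an additive term of order $1/\sqrt{n}$, which is harmless when $n$ is already moderately large but needs to be accounted for when $n$ could be small. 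This is exactly the kind of bookkeeping the cited ``standard transformation'' is expected to absorb (e.g.\ by choosing $\alpha^*$ a further constant factor smaller, or by invoking the uniform-convergence term already present elsewhere in the paper), so I would call it a technical wrinkle rather than a gap, but it is not quite the free ``standard fact'' the write-up makes it sound; you should not claim exact monotonicity without the qualifier.
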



%

We remark that Theorem~\ref{thm:indep-lb}, Theorem~\ref{thm:central-err}, and Corollary~\ref{cor:central-sc} can be extended to $\ell_\infty^2$-error, defined by 
\[
    \err^{\ell^2_\infty}(\mech,\queries, n) = \max_{\ds \in \uni^n}
    \ex{\mech}{\| \mech(\ds) - \queries(\ds) \|^2_\infty}^{1/2},
\]
with $\err_{\priv,\privd}^{\ell_2^2}(\queries,\dsize)$, and $\sc_{\priv, \privd}^{\ell_2^2}(\queries, \alpha)$ defined analogously, and $\fnorm(\qmat)$ replaced by $\fnorminf(\qmat)$ in the lower bounds. 

\renewcommand\thesubsection{\Alph{subsection}}

\section*{Appendix}
\addcontentsline{toc}{section}{\protect\numberline{}Appendix}

\setcounter{subsection}{0}
\subsection{KL-divergence bound}\label{ap:kl-div}

\begin{proof}[Proof of Lemma~\ref{lm:kl-div}]
  We have
    \begin{align*}
        \div(\trans(\dista_\pi^\dsize)\|\trans(\distb_\pi^\dsize)) 
        &\le\ex{V \sim \pi}{
            \div(\trans(\dista_V^\dsize)\|\trans(\distb_V^\dsize))},
        && \text{by convexity} \\
        &= \ex{V \sim \pi}{ \sum_{i =1}^\dsize 
            \div(\mech_i(\dista_V)\|\mech_i(\distb_V))},
        && \text{by independence} \\
        &= \sum_{i=1}^\dsize \ex{V \sim \pi}{
        \div(\mech_i(\dista_V)\|\mech_i(\distb_V))}\\
        &\le \sum_{i=1}^{\dsize} \ex{V \sim \pi}{
        \chd(\mech_i(\dista_V)\|\mech_i(\distb_V))},
    \end{align*}
    where the last inequality follows from the fact
    that $\chi^2$-divergence is always an upper bound
    on KL-divergence~\cite{GS-choosing}.
    Hence it suffices to show
    \begin{equation} \label{eq:expchi}
        \ex{V \sim \pi}{\chd(\mech_i(\dista_V)\|\mech_i(\distb_V))} =
        O(\priv^2)\cdot
        \max_{f \in \R^\uni: \|f\|_\infty \le 1} \ex{V \sim \pi}{\left(\ex{x \sim \dista_V}{f_x} - \ex{x \sim \distb_V}{f_x}\right)^2}.
    \end{equation}
    To that end, fix some $i$ and let $r(z | x)$ denote
    $\Pr_{\mech_i}(\mech_i(x)=z)$.  Also, let $a_v(z) = \ex{x \sim
      \dista_v}{r(z | x)}$ and let $b_v(z) = \ex{x \sim \distb_v}{ r(z |
    x)}$. Let us assume, without loss of generality, that the range
    $\Omega$ of $\mech_i$ is finite. Then, by applying the definition of
    $\chi^2$-divergence, the right hand side of \eqref{eq:expchi} may
    be rewritten as
    \begin{align}
        \ex{V \sim \pi}{ \ex{Z \sim \mech_i(\distb_V)}{ \left( \frac{b_V(Z) - a_V(Z) }{b_V(Z)} \right)^2 }}
        = \ex{V \sim \pi}{ \sum_{z \in \Omega} \left( \frac{b_V(z) - a_V(z) }{b_V(z)} \right)^2 \cdot b_V(z) }\label{eq:expchi-sum}
    \end{align}
    Let $\dist_0$ be the uniform distribution on $\uni$ 
    (any other distribution will also work).
    Then let $u(z) = \ex{x \sim \dist_0}{ r(z | x)}$.
    Since privacy implies $\frac{u(z)}{b_v(z)} \le e^\priv$,
    we may obtain an upper bound on the right hand side of
    \eqref{eq:expchi-sum} as follows.
    \begin{align}
        \ex{V \sim \pi}{ \sum_{z \in \Omega} \left( \frac{b_V(z) - a_V(z) }{b_V(z)} \right)^2 \cdot b_V(z) }
        &= \ex{V \sim \pi}{ \sum_{z \in \Omega} \left( \frac{b_V(z) - a_V(z) }{u(z)} \right)^2 \cdot \frac{u(z)^2}{b_V(z)}} \notag\\
        &\le e^\priv \cdot \ex{V \sim \pi}{ \sum_{z \in \Omega} \left( \frac{b_V(z) - a_V(z) }{u(z)} \right)^2 \cdot u(z) }\notag\\
        &= e^\priv \cdot \ex{V \sim \pi}{ \ex{z \sim \mech_i(\dist_0)}{ \left( \frac{b_V(z) - a_V(z) }{u(z)} \right)^2}} \notag\\
        &= e^\priv \cdot \ex{z \sim \mech_i(\dist_0)}{ \ex{V \sim \pi}{ \left(
          \frac{b_V(z) - a_V(z) }{u(z)} \right)^2}} \label{eq:expchi-uni}
    \end{align}
    By taking $f^z \in \R^\uni$
    to be given by $f^z_x = \frac{r(z|x)}{u(z)} - 1$,
    then we obtain
    \[
        \frac{b_V(z) - a_V(z) }{u(z)}
        = \ex{x \sim \distb_V}{ f^z_x} - \ex{x \sim \dista_V}{ f^z_x}.
    \]
    Furthermore,
    $\frac{r(z|x)}{u(z)} \le e^\priv$ is implied by privacy,
    from which it follows that $\|f^z\|_\infty \le e^\priv - 1$.
    This gives us the following bound on \eqref{eq:expchi-uni}.
    \begin{align*}
        &e^\priv \cdot \ex{z \sim \mech_i(\dist_0)}{ \ex{V \sim \pi} {\left( \frac{b_V(z) - a_V(z) }{u(z)} \right)^2 }} \\
        ={} &e^\priv \cdot \ex{z \sim \mech_i(\dist_0)}{ \ex{V \sim \pi}{ \left(
            \ex{x \sim \distb_V}{ f^z_x} - \ex{x \sim \dista_V}{ f^z_x}
        \right)^2}} \\
        \le{} &e^\priv \cdot \ex{z \sim \mech_i(\dist_0)}{ \sup_{\|f\|_\infty \le e^\priv - 1} \ex{V \sim \pi}{ \left(
          \ex{x \sim \distb_V}{ f_x }- \ex{x \sim \dista_V}{ f_x}
        \right)^2}} \\
        \le{} &e^\priv (e^\priv - 1)^2 \cdot \sup_{\|f\|_\infty \le 1} \ex{V \sim \pi}{ \left(
            \ex{x \sim \distb_V}{ f_x} - \ex{x \sim \dista_V}{ f_x}
        \right)^2}.
    \end{align*}
    Using the fact that $e^\priv (e^\priv - 1)^2 = O(\priv^2)$,
    then putting everything together, we get
    \[
        \div(\trans(\dista_\pi^\dsize)\|\trans(\distb_\pi^\dsize)) \le O(n \priv^2)\cdot 
        \max_{f \in \R^\uni: \|f\|_\infty \le 1} \ex{V \sim \pi}{
            \left(\ex{x \sim \dista_V}{f_x} - \ex{x \sim \distb_V}{f_x}\right)^2
        }.
    \]
    To obtain our result in terms of the matrix
    $M \in \R^{[K] \times \uni}$ given by $m_{v, x} = (\dista_v(x) - \distb_v(x))$,
    we note that the entries of $Mf$, indexed by $v \in [K]$,
    are given by 
    \[
        (Mf)_v
        = \sum_{x \in \uni} f_x m_{v,x}
        = \sum_{x \in \uni} (\dista_v(x)  f_x - \distb_v(x) f_x)
        = \ex{x \sim \dista_v}{f_x} - \ex{x \sim \distb_v}{f_x}.
    \]
    Hence
    \[
        \|Mf\|_{L_2(\pi)}^2
        = \sum_{v \in [K]}
            \pi(v) (Mf)_v^2
        = \ex{V \sim \pi}{
            \left( \ex{x \sim \dista_V}{f_x} - \ex{x \sim \distb_V}{f_x}\right)^2}.
    \]
    Finally, we have
    \begin{align*}
        \|M\|_{\ell_\infty \to L_2(\pi)}^2
        &= \sup_{f \in \R^\uni: \|f\|_\infty \le 1} {\|Mf\|_{L_2(\pi)}^2} \\
        &= \max_{f \in \R^\uni: \|f\|_\infty \le 1} \ex{V \sim
          \pi}{\left(\ex{x \sim \dista_V}{f_x} - \ex{x \sim \distb_V}{f_x}\right)^2},
    \end{align*}
    and this completes the proof.
\end{proof}

\subsection{Duality}\label{ap:duality}

\begin{proof}[Proof of Lemma \ref{lm:duality}]
Note that $\frac{\qmat\bullet U - \alpha\|U\|_1}{\fnorminf^\ast(U)}$ is
scale-free, and 
\[
\max_{U \neq 0} \frac{\qmat\bullet U - \alpha\|U\|_1}{\fnorminf^\ast(U)}
= \max \{\qmat\bullet U - \alpha\|U\|_1: \fnorminf^\ast(U) = 1\}.
\]
Since the set $\{U:\fnorminf^\ast(U) = 1\}$ is compact, the maximum is
achieved. Let us then define $t = \max \frac{\qmat\bullet U -
  \alpha\|U\|_1}{\fnorminf^\ast(U)}$ for the rest of the proof. 

Let us first check that $t \le \fnorminf(\qmat, \alpha)$. Let $U \neq 0$, and let $\widetilde{\qmat}$ achieve
$\fnorminf(\widetilde{\qmat}) = \fnorminf(\qmat, \alpha)$ and $\|\qmat -
\widetilde{\qmat}\|_{1 \to \infty} \le \alpha$. Then
\begin{align*}
    \qmat\bullet U &= \widetilde{\qmat} \bullet U + (\qmat-\widetilde{\qmat})\bullet U\\
    &\le \fnorminf(\widetilde{\qmat}) \fnorminf^\ast(U) + \|\qmat - \widetilde{\qmat}\|_{1 \to \infty} \|U\|_1\\
    &\le \fnorminf(\qmat, \alpha) \fnorminf^\ast(U) + \alpha \|U\|_1.
\end{align*}
The first inequality follows by the trivial case of H\"older's inequality, and the definition of $\fnorminf^\ast$. Rearranging shows that $t \le \fnorminf(\qmat, \alpha)$. 

Let us now show the harder direction, $t \ge
\fnorminf(\qmat,\alpha)$. Suppose this was false, and we had $\fnorminf(\qmat,
\alpha) > t$. We will show this implies that there exists a $U
\neq 0$ such that $\frac{\qmat\bullet U -
  \alpha\|U\|_1}{\fnorminf^\ast(U)} > t$, a contradiction. Let $S =
\{B \in \R^{\qsize \times \usize}: \fnorminf(B) \le t\}$ and $T = \{C
\in \R^{\qsize \times \usize}: \|\qmat - C\|_{1 \to \infty} \le
\alpha\}$. Then $\fnorminf(\qmat, \alpha) > t$ equivalently means $S \cap
T = \emptyset$. Since both $S$ and $T$ are convex and compact, and
$S\cap T = \emptyset$, the hyperplane separator
theorem~\cite[Corollary~11.4.2]{Rockafellar} implies that there is a
hyperplane separating them, i.e.~there is a matrix $U \in \R^{\qsize\times\usize}\setminus\{0\}$ such that
\begin{equation}\label{eq:separator}
\max\{B\bullet U: B \in S\} < \min\{C \bullet U: C \in T\}.
\end{equation}
The left hand side equals $t\fnorminf^\ast(U)$, by definition. The right hand side equals 
\begin{align*}
    \min\{\qmat \bullet U - (\qmat - C)\bullet U: C \in T\}
    &= 
    \qmat\bullet U - \max\{(\qmat - C)\bullet U: C \in T\}\\
    &= \qmat\bullet U - \max\{E\bullet U: \|E\|_{1\to\infty} \le \alpha\}\\
    &= \qmat \bullet U - \alpha\|U\|_1,
\end{align*}
where the last equality again uses the trivial case of H\"older's inequality. Therefore, \eqref{eq:separator} is equivalent to $t\fnorminf^\ast(U) <  \qmat \bullet U - \|U\|_1$, which is what we wanted to prove.
\end{proof}

\begin{proof}[Proof of Lemma~\ref{lm:gammastar-inftyto2}]
    By Lemma~\ref{lm:gammastar-dual},
    there exist diagonal matrices
    $P \in \R^{\qsize \times \qsize}$
    and $Q \in \R^{\usize \times \usize}$,
    satisfying $\tr(P^2) = \tr(Q^2) = 1$, and a matrix $\widetilde{U}$ 
    such that $U = P\widetilde{U}Q$ and 
    $\|\widetilde{U} \|_{2 \to 2} \le \gamma_2^\ast(U) $.   
    Define $S = \{i: p_{ii}^2 \le \frac{2}{\qsize}\}$.
    Then Markov's inequality shows that $|S| \ge \frac{\qsize}{2}$.
    Furthermore,
    \[
    \gamma_2^\ast(U) \ge \|\widetilde{U} \|_{2 \to 2}
    \ge 
    \|\Pi_S \widetilde{U} \|_{2 \to 2}
    \ge 
    \sqrt{\frac{\qsize}{2}}\|\Pi_S P \widetilde{U} \|_{2 \to 2},
    \]
    where the first inequality follows because multiplying by a
    projection matrix can only decrease the $\|\cdot\|_{2 \to 2}$ norm
    of the matrix, and the second inequality follows by the definition of
    $S$. 

    To finish the proof, we observe that $\|\Pi_S P \widetilde{U}\|_{2
      \to 2} \ge \|\Pi_S P \widetilde{U}Q \|_{\infty \to 2} = \|\Pi_S
    U \|_{\infty \to 2}$.  Indeed, for any $x\in \R^{\usize}$, we have
    \[
    \Pi_S P \widetilde{U}Q x
    \le 
    \|\Pi_S P \widetilde{U}\|_{2 \to 2} \|Qx\|_2
    \le \|\Pi_S P \widetilde{U}\|_{2 \to 2} \sqrt{\tr(Q^2)} \|x\|_\infty
    = \|\Pi_S P \widetilde{U}\|_{2 \to 2}  \|x\|_\infty,
    \]
    where the second inequality follows from H\"older's inequality.
\end{proof}

\begin{proof}[Proof of Lemma~\ref{lm:gammastar-inftyto2-nonuniform}]

    Without loss of generality,
    we may assume $\pi$ takes rational values.
    In particular,
    let $\widetilde{\qsize} \in \Z$
    be such that $\pi(i) \cdot \widetilde{\qsize} \in \Z$
    for all $i \in [\qsize]$.
    Then $M$ and $\pi$ may be used to define the matrix
    $\widetilde{M} \in \R^{\widetilde{\qsize} \times \uni}$,
    obtained by taking, for each $i \in [\qsize]$,
    $\pi(i) \cdot \widetilde{\qsize} $ copies of row $i$ from $M$.

    By Lemma~\ref{lm:gammastar-inftyto2},
    there exists a set $S \subseteq [\widetilde{\qsize}]$, $|S| \ge \frac{\widetilde{\qsize}}{2}$,
    such that
    $
    \sqrt{\frac{\widetilde{\qsize}}{2}} \| \Pi_S \widetilde{M} \|_{\infty \to 2}
        \le \gamma_2^*(\widetilde{M})
    $.
    Use $S$ to define the function $\widetilde{\pi}:[\qsize] \to [0,1]$
    where $\widetilde{\qsize} \widetilde{\pi}(i)$ is the number of rows selected from $\widetilde{M}$ by $S$
    which correspond to row $i$ from $M$.
    Since $|S| \ge \frac{\widetilde{\qsize}}{2}$,
    then $\sum_{i \in [\qsize]} \widetilde{\pi}(i) \ge \frac{1}{2}$.

    The equality
    $
    \sqrt{\widetilde{\qsize}} \| M \|_{\ell_\infty \to L_2(\widetilde{\pi})}
        = \| \Pi_S \widetilde{M} \|_{\infty \to 2}
    $
    follows because,
    for all $f \in \R^\uni$,
    \begin{align*}
        \| \Pi_S \widetilde{M} f \|_2^2
        = \sum_{i \in S} ( \widetilde{M}_{v,*} \cdot f )^2 
        = \widetilde{\qsize} \sum_{v \in [\qsize]} \widetilde{\pi}(v) ( M_{v,*} \cdot f )^2 
        = \widetilde{\qsize} \| M f \|_{L_2(\widetilde{\pi})}^2
    \end{align*}
    where we have denoted row $v$ of $\widetilde{M}$ by $\widetilde{M}_{v,\ast}$.

    To show 
    $  
        \gamma_2^*(\widetilde{M}) \le \widetilde{\qsize} \gamma_2^*(U)
    $,
    let $\widetilde{y}_1,\dots,\widetilde{y}_\qsize,z_1,\dots,z_\usize$ be unit vectors
    in $\ell_2$-norm which satisfy
    \[
        \gamma_2^*(\widetilde{M})
        = \sum_{i=1}^{\widetilde{\qsize}}\sum_{j = 1}^\usize \widetilde{m}_{i,j} \widetilde{y}_{i}^\top z_j,
    \]
    as they are guaranteed to exist by Lemma~\ref{lm:duality}.
    Without loss of generality, we may assume
    that if two rows $i$ and $i'$ of $\widetilde{M}$ have identical entries,
    then $\widetilde{y}_i = \widetilde{y}_{i'}$.
    Now, for all $i \in [\qsize]$, let $y_{i} = \widetilde{y}_{\widetilde{i}}$,
    where $\widetilde{i}$ is one of the rows of $\widetilde{M}$
    which was copied from row $i$ in $M$.
    Then
    \begin{align*}
        \gamma_2^\ast(\widetilde{M})
        = \sum_{i=1}^{\widetilde{\qsize}}\sum_{j = 1}^\usize \widetilde{m}_{i,j} \widetilde{y}_{i}^\top z_j
        = \widetilde{\qsize} \sum_{i = 1}^\qsize \sum_{j = 1}^\usize
            \pi(i) m_{i,j}  y_i^\top z_j
        = \widetilde{\qsize} \sum_{i = 1}^\qsize \sum_{j = 1}^\usize
            u_{i,j}  y_i^\top z_j
        \le \widetilde{\qsize} \gamma_2^*(U).
    \end{align*}

    Altogether, this gives
    \[
        \| M \|_{\ell_\infty \to L_2(\widetilde{\pi})}
        = \sqrt{\frac{2}{\widetilde{\qsize}}} \| \Pi_S \widetilde{M} \|_{\infty \to 2}
        \le \frac{2}{\widetilde{\qsize}} \gamma_2^*(\widetilde{M})
        \le 2 \gamma_2^*(U).
    \]

    Finally,
    normalize $\widetilde{\pi}$ to obtain
    the probability distribution $\widehat{\pi}$,
    given by ${\widehat{\pi}(i) = \frac{\widetilde{\pi}(i)}{\sum_i \widetilde{\pi}(i)}}$.
    Since $\sum_i \widetilde{\pi}(i) \ge \frac{1}{2}$,
    then $\widehat{\pi}(i) \le 2 \widetilde{\pi}(i)$.
    This implies
    $
        \| M \|_{\ell_\infty \to L_2(\widehat{\pi})}
        \le 2 \| M \|_{\ell_\infty \to L_2(\widetilde{\pi})}
    $,
    from which we get
    $
        \| M \|_{\ell_\infty \to L_2(\widetilde{\pi})}
        \le 4 \gamma_2^*(U)
    $.
    Since the rows that were copied from $M$ to obtain $\widetilde{M}$
    corresponded to those rows of $U$
    which were assigned non-zero probability by $\pi$,
    then, by the definitions of $\widetilde{pi}$ and $\widehat{\pi}$,
    it follows that the support of $\widehat{\pi}$
    is a subset of the support of $\pi$.
\end{proof}

\subsection{Symmetrization}\label{ap:symmetrization}

\begin{proof}[Proof of Lemma~\ref{lm:symtonotsym}]

    Let $\queries^-$ be as given by Definition~\ref{def:symmetric}.
    We consider the mechanism $\mech^-$ which,
    given a dataset $\ds^- \in (\uni^-)^n$,
    treats each element $-x$ of $\ds^-$
    as if it was the corresponding element $x$ of $\uni^+$.
    By taking the additive inverse of the answer
    $\mech^+$ produces to each query $q^+ \in \queries^+$,
    then $\mech^-$ obtains an approximation 
    for each corresponding query $q^- \in \queries^-$.
    In particular, $\errinf(\mech^+,\queries^+,n) \le \alpha$
    implies $\errinf(\mech^-,\queries^-,n) \le \alpha$.
    
    We wish to define a mechanism $\mech'$
    which takes an arbitrary dataset $\ds \in \uni^{n'}$, $n' \ge n$,
    as input and estimates the queries $\queries$.
    Let $\ds^+$ and $\ds^-$ partition $\ds$
    into elements of $\uni^+$ and $\uni^-$ respectively, and suppose
    that $\ds^+$ has size $\dsize_+$, and $\ds^-$ has size $\dsize_-$.
    Note
    \[
        \queries(\ds)
        = \frac{\dsize_+}{\dsize}  \cdot \queries^+(\ds^+) + \frac{\dsize_-}{\dsize} \cdot \queries^-(\ds^-)
    \]
    Consider the mechanism $\widetilde{\mech}^+$
    which, on the dataset $\ds$, follows $\mech^+$
    with the following modification:
    any agent holding a sample $x \in \uni^-$
    will treat their sample as if it were $x_0$,
    where $x_0$ is an arbitrary fixed element of $\uni^+$;
    any agent holding a sample $x \in \uni^+$
    treats their sample as usual.
    In other words,
    if $Y^+$ denotes the dataset obtained from $\ds$
    by replacing each element of $\uni^-$ with $x_0$,
    then $\widetilde{\mech}^+(\ds)$ has the same distribution as $\mech^-(Y)$.
    Hence,
    \[
        \E_{\widetilde{\mech}^+}[\|\queries^+(Y^+) - \widetilde{\mech}^+(\ds)\|_\infty]
        = \E_{\mech}[\|\queries^+(Y^+) - \mech(Y^+)\|_\infty]
        \le \alpha.
    \]

    Similarly, we may consider the mechanism $\widetilde{\mech}^-$
    which takes $\ds$ as input and follows $\mech^-$
    with the modification that any agent
    who holds a sample from $\uni^-$
    treats it as if it were $-x_0$.
    Taking $Y^-$ to denote the dataset obtained from $\ds$
    by replacing each element of $\uni^+$ with $-x_0$,
    then $\widetilde{\mech}(\ds)$ has the same distribution as $\mech^-(Y)$
    and hence
    \[
        \E_{\widetilde{\mech}^-}[\|\queries^-(Y^-) - \widetilde{\mech}^-(\ds)\|_\infty]
        = \E_{\mech}[\|\queries^-(Y^-) - \mech(Y)\|_\infty]
        \le \alpha.
    \]

    Furthermore, we have
    \[
        \queries(Y^+) - \queries(Y^-)
        = \frac{\dsize_+}{\dsize} \cdot \queries^+(\ds^+) + \frac{\dsize_-}{\dsize} \cdot \queries^-(\ds^-)
            + \left( \frac{ 2\dsize_+ }{ \dsize } - 1 \right) \cdot \queries(x_0)
    \]
    which motivates us to obtain an approximation of
    ${ \dsize_+ }/{ \dsize }$.
    Since this is just a single counting query, we can estimate with a
    standard use of the randomized response mechanism (see e.g. ~\cite{KasiviswanathanLNRS08}). Let us denote
    the corresponding mechanism $\mech^0$, and let $n' =
    O\left(\frac{1}{\priv^2\alpha^2}\right)$ be the number of data
    points needed so that $\E[|\mech^0(\ds)  - \frac{\dsize_+}{\dsize}|]
    \le \alpha$.

    Finally, we may define our mechanism $\mech'$
    which runs $\widetilde{\mech}^+$, $\widetilde{\mech}^-$ and $\widetilde{\mech}^0$
    in parallel and returns
    \[
        \mech'(\ds)
        = \widetilde{\mech}^+(\ds) + \widetilde{\mech}^-(\ds) - (2 \widetilde{\mech}^0(\ds) - 1) \cdot \queries(x_0).
    \]
    Since each of the mechanisms
    $\widetilde{\mech}^+$, $\widetilde{\mech}^-$ and $\widetilde{\mech}^0$
    satisfies $\eps$-differential privacy,
    then $\mech'$ satisfies $3\eps$-differential privacy by composition.
    Furthermore, by the triangle inequality,
    \begin{multline}
      \E_\mech'[\|\queries(\ds) - \mech'(\ds)\|_\infty] 
      \le \E_{\mech^+} [\|\queries(Y^+) - \mech^+(\ds)\|_\infty]
            + \E_{\mech^-} [\|\queries(Y^-) - \mech^-(\ds)\|_\infty] \\
            + 2 \cdot \E_{\mech^0} \left[ \left| \frac{\dsize_+}{\dsize} - \mech^0(\ds) \right|\right],
    \end{multline}
    where the first two terms are bounded by $\alpha$, and the final
    term by $2\alpha$. This gives error of at most $4\alpha$ in total,
    and completes the proof.
\end{proof}

\begin{proof}[Proof of Lemma~\ref{lm:symdual}]
  Since the $\|\cdot\|_{2\to\infty}$ and $\|\cdot\|_{1 \to 2}$ norms
  are both non-increasing with respect to taking submatrices, the same
  holds also for the $\fnorminf$ norm, and, therefore,
  $\fnorminf(\qmat^+) \le \fnorminf(\qmat)$. In the reverse direction,
  if $R^+A^+ = \qmat^+$ is a factorization achieving
  $\fnorminf(\qmat)$, then $R^+ A = \qmat$, where $A$ is defined by
  $a_{\query,\dsrow} = -a_{\query,-\dsrow} = a^+_{\query, \dsrow}$ for
  any $\query \in \queries$, and any $\dsrow \in \uni^+$. Clearly, $\|A\|_{1 \to 2} = \|A^+\|_{1\to
    2}$, and, therefore, the factorization $R^+ A$ certifies
  $\fnorminf(\qmat) \le \|R^+\|_{2 \to \infty} \|A\|_{1 \to 2} = \fnorminf(\qmat^+)$. The two inequalities
  imply $\fnorminf(\qmat) = \fnorminf(\qmat^+)$. 

  Next we show that $\fnorminf(\qmat^+, \alpha) \le \fnorminf(\qmat,
  \alpha)$.  Note that if $\widetilde{\qmat}$ is the approximation of
  $\qmat$ that achieves $\fnorminf(\qmat,\alpha)$, and
  $\widetilde{\qmat}^+$ is the submatrix of $\widetilde{\qmat}$
  consisting of the columns indexed by $\uni^+$, then
  $\|\widetilde{\qmat}^+ - \qmat^+\|_{1 \to \infty} \le \alpha$, and,
  by the argument above, \[\fnorminf(\qmat,\alpha) \le \fnorminf(\widetilde{\qmat}^+) \le
  \fnorminf(\widetilde{\qmat}) =\fnorminf(\qmat,\alpha).\] 
  To show he
  reverse inequality $\fnorminf(\qmat, \alpha) \le \fnorminf(\qmat^+,
  \alpha)$, take an approximation $\widetilde{\qmat}^+$ achieving
  $\fnorminf(\qmat^+, \alpha)$, and extend it to $\widetilde{\qmat}
  \in \R^{\queries \times \uni}$ by defining
  $\widetilde{w}_{\query,\elem} = -\widetilde{w}_{\query,-\elem} = \widetilde{w}^+_{\query,\elem}$ for
  all $\query \in \queries$ and $\elem \in \uni$. Then, clearly,
  $\|\widetilde{\qmat} - \qmat\|_{1 \to \infty} \le \alpha$, and, by
  the argument above, $\fnorminf(\widetilde{\qmat}) =
  \fnorminf(\widetilde{\qmat}^+)$.
  
  Finally, the claim after ``moreover'' follows because $\qmat\bullet
  U = \qmat^+ \bullet U^+$, $\|U\|_1 = \|U^+\|_1$, and $\fnorminf^\ast(U^+)
  = \fnorminf(U)$. The only non-trivial equality is the last one. To
  see why it holds, note that, first, because $\fnorminf^\ast$ is the
  dual norm of $\fnorminf$, it is, indeed, a norm, and, by homogeneity
  and the triangle inequality, 
  \[\fnorminf^\ast(U) \le 
  \frac12 \fnorminf^\ast(U^+) + \frac12\fnorminf^\ast(U-) = 
  \frac12 \fnorminf^\ast(U^+) + \frac12\fnorminf^\ast(-U^+) = 
  \fnorminf^\ast(U^+).\] 
  In the other direction, let $V^+ \in \R^{\queries \times \uni^+}$ be
  such that $V^+ \bullet U^+ = \fnorminf^\ast(U^+)$ and $\fnorminf(V^+) =
  1$. Then, we can extend $V^+$ to $V \in \R^{\queries \times \uni}$
  by setting $v_{\query,\elem} = -v_{\query,-\elem} =
  v^+_{\query,\elem}$ for all $\query \in \queries$ and $\elem \in
  \uni^+$. By the discussion above, $\fnorminf(V) = \fnorminf(V^+)$,
  and, moreover, 
  \[
  \fnorminf^\ast(U) \ge V \bullet U = V^+ \bullet U^+ = \fnorminf^\ast(U^+).
  \]
  This completes the proof.
\end{proof}

\section*{Acknowledgments}
Part of this work was done while the authors were visiting the Simons Institute for Theory
of Computing.
We are grateful to Toniann Pitassi for many helpful discussions about
local differential privacy. AE and AN were supported by an Ontario
Early Researcher Award, and an NSERC Discovery Grant. JU was supported by NSF grants
CNS-1718088, CCF-1750640 and CNS-1816028, and a Google Faculty
Research Award. 

\bibliographystyle{alpha}
\bibliography{main,refs}

\newcommand{\etalchar}[1]{$^{#1}$}
\begin{thebibliography}{MMHM18}

\bibitem[{App}17]{AppleDP17}
{Apple Differential Privacy Team}.
\newblock Learning with privacy at scale.
\newblock Apple Machine Learning Journal, 2017.

\bibitem[BBNS19]{BlasiokBNS19}
Jaroslaw Blasiok, Mark Bun, Aleksandar Nikolov, and Thomas Steinke.
\newblock Towards instance-optimal private query release.
\newblock In {\em {SODA}}, pages 2480--2497. {SIAM}, 2019.

\bibitem[BCD{\etalchar{+}}07]{BarakCDKMT07}
Boaz Barak, Kamalika Chaudhuri, Cynthia Dwork, Satyen Kale, Frank McSherry, and
  Kunal Talwar.
\newblock Privacy, accuracy, and consistency too: a holistic solution to
  contingency table release.
\newblock In {\em Proceedings of the 26th ACM Symposium on Principles of
  Database Systems}, PODS '07, pages 273--282. ACM, 2007.

\bibitem[BDKT12]{BhaskaraDKT12}
Aditya Bhaskara, Daniel Dadush, Ravishankar Krishnaswamy, and Kunal Talwar.
\newblock Unconditional differentially private mechanisms for linear queries.
\newblock In {\em Proceedings of the 44th Annual ACM Symposium on Theory of
  Computing}, STOC '12, pages 1269--1284, 2012.

\bibitem[BEM{\etalchar{+}}17]{Bittau+17}
Andrea Bittau, {\'U}lfar Erlingsson, Petros Maniatis, Ilya Mironov, Ananth
  Raghunathan, David Lie, Mitch Rudominer, Ushasree Kode, Julien Tinnes, and
  Bernhard Seefeld.
\newblock Prochlo: Strong privacy for analytics in the crowd.
\newblock In {\em Proceedings of the 26th Symposium on Operating Systems
  Principles}, SOSP '17, pages 441--459. ACM, 2017.

\bibitem[BFJ{\etalchar{+}}94]{BlumFJKMR94}
Avrim Blum, Merrick~L. Furst, Jeffrey~C. Jackson, Michael~J. Kearns, Yishay
  Mansour, and Steven Rudich.
\newblock Weakly learning {DNF} and characterizing statistical query learning
  using fourier analysis.
\newblock In {\em Proceedings of the Twenty-Sixth Annual {ACM} Symposium on
  Theory of Computing, 23-25 May 1994, Montr{\'{e}}al, Qu{\'{e}}bec, Canada},
  pages 253--262, 1994.

\bibitem[BNS18]{BunNS18}
Mark Bun, Jelani Nelson, and Uri Stemmer.
\newblock Heavy hitters and the structure of local privacy.
\newblock In {\em Proceedings of the 37th ACM Symposium on Principles of
  Database Systems}, PODS'18, pages 435--447. ACM, 2018.

\bibitem[BUV14]{BunUV14}
Mark Bun, Jonathan Ullman, and Salil Vadhan.
\newblock Fingerprinting codes and the price of approximate differential
  privacy.
\newblock In {\em 46th Annual ACM Symposium on the Theory of Computing}, STOC
  '14, pages 1--10, New York, NY, USA, 2014.

\bibitem[CSS11]{ChanSS11}
T-H~Hubert Chan, Elaine Shi, and Dawn Song.
\newblock Private and continual release of statistics.
\newblock {\em ACM Transactions on Information and System Security (TISSEC)},
  14(3):26, 2011.

\bibitem[CTUW14]{ChandrasekaranTUW14}
Karthekeyan Chandrasekaran, Justin Thaler, Jonathan Ullman, and Andrew Wan.
\newblock Faster private release of marginals on small databases.
\newblock In {\em Proceedings of the 5th ACM Conference on Innovations in
  Theoretical Computer Science}, ITCS '14, pages 287--402, Princeton, NJ, 2014.
  ACM.

\bibitem[DJW18]{DJW}
John~C. Duchi, Michael~I. Jordan, and Martin~J. Wainwright.
\newblock Minimax optimal procedures for locally private estimation.
\newblock {\em J. Amer. Statist. Assoc.}, 113(521):182--201, 2018.

\bibitem[DLS{\etalchar{+}}17]{CensusDP17}
Aref~N. Dajani, Amy~D. Lauger, Phyllis~E. Singer, Daniel Kifer, Jerome~P.
  Reiter, Ashwin Machanavajjhala, Simson~L. Garfinkel, Scot~A. Dahl, Matthew
  Graham, Vishesh Karwa, Hang Kim, Philip Lelerc, Ian~M. Schmutte, William~N.
  Sexton, Lars Vilhuber, and John~M. Abowd.
\newblock The modernization of statistical disclosure limitation at the {U.S.}
  census bureau, 2017.
\newblock Presented at the September 2017 meeting of the Census Scientific
  Advisory Committee.

\bibitem[DMNS06]{DworkMNS06}
Cynthia Dwork, Frank McSherry, Kobbi Nissim, and Adam Smith.
\newblock Calibrating noise to sensitivity in private data analysis.
\newblock In {\em Proceedings of the 3rd Conference on Theory of Cryptography},
  TCC '06, pages 265--284, Berlin, Heidelberg, 2006. Springer.

\bibitem[DN03]{DinurN03}
Irit Dinur and Kobbi Nissim.
\newblock Revealing information while preserving privacy.
\newblock In {\em Proceedings of the 22nd ACM Symposium on Principles of
  Database Systems}, PODS '03, pages 202--210. ACM, 2003.

\bibitem[DN04]{DworkN04}
Cynthia Dwork and Kobbi Nissim.
\newblock Privacy-preserving datamining on vertically partitioned databases.
\newblock In {\em Annual International Cryptology Conference}, pages 528--544.
  Springer, 2004.

\bibitem[DNPR10]{DworkNPR10}
Cynthia Dwork, Moni Naor, Toniann Pitassi, and Guy~N. Rothblum.
\newblock Differential privacy under continual observation.
\newblock In {\em Symposium on Theory of Computing (STOC)}, pages 715--724.
  {ACM}, 2010.

\bibitem[DNT15]{DworkNT15}
Cynthia Dwork, Aleksandar Nikolov, and Kunal Talwar.
\newblock Efficient algorithms for privately releasing marginals via convex
  relaxations.
\newblock {\em Discrete \& Computational Geometry}, 53(3):650--673, 2015.

\bibitem[DR14]{DworkR14}
Cynthia Dwork and Aaron Roth.
\newblock The algorithmic foundations of differential privacy.
\newblock {\em Foundations and Trends in Theoretical Computer Science},
  9(3--4):211--407, 2014.

\bibitem[DR18]{DuchiR18}
John~C. Duchi and Feng Ruan.
\newblock The right complexity measure in locally private estimation: It is not
  the fisher information.
\newblock {\em arXiv preprint arXiv:1806.05756}, 2018.

\bibitem[EGS03]{EvfimievskiGS03}
Alexandre~V. Evfimievski, Johannes Gehrke, and Ramakrishnan Srikant.
\newblock Limiting privacy breaches in privacy preserving data mining.
\newblock In {\em {PODS}}, pages 211--222. {ACM}, 2003.

\bibitem[EPK14]{ErlingssonPK14}
{\'U}lfar Erlingsson, Vasyl Pihur, and Aleksandra Korolova.
\newblock {RAPPOR}: Randomized aggregatable privacy-preserving ordinal
  response.
\newblock In {\em Proceedings of the 2014 {ACM} Conference on Computer and
  Communications Security}, CCS'14. ACM, 2014.

\bibitem[FSSS03]{ForsterSSS03}
J{\"{u}}rgen Forster, Niels Schmitt, Hans~Ulrich Simon, and Thorsten Suttorp.
\newblock Estimating the optimal margins of embeddings in euclidean half
  spaces.
\newblock {\em Machine Learning}, 51(3):263--281, 2003.

\bibitem[GHRU11]{GuptaHRU11}
Anupam Gupta, Moritz Hardt, Aaron Roth, and Jonathan Ullman.
\newblock Privately releasing conjunctions and the statistical query barrier.
\newblock In {\em Proceedings of the 43rd {ACM} Symposium on Theory of
  Computing}, STOC '11, pages 803--812, San Jose, CA, 2011.

\bibitem[Gro53]{Grothendieck53}
A.~Grothendieck.
\newblock R\'{e}sum\'{e} de la th\'{e}orie m\'{e}trique des produits tensoriels
  topologiques.
\newblock {\em Bol. Soc. Mat. S\~{a}o Paulo}, 8:1--79, 1953.

\bibitem[GS02]{GS-choosing}
Alison~L Gibbs and Francis~Edward Su.
\newblock On choosing and bounding probability metrics.
\newblock {\em International statistical review}, 70(3):419--435, 2002.

\bibitem[HRS12]{HardtRS12}
Moritz Hardt, Guy~N. Rothblum, and Rocco~A. Servedio.
\newblock Private data release via learning thresholds.
\newblock In {\em Proceedings of the Twenty-Third Annual {ACM-SIAM} Symposium
  on Discrete Algorithms, {SODA}}, pages 168--187, 2012.

\bibitem[HT10]{HardtT10}
Moritz Hardt and Kunal Talwar.
\newblock On the geometry of differential privacy.
\newblock In {\em Proceedings of the 42nd {ACM} Symposium on Theory of
  Computing, {STOC}}, 2010.

\bibitem[JNS18]{JohnsonNS18}
Noah Johnson, Joseph~P Near, and Dawn Song.
\newblock Towards practical differential privacy for sql queries.
\newblock {\em Proceedings of the VLDB Endowment}, 11(5):526--539, 2018.

\bibitem[Kea93]{Kearns93}
Michael~J. Kearns.
\newblock Efficient noise-tolerant learning from statistical queries.
\newblock In {\em STOC}, pages 392--401. ACM, May 16-18 1993.

\bibitem[KLN{\etalchar{+}}08]{KasiviswanathanLNRS08}
Shiva~Prasad Kasiviswanathan, Homin~K. Lee, Kobbi Nissim, Sofya Raskhodnikova,
  and Adam Smith.
\newblock What can we learn privately?
\newblock In {\em FOCS}, pages 531--540. {IEEE}, Oct 25--28 2008.

\bibitem[KN12]{KhotNaor-Grothendieck}
Subhash Khot and Assaf Naor.
\newblock Grothendieck-type inequalities in combinatorial optimization.
\newblock {\em Comm. Pure Appl. Math.}, 65(7):992--1035, 2012.

\bibitem[KRSU10]{KasiviswanathanRSU10}
Shiva~Prasad Kasiviswanathan, Mark Rudelson, Adam Smith, and Jonathan Ullman.
\newblock The price of privately releasing contingency tables and the spectra
  of random matrices with correlated rows.
\newblock In {\em Proceedings of the 42nd ACM Symposium on Theory of
  Computing}, STOC '10, pages 775--784. ACM, 2010.

\bibitem[KSS94]{KearnsSS94}
Michael~J Kearns, Robert~E Schapire, and Linda~M Sellie.
\newblock Toward efficient agnostic learning.
\newblock {\em Machine Learning}, 17(2-3):115--141, 1994.

\bibitem[LHR{\etalchar{+}}10]{LiHRMM10}
Chao Li, Michael Hay, Vibhor Rastogi, Gerome Miklau, and Andrew McGregor.
\newblock Optimizing linear counting queries under differential privacy.
\newblock In {\em Proceedings of the 29th ACM Symposium on Principles of
  Database Systems}, PODS'10, pages 123--134. ACM, 2010.

\bibitem[LMSS07]{LinialMSS07}
Nati Linial, Shahar Mendelson, Gideon Schechtman, and Adi Shraibman.
\newblock Complexity measures of sign matrices.
\newblock {\em Combinatorica}, 27(4):439--463, 2007.

\bibitem[LS09]{LinialS09}
Nati Linial and Adi Shraibman.
\newblock Lower bounds in communication complexity based on factorization
  norms.
\newblock {\em Random Struct. Algorithms}, 34(3):368--394, 2009.

\bibitem[MMHM18]{McKennaMHM18}
Ryan McKenna, Gerome Miklau, Michael Hay, and Ashwin Machanavajjhala.
\newblock Optimizing error of high-dimensional statistical queries under
  differential privacy.
\newblock {\em Proceedings of the VLDB Endowment}, 11(10):1206--1219, 2018.

\bibitem[Nik14]{nikolov-thesis}
Aleksandar Nikolov.
\newblock {\em New Computational Aspects of Discrepancy Theory}.
\newblock PhD thesis, Rutgers, The State University of New Jersey, 2014.

\bibitem[Nik15]{Nikolov15}
Aleksandar Nikolov.
\newblock An improved private mechanism for small databases.
\newblock In {\em Automata, Languages, and Programming - 42nd International
  Colloquium, {ICALP}}, pages 1010--1021, 2015.

\bibitem[NS94]{NisanS94}
Noam Nisan and Mario Szegedy.
\newblock On the degree of boolean functions as real polynomials.
\newblock {\em Computational Complexity}, 4:301--313, 1994.

\bibitem[NT15]{NikolovT15}
Aleksandar Nikolov and Kunal Talwar.
\newblock Approximating hereditary discrepancy via small width ellipsoids.
\newblock In {\em Proceedings of the 26th Annual ACM-SIAM Symposium on Discrete
  Algorithms}, SODA'15, pages 324--336. SIAM, 2015.

\bibitem[NTZ16]{NikolovTZ13}
Aleksandar Nikolov, Kunal Talwar, and Li~Zhang.
\newblock The geometry of differential privacy: The small database and
  approximate cases.
\newblock {\em {SIAM} J. Comput.}, 45(2):575--616, 2016.

\bibitem[Pis12]{Pisier-Grothendieck}
Gilles Pisier.
\newblock Grothendieck's theorem, past and present.
\newblock {\em Bull. Amer. Math. Soc. (N.S.)}, 49(2):237--323, 2012.

\bibitem[Roc97]{Rockafellar}
R.~Tyrrell Rockafellar.
\newblock {\em Convex analysis}.
\newblock Princeton Landmarks in Mathematics. Princeton University Press,
  Princeton, NJ, 1997.
\newblock Reprint of the 1970 original, Princeton Paperbacks.

\bibitem[She11]{Sherstov11}
Alexander~A. Sherstov.
\newblock The pattern matrix method.
\newblock {\em {SIAM} J. Comput.}, 40(6):1969--2000, 2011.

\bibitem[Sz{\"{o}}09]{Szorenyi09}
Bal{\'{a}}zs Sz{\"{o}}r{\'{e}}nyi.
\newblock Characterizing statistical query learning: Simplified notions and
  proofs.
\newblock In {\em {ALT}}, volume 5809 of {\em Lecture Notes in Computer
  Science}, pages 186--200. Springer, 2009.

\bibitem[TUV12]{ThalerUV12}
Justin Thaler, Jonathan Ullman, and Salil~P. Vadhan.
\newblock Faster algorithms for privately releasing marginals.
\newblock In {\em 39th International Colloquium on Automata, Languages, and
  Programming -}, ICALP '12, pages 810--821, Warwick, UK, 2012. Springer.

\bibitem[Ver18]{Vershynin}
Roman Vershynin.
\newblock {\em High-dimensional probability}, volume~47 of {\em Cambridge
  Series in Statistical and Probabilistic Mathematics}.
\newblock Cambridge University Press, Cambridge, 2018.
\newblock An introduction with applications in data science, With a foreword by
  Sara van de Geer.

\bibitem[WZL{\etalchar{+}}19]{Wilson+19}
Royce~J Wilson, Celia~Yuxin Zhang, William Lam, Damien Desfontaines, Daniel
  Simmons-Marengo, and Bryant Gipson.
\newblock Differentially private sql with bounded user contribution.
\newblock {\em arXiv preprint arXiv:1909.01917}, 2019.

\end{thebibliography}
\end{document}